%% file: main_arxiv_v1.tex
\documentclass{article}
\input{setting_arxiv_v1}

\title{Weight distribution bounds to
relate minimum distance, list decoding,
and symmetric channel performance}

\usepackage{authblk}
\author{Donald Kougang-Yombi}
\author{Jan H{\k{a}}z{\l}a}

\affil{AIMS Rwanda\\ \texttt{\{donald.yombi, jan.hazla\}@aims.ac.rw}}

\date{}
\begin{document}

\maketitle
\begin{abstract}
    We study relationships between worst-case and
    random-noise properties of error correcting codes.
    More concretely, we consider connections between minimum distance,
    list decoding radius, and block error probability on
    noisy channels.

    A recent result of Pernice, Sprumont, and Wootters established
    the tight connection between list decoding radius and symmetric
    channel performance for linear codes. We extend this result to general codes. The proof proceeds by directly bounding the weight  distribution rather than by sharp threshold techniques.

    We then turn to the relation between minimum distance
    and symmetric channel performance.
    The results we just described imply that a $q$-ary code of relative distance
    $\delta$ has vanishing error probability on the symmetric
    channel up to the Johnson radius $J_q(\delta)$.
    We improve upon this bound in the case of linear codes,
    for a range $\delta$, for $q\ge 4$.
    In our proof we consider the \emph{erasure} properties
    of codes, and bound their weight distribution through
    inequalities introduced by Samorodnitsky. The latter
    part of the proof gives a more general technique that
    bounds the symmetric channel performance 
    of a linear code with constant relative distance and good
    erasure channel performance.
 \end{abstract}
\section{Introduction}
\subsection{List decoding and symmetric channels}
\label{sec:intro-list-decoding}

An error correcting code with finite input alphabet $\Sigma$ and block length $n$ is a subset $\cC\subseteq\Sigma^n$. When
$|\Sigma|=q$ we say that the code is $q$-ary (and binary for $q=2$).
A code can be thought of as a packing of $|\cC|$ possible messages
that allows to recover the original message even if the
sent codeword $x\in \cC$ is corrupted by noise.
Typically, one desires codes which have a good tradeoff
between the rate $R(\cC)\coloneqq \frac{\log_q |\cC|}{n}$
(which quantifies how many different messages can
be encoded into $\cC$) and
the level of noise they can recover from.

The concept of maximum recoverable level of noise can
be made concrete in several meaningfully different ways.
In the \emph{worst-case model}, we consider the \emph{minimum
distance} $d(\cC)\coloneqq \min_{x,y\in\cC, x\ne y} \Delta(x,y)$,
where $\Delta(x,y)$ denotes the \emph{Hamming distance}, that is
the number of coordinates $1\le i\le n$ such that $x_i\ne y_i$.
It is well known that for a code with minimum distance $d$ 
there exists a decoder that recovers the original codeword
for every pattern of $\lceil d/2\rceil-1$ errors and
every pattern of $d-1$ erasures. For a family
of codes with increasing block lengths $\{\cC_n\}_n$,
we can express how well it corrects errors
in the worst-case model by the notion of
\emph{relative minimum distance} 
$\delta\coloneqq \liminf_{n\to\infty}\delta(\cC_n)$,
where $\delta(\cC_n)\coloneqq d(\cC_n)/n$.
Accordingly, such a family corrects up to around
$\delta/2$ fraction of worst case errors
and $\delta$ fraction of worst case erasures.

In the \emph{probabilistic (Shannon)} error model,
we consider a (memoryless) communication channel
which independently distorts every coordinate $1\le i\le n$ of the original codeword according to a fixed probability 
distribution. Accordingly, a channel $W$ with output
alphabet $\mathcal{Y}$ is characterized
by a family of $q$ probability distributions on $\mathcal{Y}$,
one each for every input symbol $x\in\Sigma$.
The most natural example 
is the \emph{$q$-ary symmetric
channel} $\qSC_{p}$ with crossover probability
$0\le p\le 1$, which satisfies $W(x|x)=1-p$
and $W(x'|x)=\frac{p}{q-1}$ for $x\ne x'$.
Here we will focus on
the range $0\le p\le 1-1/q$, as at $p=1-1/q$
the channel is fully noisy and input symbols are statistically indistinguishable.
Another important example is the \emph{$q$-ary erasure
channel} $\qEC_\lambda$ with erasure probability $0\le\lambda\le 1$. This channel has output alphabet $\mathcal{Y}=\Sigma\cup\{?\}$ and satisfies
$W(x|x)=1-\lambda$ and $W(?|x)=\lambda$ for every $x\in\Sigma$.
In the Shannon model, it is easy to see that one cannot
recover the original message with certainty in any nontrivial scenario. Instead, a code
family $\cC_n$ 
is considered to successfully correct errors on a channel $W$ if
there exists a decoder which recovers the original message
except with probability that goes to $0$ as $n$ grows
(for every codeword, with respect to the channel noise;
or in expectation over a randomly chosen codeword).
In other words, we want the family $\cC_n$ to
have \emph{vanishing block error probability}.

The notion of $\emph{list decoding}$
is often considered a ``bridge'' between
the worst-case and Shannon models. For a more detailed
introduction to list decoding, see~\cite[Chapter 7]{guruswami2012essential}. A code is said to be
$(p, L)$-list decodable if for every $y\in\Sigma^n$
it holds $|\{x\in\cC:\Delta(y,x)\le pn\}|\le L$. In particular,
the decoder that, given $y$, outputs a list of $L$ closest
codewords to $y$ according to the Hamming distance,
will include the original codeword in the list for any
pattern of at most $pn$ errors. One can consider a 
family of codes $\cC_n$ to be ``good'' at list decoding with $p$ fraction of errors if it is $(p,L)$-list decodable for some modest $L=L(n)$, for example $L=O(1)$ or $L=\poly(n)$.

Several questions can be asked 
about general connections between 
the performance of codes for various
noise models. Perhaps the most natural is: 
What is the largest crossover probability $p$
such that every code family $\cC_n$ of relative distance
$\delta$ has vanishing block error probability on the $\qSC_p$?
We discuss this question
in more detail in
\Cref{sec:intro-distance-vs-qsc}.

As for the connection between the
minimum distance and
list decoding, a
classical result known as
the Johnson bound~\cite{Joh62} states that a code
of relative distance $\delta$ 
is $(p,q\delta n^2)$-list decodable for
\begin{align*}
p<J_q(\delta)\coloneqq \left(1-\frac{1}{q}\right)
\left(1-\sqrt{1-\frac{q}{q-1}\delta}\right)\;,
\end{align*}
We will call the quantity $J_q(\delta)$ the \emph{($q$-ary) Johnson radius}.
The Johnson bound is tight in the sense that there exist linear codes
with relative distance $\delta$ and Hamming balls of radius
larger than $J_q(\delta)$ with superpolynomially many codewords
(see \cite[Exercise 7.9]{guruswami2012essential}).

Recently, Pernice, Sprumont, and Wootters~\cite{pernice2025list} studied the
connection between list decoding and
the Shannon model. A \emph{linear code} 
$q$-ary code
for a prime power $q$ is a linear subspace of the
vector space $\mathbb{F}_q^n$.
If a code
is $(p,L)$-list decodable,
then one can attempt unique decoding on $\qSC_{p'}$ for $p'<p$ 
by outputting a random element of the list of
size $L$ which
(with high probability)
contains the original codeword. Accordingly,
this randomized decoder 
has average block error probability
at most $1-1/L+\exp(-\Omega(n))$.
For $q=2$ and a linear code,
the classical sharp threshold
result for block error probability by Tillich and Zémor
\cite{tillich2000discrete} can be used 
to establish vanishing block error probability
on the $\qSC_{p'}$,
if the minimum distance of
the code is sufficiently large as a function of $L$.

\cite{pernice2025list}
generalized this sharp threshold result to linear codes over larger alphabets.
As a result, they established that if a code family
is $(p,L)$-list decodable and satisfies\footnote{The assumption
on the minimum distance is mild,
but necessary, as discussed in~\cite{pernice2025list}.} 
$d(\cC_n)\ge\frac{q^3}{(1-p)^2}\ln^4(nL)$,
then it has vanishing error probability
on $\qSC_{p'}$
for every $p'<p$. In other words,
list decodability up to radius $p$
implies vanishing error probability on
the symmetric channel
up to crossover probability $p$.
The relation between the list decoding radius
and the crossover probability is tight.
This is certified by random codes (and random linear codes),
which are known both to achieve
list decoding capacity~\cite{Eli91, GHK10},
and, by the Shannon's coding theorem,
achieve capacity on the symmetric channel.

The first theorem we present is a generalization of the
result of Pernice, Sprumont, and Wootters
to general codes. As a result, we obtain 
the tight relation between list decoding radius
and crossover probability for \emph{all}
error correcting codes with any fixed alphabet size.

\begin{theorem}
\label{thm:list-vs-qsc-main}
Let $0<p\le 1-1/q$ and 
$\{\cC_n\}_n\subseteq\Sigma^n$ be a family
of $q$-ary codes 
which is $(p,L)$-list decodable
for some $L=L(n)$ satisfying
$d(\cC_n)=\omega(\log(nL))$.
 Then,
$\lim_{n\to\infty} P_B(\cC_n,\qSC_{p'})=0$
for every $p'<p$.
\end{theorem}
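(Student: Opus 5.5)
The plan is to bound the error probability of the maximum likelihood (equivalently, minimum Hamming distance) decoder directly, codeword by codeword, using a union bound over competing codewords. List decodability is used to control the local weight distribution around each codeword. No sharp-threshold machinery is needed.

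Fix a transmitted codeword $x\in\cC_n$ and let $y$ be the output of $\qSC_{p'}$. An error occurs only if some $x'\ne x$ in the code satisfies $\Delta(y,x')\le\Delta(y,x)$. Set $p''=(p+p')/2$. By a Chernoff bound, $\Delta(y,x)\le p''n$ except with probability $e^{-\Omega(n)}$. So it suffices to bound
\begin{align*}
\sum_{x'\ne x}\Pr\left[\Delta(y,x')\le\Delta(y,x)\le p''n\right]\;.
\end{align*}
For a competitor at distance $w=\Delta(x,x')$, the event only depends on the $w$ coordinates where $x$ and $x'$ differ. On each such coordinate, $y$ equals $x$ with probability $1-p'$, equals $x'$ with probability $p'/(q-1)$, and equals neither otherwise. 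The competitor wins only if the number of coordinates with $y_i=x'_i$ is at least the number with $y_i=x_i$.

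Since $p'<1-1/q$, a Chernoff/Bhattacharyya estimate gives this probability as at most $\beta^w$, where
\begin{align*}
\beta=\beta(p',q)=(1-p')\cdot\ldots<1\;.
\end{align*}
The exact constant does not matter; one can take $\beta=2\sqrt{(1-p')p'/(q-1)}+p'(q-2)/(q-1)$, which is strictly less than $1$ when $p'<1-1/q$.

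The weight distribution is controlled as follows. Let $A_w(x)$ be the number of codewords at distance exactly $w$ from $x$. The central claim is a bound of the form $A_w(x)\le L\cdot\mathrm{poly}(n)\cdot\gamma^w$, where $\gamma\beta<1$ in the relevant range of $w$. To prove it, count pairs $(x',z)$ with $\Delta(z,x)\le p''n$ and $\Delta(z,x')\le pn$ by double counting over $z$. Each $z$ lies within distance $pn$ of at most $L$ codewords. Conversely, for each $x'$ at distance $w$, count the $z$ in the ball $B(x,p''n)$ that fall within $pn$ of $x'$; these are the "typical" noisy versions of $x$ that lie closer to $x'$ than $pn$. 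This gives
\begin{align*}
A_w(x)\le L\cdot |B(x,p''n)|\,/\,N(w)\;,
\end{align*}
where $N(w)$ is that intersection count.

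A cleaner way to run this is probabilistic. Sample $z$ from the channel output distribution conditioned on typical behaviour. Then
\begin{align*}
L\ge\mathbb{E}\bigl[\#\{x':\Delta(z,x')\le pn\}\bigr]\ge\sum_w A_w(x)\cdot\Pr[\Delta(z,x')\le pn]\;,
\end{align*}
and the remaining task is to show that $\Pr[\Delta(z,x')\le pn]$ dominates $\Pr[\Delta(y,x')\le\Delta(y,x)]$ up to polynomial and $e^{-\epsilon w}$ factors. Concretely, take $z\sim\qSC_{p''}(x)$. I expect the comparison to work because making $x'$ win against $x$ forces roughly $w(1-1/q)$ worth of "movement toward $x'$". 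The slack $p-p''$ in the radius absorbs this movement when $w$ is small relative to $n$, and the pairwise event with $p'<p''$ is exponentially less likely.

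I expect this comparison inequality to be the main obstacle: showing that for all $w\ge d(\cC_n)$,
\begin{align*}
\Pr_{y\sim\qSC_{p'}(x)}\bigl[\Delta(y,x')\le\Delta(y,x)\bigr]\le e^{-\epsilon w}\Pr_{z\sim\qSC_{p''}(x)}\bigl[\Delta(z,x')\le pn\bigr]\cdot\mathrm{poly}(n)\;.
\end{align*}
My first attempt would be to prove it by exponential tilting on the $w$ differing coordinates and the $n-w$ agreeing coordinates separately, optimizing a two-parameter large-deviation exponent. For large $w$ the right side may be tiny, but then the left side is also at most $\beta^w$.

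Summing over $w\ge d$ then gives
\begin{align*}
P_B\le e^{-\Omega(n)}+\sum_{w\ge d}L\,\mathrm{poly}(n)\,e^{-\epsilon w}\le e^{-\Omega(n)}+L\,\mathrm{poly}(n)\,e^{-\epsilon d}\;,
\end{align*}
which tends to $0$ precisely because $d(\cC_n)=\omega(\log(nL))$. This is where the minimum distance hypothesis enters.
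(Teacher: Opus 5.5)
\textbf{There is a genuine gap: the comparison inequality.} Your reduction to $\sum_{x'\ne x}\Pr[\Delta(y,x')\le\Delta(y,x)\le p''n]$ is sound, and so is the averaged double count $\sum_w A_w(x)\Pr[\Delta(z,x')\le pn]\le L$. The comparison inequality, however, carries all the content, and as you state it (unrestricted left side, all $w\ge d(\cC_n)$) it is false. Take $q=2$, $x'=\bar{x}$, $p=0.2$, $p'=0.1$, $p''=0.15$. The left side is $\Pr[\mathrm{Bin}(n,0.1)\ge n/2]=2^{-(0.74+o(1))n}$. The right side is at most $\Pr[\mathrm{Bin}(n,0.15)\ge 0.8n]=2^{-(1.51+o(1))n}$.

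Your remark that the left side is anyway at most $\beta^w$ does not rescue this. The double count only gives $A_w(x)\le L/\Pr[\Delta(z,x')\le pn]$, so what matters is the ratio, and the ratio is exponentially large. Conceptually, the unrestricted inequality would make the plain union bound $\sum_w A_w(x)\Pr[\Delta(y,x')\le\Delta(y,x)]$ vanish for every list-decodable family. For linear $w$ this sum is $\sum_w A_w(x)Z(\qSC_{p'})^w$ up to subexponential factors. That bound fails above the cutoff rate $1-\log_q\bigl(1+(q-1)Z(\qSC_{p'})\bigr)$. For $p'$ close to $p$ this rate is strictly below $1-H_q(p)$, and random linear codes of rate in between are $(p,O(1))$-list decodable with linear distance. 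So the cut $\Delta(y,x)\le p''n$, which kills every $w>2p''n$, has to stay inside the comparison.

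\textbf{Even with the cut, the inequality is the theorem itself, and the tilting heuristic does not prove it.} Write $w=\gamma n$ and split by the shell $\Delta(y,x)=\sigma n$. Up to $\mathrm{poly}(n)$, the left side is at most $\max_{\gamma/2\le\sigma\le p''}|S(x,\sigma n)\cap B(x',\sigma n)|/|S(x,\sigma n)|\le \max_{\sigma}q^{-n(F^{(q)}(\gamma,\sigma)-o(1))}$.

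A same-shell comparison with $z$ fails for small $\gamma$. There the dominant shell is near $p'n$, where the likelihood ratio of $\qSC_{p'}$ to $\qSC_{p''}$ is $q^{cn}$ with $c>0$ independent of $\gamma$. Enlarging the radius around $x'$ from $\sigma n$ to $pn$ gains only a factor $q^{-O(\gamma n)}$.

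What you actually need is that $\sigma\mapsto F^{(q)}(\gamma,\sigma)$ is strictly decreasing. Then every shell is dominated by $\sigma=p''$, the typical shell of $z$, and the strict gain from radius $p''n$ to $pn$ supplies the factor $e^{-\epsilon w}$. That monotonicity is \Cref{lem:fq-decreasing}, the main technical input of the paper; for $q\ge 3$ it needs the almost-closed form \Cref{lem:almost-closed-form}. You also need matching upper and lower estimates on ball and sphere intersections (\Cref{lem:qary_balls_intersection,lem:nu_t lower bound_general}). Your proposal has no substitute for any of these.

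Once you have them, the channel-sampled $z$ is unnecessary. Taking $z$ uniform on $S(x,\lfloor pn\rfloor)$, as in \Cref{thm:list-double-counting}, gives $A_w(x)\le q^{n(F^{(q)}(\gamma,p)+o(1))}$. Your restricted union bound then has exponent at most $F^{(q)}(\gamma,p)-F^{(q)}(\gamma,\sigma)<0$ on every shell $\sigma\le p''$. Small $w$ is handled by $A_w(x)\le L$ and the Bhattacharyya bound. This is essentially the paper's proof, which uses Poltyrev's bound on a window around $p'n$ instead of your one-sided cut at $p''n$.
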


Above, $P_B(\cC,W)$ denotes the MAP block error
probability for code $\cC$ on channel $W$,
worst case over the codewords. We give the precise
definition in \Cref{sec:preliminaries}.

The proof of \Cref{thm:list-vs-qsc-main}
does not use a sharp threshold technique.
Instead, we use a simple double counting 
argument to show that $(p,L)$-list decoding implies
a bound on the weight distribution of a code for weights
$w\ge pn$. Given a code $\cC$, its weight distribution
for $0\le w\le n$ and $x\in\Sigma^n$ is given as
\begin{align*}
A_w(x)\coloneqq
\left|\left\{
y\in\cC:\Delta(x,y)=w
\right\}\right|\;.
\end{align*}

Let $S(x,t)$ denote the Hamming sphere of radius $t$ centered at $x\in\Sigma^n$. Then,
let $\nu_q(n,t_1,t_2,w)\coloneqq|S(0^n,t_1)\cap S(1^w0^{n-w},t_2)|$, 
in other words $\nu_q(n,t_1,t_2,w)$ denotes the
size of the intersection of two $q$-ary Hamming spheres
with radii $t_1,t_2$, such that their centers
are at the Hamming distance $w$.

\begin{theorem}
\label{thm:list-double-counting}
Let $\cC\subseteq\Sigma^n$ be a $q$-ary code which is $(p,L)$-list decodable and let $0\le w,t_1\le n$, $0\le t_2\le pn$, and $x\in\Sigma^n$. Then,
\begin{align}
    A_w(x)\cdot\nu_q(n,t_1,t_2,w)&\le 
    \binom{n}{t_1}(q-1)^{t_1}L\;.
    \label{eq:double-counting-bound}
\end{align}
\end{theorem}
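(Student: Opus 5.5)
We plan to prove \eqref{eq:double-counting-bound} by double counting the pairs
\begin{align*}
\mathcal{P}\coloneqq\left\{(y,z)\in\cC\times\Sigma^n:\ \Delta(x,y)=w,\ \Delta(x,z)=t_1,\ \Delta(y,z)=t_2\right\}\;.
\end{align*}
We will obtain a lower bound on $|\mathcal{P}|$ by grouping the pairs according to the codeword $y$, and an upper bound by grouping them according to the word $z$. Comparing the two bounds will give the claimed inequality.

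For the lower bound, fix a codeword $y$ with $\Delta(x,y)=w$; there are exactly $A_w(x)$ such codewords. The admissible $z$ for this $y$ are exactly the points of $S(x,t_1)\cap S(y,t_2)$. Hamming distance is invariant under coordinate permutations and under applying a separate bijection of $\Sigma$ in each coordinate. We can therefore choose such an isometry sending $x\mapsto 0^n$ and $y\mapsto 1^w0^{n-w}$: first permute coordinates so that the positions where $x$ and $y$ differ come first, then in each coordinate apply a bijection of $\Sigma$ sending $x_i\mapsto 0$ and, when $y_i\ne x_i$, also $y_i\mapsto 1$. This isometry maps the intersection bijectively onto $S(0^n,t_1)\cap S(1^w0^{n-w},t_2)$, so it has size exactly $\nu_q(n,t_1,t_2,w)$. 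Hence $|\mathcal{P}|=A_w(x)\,\nu_q(n,t_1,t_2,w)$.

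For the upper bound, fix any $z\in S(x,t_1)$. There are $|S(x,t_1)|=\binom{n}{t_1}(q-1)^{t_1}$ choices of $z$. Every $y$ paired with $z$ is a codeword with $\Delta(y,z)=t_2\le pn$. By $(p,L)$-list decodability applied to the center $z$, at most $L$ codewords lie within distance $pn$ of $z$, so at most $L$ codewords $y$ pair with each $z$. Hence $|\mathcal{P}|\le\binom{n}{t_1}(q-1)^{t_1}L$. Combining this with the lower bound proves \eqref{eq:double-counting-bound}.

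There is no real obstacle in this argument. The only point needing care is the isometry step, which justifies that the intersection size depends on $x$ and $y$ only through $w$, so it is well defined and equals $\nu_q$. This is where we use the ability to relabel symbols independently in each coordinate, which is valid for general, not necessarily linear, codes.
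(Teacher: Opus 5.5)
Your proposal is correct and is essentially the paper's proof: the same double counting of (word, codeword) pairs, with the word on the sphere $S(x,t_1)$ contributing at most $L$ codewords by list decodability, and each codeword at distance $w$ from $x$ contributing $\nu_q(n,t_1,t_2,w)$ words. The only differences are cosmetic. You restrict the codeword to distance exactly $w$ from $x$, so you get an equality where the paper has an inequality. You also spell out the isometry argument that the paper merely asserts in the preliminaries.
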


\begin{proof}
Consider the set
\begin{align*}
S&\coloneqq\left\{
(y,z): z\in\cC, \Delta(y,x)=t_1, \Delta(y,z)=t_2
\right\}\;.
\end{align*}
On the one hand, for every $y\in\Sigma^n$ such
that $\Delta(y,x)=t_1$, by the list decoding property 
there exist at most $L$ codewords $z$ such that 
$\Delta(y,z)=t_2$.
Hence,
\begin{align}
\label{eq:double_count-upper}
|S|\le \binom{n}{t_1}(q-1)^{t_1} L\;.
\end{align}
On the other hand, for every $z\in\cC$ such that
$\Delta(x,z)=w$, we have $(y,z)\in S$ if and only if $\Delta(y,x)=t_1$ and $\Delta(y,z)=t_2$, in other words if
$y\in S(x,t_1)\cap S(z,t_2)$. Therefore,
\begin{align}
\label{eq:double_count-lower}
|S|\ge A_{w}(x)\nu_q(n,t_1,t_2,w)\;.
\end{align}
The result follows by combining \Cref{eq:double_count-lower,eq:double_count-upper}.
\end{proof}

If we invoke \Cref{thm:list-double-counting}
for $t\coloneqq t_1=t_2=\lfloor pn\rfloor$, the expression
$\nu_q(n,t,w)\coloneqq
\nu_q(n,t,t,w)$ becomes the size of
the intersection of two Hamming spheres of the same
radius $t$ and centers at distance $w$.
A well known upper bound on the symmetric channel
block error probability by Poltyrev~\cite{poltyrev2002bounds, pathegama2023smoothing} can be written using
sizes of intersections of two Hamming \emph{balls} of the
same radius. As these sizes have equal exponents up to lower order terms,
\Cref{thm:list-double-counting} can be combined
with the Poltyrev bound in a rather clean manner
in order to establish
\Cref{thm:list-vs-qsc-main}.
In particular, codes that achieve list decoding
capacity also achieve capacity on the symmetric channel
purely due to their weight distribution properties.\footnote{
As a contrasting example, in the case of Reed--Muller codes,
despite extensive research into their weight distribution
(see \cite{ASSY23}), the several known 
proofs
that they achieve 
capacity~\cite{kudekar2016reed,reeves2023reed,abbe2023proof,abbe2023reed,PSZ25}
do not rely 
directly on the weight distribution.
}

We note that even though the choice $t_1=t_2$ 
is convenient for our proof
(and sufficient for the tight dependence in
\Cref{thm:list-vs-qsc-main}), it does not in general
achieve the tightest possible weight distribution 
bound in~\eqref{eq:double-counting-bound}.

\subsection{Symmetric channel decoding
from minimum distance}
\label{sec:intro-distance-vs-qsc}

As mentioned, perhaps
the most natural theoretical question 
relating the worst-case and random-noise properties of error correcting
codes is: For a given relative distance $\delta$, 
what is the largest crossover probability $p$
such that every 
family of $q$-ary codes $\{\cC_n\}_n$ with relative distance
$\delta$ has vanishing block error probability on the symmetric channel $\qSC_p$?
Accordingly, given a $q$-ary alphabet $\Sigma$ and relative minimum distance 
$0\le\delta\le 1-1/q$, let
\begin{align*}
    \psym(q,\delta)&\coloneqq\sup\left\{p\ge 0: \forall\ \{\cC_n\}_n,
    \cC_n\subseteq\Sigma^n,
    \text{ if }\delta(\cC_n)\ge \delta\text{, then } P_B\left(\cC_n,\qSC_p\right)\to 0\right\}\;.
\end{align*}
For $q$ a prime power, the same question can be studied restricted to linear codes, so let us define
\begin{align*}
\pLsym(q,\delta)&\coloneqq\sup\left\{p\ge 0: \forall\ \text{linear}\ \{\cC_n\}_n, \cC_n\subseteq\mathbb{F}_q^n,
    \text{ if } \delta(\cC_n)\ge \delta \text{, then } P_B\left(\cC_n,\qSC_p\right)\to 0\right\}\;.
\end{align*}

As we are not aware of an extensive discussion of this problem in the existing literature, we proceed to a short survey of what we know about it. Recall the $q$-ary entropy function
$H_q(x):=x\log_q(q-1)-x\log_q(x)-(1-x)\log_q(1-x)$.
A random code (as well as a random linear code) 
of rate $1-H_q(\delta)-\eps$ satisfies $\delta(\cC_n)\ge \delta$.
At the same time, by the Shannon's channel coding theorem, it cannot
have vanishing error probability on $\qSC_{p}$ if
$H_q(p)>H_q(\delta)+\eps$. Taking $\eps\to 0$, we have an upper bound
\begin{align}
    \label{eq:psym-upper-bound}
    \psym(q,\delta)\le\pLsym(q, \delta)\le \delta\;.
\end{align}

Since for larger values of $q$ we know
code families that beat the Gilbert--Varshamov bound on the minimum distance,
the upper bound $\pLsym(q,\delta)\le\delta$ is not always
tight. Consider the construction of
algebraic geometric codes due to Tsfasman, Vl{\u{a}}duţ, and Zink  \cite{tsfasman1982modular,ihara1981some,couvreur2021algebraic}. When $q$ is a square, there exist families of such linear $q$-ary codes with rate and relative distance satisfying $R\ge 1-\delta-\frac{1}{\sqrt{q}-1}$. Combining this with the Shannon's channel coding theorem, it follows that such codes cannot achieve vanishing error probability on the $\qSC_p$ whenever $H_q(p)>\delta+\frac{1}{\sqrt{q}-1}$, therefore, for $\delta\le 1-\frac{1}{\sqrt{q}-1}$,
\begin{align*}
    \pLsym(q, \delta)\le H_q^{-1}\left(\delta+\frac{1}{\sqrt{q}-1}\right)\;.
\end{align*}
This improves upon \Cref{eq:psym-upper-bound} in certain regimes, in particular in some range of $\delta$ for relatively large $q$  ($q\ge 49$).

Let us turn to a more detailed discusion of lower bounds for $\psym(q,\delta)$.
Since a code with relative distance $\delta$ can recover from
$\delta n/2-1$ errors in the worst-case model, it is easy to establish
a trivial bound
\begin{align*}
    \psym(q,\delta)\ge \delta/2\;.
\end{align*}
This bound can be improved by combining
the Johnson bound with the result
from \cite{pernice2025list}
for linear codes and
now with \Cref{thm:list-vs-qsc-main}
for general codes, giving
\begin{align*}
\psym(q,\delta)\ge J_q(\delta)\;,
\end{align*}
which is a strict improvement
as it is elementary to check
that $\delta/2< J_q(\delta)$
for every $q\ge 2$ and $0<\delta< 1-1/q$.

For large alphabet sizes, there is also a remarkable result
by Rudra and Uurtamo~\cite{rudra2010two}.
They show that for fixed $\delta>0$ and $\epsilon>0$, 
a code family with relative distance $\delta$ and alphabet
size $q\ge 2^{\Omega(\frac{1}{\varepsilon})}$
satisfies $\lim_{n\to\infty}P_B(\cC_n,\qSC_{\delta-\eps})=0$.
Together with the upper bound~\eqref{eq:psym-upper-bound},
that implies
\begin{align*}
\lim_{q\to\infty}\psym(q,\delta)=\lim_{q\to\infty}\pLsym(q,\delta)=\delta\;.
\end{align*}
However, this result does not seem to give nontrivial bounds for
smaller values of $q$. In particular, 
an effective version of the bound that we extracted from their proof
(see \Cref{app:effective-ru})
does not seem to improve upon the Johnson
bound for any value of $\delta$ for $q\le 19$.\footnote{
Furthermore, it seems that for each fixed $q$, their bound
is below the Johnson bound for sufficiently large $\delta$.
This is exactly the region where our \Cref{thm:unconditional_bound} shows improvement
over the Johnson bound for $q\ge 4$.
}

It seems natural to ask if the barrier
of the Johnson radius can be overcome
for small values of $q$.
As our second contribution, we demonstrate that
$\pLsym(q,\delta)>J_q(\delta)$ for $q\ge 4$
and for large enough $\delta$.

Let us proceed to a short sketch of our proof.
Since both the Johnson bound and \Cref{thm:list-vs-qsc-main}
are tight, an improvement on the Johnson radius
cannot be achieved by exploiting solely the list
decoding properties of a code with a given relative distance. 

We proceed in a seemingly roundabout way,
by considering erasures instead of errors.
Consider the analogous notion of list decoding from erasures:
A code $\cC\subseteq\Sigma^n$ is 
\emph{$(\lambda,L)$-erasure list decodable} if,
for every $S\subseteq[n]$ with $|S|> (1-\lambda)n$
and every $y\in\Sigma^S$ it holds
$\left|\left\{c\in \cC:c_S=y\right\}\right|\le L$.
There is a version of the Johnson bound for erasures
which says that a $q$-ary code family with relative distance $\delta'>\delta$
is 
$\left(\frac{q}{q-1}\delta,O(1)\right)$-erasure list decodable.
At the same time, the result in~\cite{pernice2025list}
also has its erasure version: A code family with linear minimum
distance which
is $(\lambda,\poly(n))$-erasure list decodable has vanishing error probability
for $\qEC_{\lambda'}$ with $\lambda'<\lambda$.
Altogether, a code family of relative distance $\delta$
has vanishing error probability for $\qEC_\lambda$
for $\lambda<\frac{q}{q-1}\delta$, which improves upon the trivial
range of $\lambda<\delta$.

Inequalities by Samorodnitsky~\cite{samorodnitsky2019upper,samorodnitsky2020improved},
which were recently generalized to larger alphabet sizes~\cite{abawonse2025generalized},
give weight distribution bounds for linear codes
that have small error probability on the
erasure channel. Those bounds (and the minimum distance assumption)
can then be plugged into the Poltyrev bound 
for the symmetric channel error probability.
It turns out that this improves on the Johnson
bound for larger values of $\delta$ for $q\ge 4$.

Before stating this result, we will discuss how we combine Samorodnitsky's inequalities
with the minimum distance assumption. This is
a more general technique that can be applied for
any linear code with linear minimum distance and
good erasure channel performance.

\subsection{Symmetric channel decoding from
minimum distance and erasure channel}
\label{sec:intro-sam}

Consider a binary linear code
$\mathcal{C}\subseteq\mathbb{F}_2^n$ with
relative distance $\delta(\mathcal{C})\ge 0.1$.
Applying the Johnson bound as discussed in \Cref{sec:intro-list-decoding},
such a code has small error probability on
$\BSC_{0.052}$, which is slightly better than
the trivial bound of $\BSC_{0.05}$.
On the other hand, consider another linear code that has
small block error probability on the binary erasure
channel
$\BEC_{0.533}$. By the application of
Samorodnitsky inequalities from~\cite{hkazla2021codes}, such a code
will also have small error probability
on the $\BSC_{0.052}$.
But what if a linear code satisfies both properties: the minimum distance bound
$\delta(\mathcal{C})\ge 0.1$, and
vanishing block error probability on $\BEC_{0.533}$?
We will show that then one can obtain a stronger
conclusion: $\mathcal{C}$ has small block error
probability on $\BSC_{0.077}$.

We study this problem systematically, obtaining bounds
for MAP block error probability for linear codes
and various channels,
in terms of the minimum distance and the performance on the
erasure channel. We do this for the symmetric channel,
both in the binary case and for larger finite fields,
and for the binary Gaussian channel $\BAWGN_{\sigma^2}$.
In general, we demonstrate that assuming
both conditions
(minimum distance and erasure channel performance) results in
better bounds than when only one of the conditions is present.

Our results apply the bound on the weight
distribution in terms of the erasure channel performance given in~\cite{hkazla2021codes} for $q=2$ and~\cite{abawonse2025generalized} for larger alphabets.
Both of these works then apply the Bhattacharayya (union) bound
to bound the error probability for various communication channels. Let us state the asymptotic version of those
results. 
We use $P_b(\cC,W)$ and $P_B(\cC,W)$ 
to denote, respectively the (MAP) bit and block error
probabilities of a code $\cC$ on channel $W$
and $Z(W)$ to denote the Bhattacharayya coefficient of $W$, see~\Cref{sec:preliminaries} for the definitions.
\begin{theorem}[\cite{hkazla2021codes,abawonse2025generalized}]
\label{thm:original}
Let $0<\lambda< 1$, and $W:\mathbb{F}_q\to\mathcal{Y}$ be
a single-letter channel with finite field input alphabet satisfying
$Z(W)<\frac{q^{\lambda}-1}{q-1}$. Let
$\{\cC_n\}_n$ be a family of linear codes
$\cC_n\subseteq\mathbb{F}_q^n$ satisfying
$d(\cC_n)\ge\omega(\log n)$ and 
$\lim_{n\to\infty} P_B(\cC_n,\qEC_{\lambda})=0$.
Then, 
$\lim_{n\to\infty}P_B(\cC_n,W)=0$.
\end{theorem}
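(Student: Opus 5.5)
The plan follows the two-step structure of \cite{hkazla2021codes,abawonse2025generalized}. First, the erasure-channel assumption is converted into a bound on the weight distribution of $\cC_n$ via Samorodnitsky-type inequalities. Second, that bound is fed into the Bhattacharyya union bound for the channel $W$. Since $\cC_n$ is linear and $W$ has field input, the MAP block error probability is independent of the transmitted codeword. Throughout we may therefore take the all-zero codeword and write $A_w\coloneqq A_w(0^n)$.

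\textbf{Step 1 (weight distribution from erasures).} From $P_B(\cC_n,\qEC_\lambda)\to 0$ we first extract an entropy/rank statement. For a linear code, the block error on the erasure channel with erased set $S$ is the event that some nonzero codeword vanishes on $[n]\setminus S$, which happens iff $\dim \cC_n|_{[n]\setminus S}<\dim\cC_n$. Hence vanishing block error means that for a $\lambda$-random erasure set the restriction is injective with probability $1-o(1)$. The Samorodnitsky inequalities (binary: \cite{hkazla2021codes}; $q$-ary: \cite{abawonse2025generalized}) turn this into a bound on the weight distribution of the dual code, and then, through the MacWilliams identities and hypercontractive estimates, into a bound on the weight distribution of $\cC_n$ itself, of the form
\begin{align*}
A_w\le \left(\frac{q-1}{q^{\lambda}-1}\right)^{w}\cdot 2^{o(n)}\quad\text{for all } w\;,
\end{align*}
or more precisely $\sum_{w} A_w\,\theta^{w}\le 2^{o(n)}$ for every fixed $\theta<\frac{q^\lambda-1}{q-1}$. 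I would take this inequality as the black box provided by the cited works. The subexponential slack comes from the $o(1)$ error probability and from the Samorodnitsky inequality applied at a slightly smaller erasure rate $\lambda'<\lambda$. The monotonicity of erasure performance in $\lambda$ lets us absorb this by perturbing $\lambda$.

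\textbf{Step 2 (union bound).} The Bhattacharyya bound gives
\begin{align*}
P_B(\cC_n,W)\le\sum_{w\ge d(\cC_n)}A_w\,Z(W)^w\;.
\end{align*}
Choose $\theta$ with $Z(W)<\theta<\frac{q^\lambda-1}{q-1}$, which is possible because the inequality in the hypothesis is strict. Then each term satisfies
\begin{align*}
A_wZ(W)^w\le A_w\theta^w\,(Z(W)/\theta)^w\;.
\end{align*}
Since $Z(W)/\theta<1$ and $w\ge d(\cC_n)=\omega(\log n)$, we get $(Z(W)/\theta)^w\le n^{-\omega(1)}$. Summing at most $n$ terms then makes the total vanish, provided the weight bound from Step 1 has slack only polynomial in $n$ rather than $2^{o(n)}$.

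\textbf{Main obstacle.} The hard part is exactly this quantitative form of Step 1. A $2^{o(n)}$ slack is not killed by $d=\omega(\log n)$, so I need the Samorodnitsky-based bound to hold with only $\mathrm{poly}(n)$ loss. Alternatively, the loss could be exponentially small in $w$ with a rate matching $\frac{q^\lambda-1}{q-1}$ up to an arbitrarily small multiplicative gap, for example $A_w\le n^{O(1)}\bigl(\frac{q-1}{q^{\lambda'}-1}\bigr)^w$ for every $\lambda'<\lambda$. I would obtain this by running the inequality termwise per weight, as the cited works do. Vanishing erasure error forces the relevant entropy of $\cC_n$ restricted to random $\lambda'$-subsets to be essentially full, up to an additive $o(1)$ and not $o(n)$. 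The resulting weight bound then loses only a constant factor, and choosing $\lambda'$ between $\log_q(1+(q-1)Z(W))$ and $\lambda$ finishes the proof.
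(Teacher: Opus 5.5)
Your skeleton is the right one: the weight bound $\sum_w A_w\tau^w\le 2^{H(X|Y)}$ with $\tau=\frac{q^{\lambda}-1}{q-1}$ (\Cref{prop:weight_distribution_bound}), followed by the Bhattacharyya union bound at an intermediate $\lambda'<\lambda$ chosen using $Z(W)<\tau$. Together these give exactly \Cref{cor:block_error_bound_W}. You also see correctly that everything hinges on bounding $H(X|Y')$, the conditional entropy on $\qEC_{\lambda'}$, at scale $O(\log n)$ or better rather than $o(n)$. But you do not supply the idea that achieves this, and the mechanisms you propose cannot. The weight bound is a black box whose only input is $H(X|Y)$, so ``running the inequality termwise per weight'' does not help; the difficulty lives entirely on the erasure channel. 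Monotonicity in $\lambda$ only gives $P_B(\cC_n,\qEC_{\lambda'})\le P_B(\cC_n,\qEC_\lambda)=o(1)$. Let $K_\cE$ be the space of codewords supported inside the erased set $\cE$. Then $H(X|Y')=\log_2 q\cdot\E[\dim K_\cE]\le n\log_2 q\cdot\Pr[K_\cE\neq\{0\}]$ and $\Pr[K_\cE\neq\{0\}]\le\frac{q}{q-1}P_B$. So this route yields only $H(X|Y')=o(n)$, which is exactly the slack you note that $d=\omega(\log n)$ cannot absorb. Your sentence that vanishing erasure error ``forces the entropy to be full up to an additive $o(1)$'' is therefore the whole difficulty, asserted rather than proved.

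The missing ingredient is a sharp threshold for erasure decoding of linear codes (Tillich--Z\'emor, via Margulis--Russo). The failure event $\{K_\cE\neq\{0\}\}$ is monotone in $\cE$. A coordinate $i$ is pivotal, meaning $K_{\cE\setminus\{i\}}=\{0\}$, only if $\dim K_\cE=1$. In that case the pivotal coordinates are exactly the support of the spanning codeword, so every point of the event has either $0$ or at least $d$ pivotal coordinates. For large $n$ the failure probability at $\lambda$ is below $1/2$. The threshold theorem then gives failure probability $\exp(-\Omega_{\lambda,\lambda'}(d))$ at $\lambda'$, hence $H(X|Y')\le n\log_2 q\cdot e^{-\Omega(d)}=o(1)$ because $d=\omega(\log n)$. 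This is where the hypothesis $d=\omega(\log n)$ is really used. Afterwards, \Cref{cor:block_error_bound_W} with $c'=Z(W)\frac{q-1}{q^{\lambda'}-1}<1$ gives $P_B(\cC_n,W)\le\frac{c'^d}{1-c'}2^{o(1)}\to 0$, which needs only $d\to\infty$. Contrast \Cref{Thm:1}: there, linear distance and the Poltyrev bound absorb a $2^{o(n)}$ factor, so $H(X|Y)=o(n)$ suffices. A minor point: the cited weight bound applies to the code itself, so no detour through the dual weight distribution and MacWilliams identities is involved.
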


We focus on the symmetric channel $\qSC_p$,
and, for $q=2$,
we also consider the binary Gaussian channel $\BAWGN_{\sigma^2}$ for $\sigma\ge 0$, where conditioned on an input $x\in\mathbb{F}_2$, the output distribution
is $\mathcal{N}((-1)^x,\sigma^2)$. For those channels
the Bhattacharyya coefficients are given by
$Z(\qSC_p)=\frac{q-2}{q-1}p+2\sqrt{\frac{p(1-p)}{q-1}}$
and $Z(\BAWGN_{\sigma^2})\allowbreak =\exp(-1/(2\sigma^2))$.
The condition $Z(W)<\frac{q^{\lambda}-1}{q-1}$ 
in \Cref{thm:original} holds for a corresponding
range of parameters, which in the binary case
reduces to
$|p-1/2|> \sqrt{2^{\lambda-1}(1-2^{\lambda-1})}$
and $\sigma^2<
-\frac{1}{2\ln(2^\lambda-1)}$.

We prove a generalization of
\Cref{thm:original} for symmetric channels.
We need to make two definitions, one for
the binary case and one for $q\ge 3$.

\begin{definition}
\label{def:delta-star-binary}
Here and later, 
let $h(x)\coloneqq -x\log_2 x-(1-x)\log_2(1-x)$ 
be the binary entropy function.
For $0\le\gamma\le 2p\le 1$, let
\begin{align}
M(\gamma,p)
&\coloneqq
\gamma+(1-\gamma)h\left(\frac{p-\frac{\gamma}{2}}{1-\gamma}\right)\;,
\label{eq:11}\\
F(\gamma, p)
& \coloneqq 
h(p)-M(\gamma,p)\;,
\nonumber
\end{align}
where $M$ is continuously
extended to
$M(1,1/2)\coloneqq 1$.
Then, for $0<\lambda\le 1$ and
$0\le \delta\le 1$, let
\begin{align*}
 p_{*}(\lambda,\delta)\coloneqq
 \inf\left\{\delta/2\le p\le 1/2: F(\delta,p)\le -\delta\log_2(2^{\lambda}-1)\right\}\;.
\end{align*}
\end{definition}

\begin{definition}
\label{def:q_ary_functions}
Let $q\ge 3$.
Recall the $q$-ary entropy function
$H_q(x)=-(1-x)\log_q(1-x)-x\log_q\frac{x}{q-1}$.
Furthermore,
let\footnote{
In other words, $\Htilde_q(x)$ is the (base-$q$)
entropy of the distribution on $q$ elements with
probabilities $(1-x)/2,(1-x)/2,x/(q-2),\ldots,
x/(q-2)$.
} $\Htilde_q(x)\coloneqq
-(1-x)\log_q\frac{1-x}{2}-x\log_q\frac{x}{q-2}$.
For $0\le\gamma\le \min(2p,1)$ and
$0\le p\le 1-1/q$,
let
\begin{align*}
        D(\gamma,p)&\coloneqq\Big\{(\zeta,\beta)\in[0,1]^2: \gamma\frac{\zeta}{2}+(1-\gamma)\beta\le p-\frac{\gamma}{2}\Big \}\;,\\
M^{(q)}(\gamma,p)&\coloneqq \underset{\substack{(\zeta,\beta)\in
D(\gamma,p)}}{\sup}\gamma\tilde{H}_q(\zeta)+(1-\gamma)H_q(\beta)\;,\\
F^{(q)}(\gamma, p)
&\coloneqq 
H_q(p)-M^{(q)}(\gamma,p)\;.
\end{align*}
Then, for $0<\lambda\le 1$ and
$0\le \delta\le 1$, let
\begin{align*}
 p^{(q)}_{*}(\lambda,\delta)\coloneqq
 \inf\left\{\delta/2\le p\le 1-1/q: F^{(q)}(\delta,p)\le 
 \delta\log_q\left(\frac{q-1}{q^\lambda-1}\right)\right\}\;.
\end{align*}
\end{definition}

Since $F(\delta,1/2)=0$, the value of
$p_{*}(\lambda,\delta)$ is well defined.
Similarly, for $q\ge 3$ we check that
$(1-2/q,1-1/q)\in D(\delta,1-1/q)$, therefore
$M^{(q)}(\delta,1-1/q)=1$,
hence $F^{(q)}(\delta,1-1/q)=0$, and
the value of $p_{*}^{(q)}(\lambda,\delta)$ is well defined.

We are ready to state our result
that generalizes \Cref{thm:original}.

\begin{theorem}
\label{Thm:1}
   Let $0<\delta,\lambda<1$, and let $\{\mathcal{C}_n\}_{n\ge1}$ be a family of $q$-ary linear codes with $\delta(\cC_n)\ge\delta$, such that $\lim_{n\to\infty}P_b(\cC_n,\qEC_{\lambda})=0$.
   Then,  
   \begin{align*}
       \underset{n\to\infty}{\lim}P_B(\cC_n,\qSC_p)=0,\ \text{for every}\ p<p_{*}^{(q)}(\lambda,\delta)\;,
   \end{align*}
   where in the binary case we write
   $p_*^{(2)}\coloneqq p_*$.
\end{theorem}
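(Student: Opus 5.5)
Since $\cC_n$ is linear, $P_B(\cC_n,\qSC_p)$ does not depend on the transmitted codeword, so we work with $x=0$ and the weight distribution $A_w\coloneqq A_w(0)$. We bound the MAP block error probability by the Poltyrev bound mentioned in \Cref{sec:intro-list-decoding}:
\begin{align*}
P_B(\cC_n,\qSC_p)\le \Pr[\mathrm{wt}(e)>tn]+\sum_{w}A_w\,\Pr\left[\mathrm{wt}(e)\le tn,\ \Delta(e,c_w)\le \mathrm{wt}(e)\right]\;,
\end{align*}
where $e$ is the noise vector, $c_w$ is any codeword of weight $w$, and $t$ is slightly larger than $p$. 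The first term is $\exp(-\Omega(n))$. In the sum, the probability for a fixed $w=\gamma n$ is governed, up to $\exp(o(n))$ factors, by the number of noise patterns of weight about $pn$ that are at least as close to $c_w$ as to $0$.

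\textbf{Counting these patterns.} On the support of $c_w$ (of size $\gamma n$), let a fraction $\zeta$ of positions be nonzero noise symbols that differ from both $0$ and $c_i$. The remaining noise on the support then equals $0$ or $c_i$, and the condition ``closer to $c_w$'' forces at least half of these positions to equal $c_i$. Off the support, let a fraction $\beta$ of positions be nonzero. The total noise weight is then at least $\gamma\zeta+\gamma(1-\zeta)/2+(1-\gamma)\beta$, which must be at most $pn/n=p$. This is exactly the constraint $(\zeta,\beta)\in D(\gamma,p)$.

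Counting configurations gives $\gamma\Htilde_q(\zeta)+(1-\gamma)H_q(\beta)$ in the exponent. The optimum puts the split between $0$ and $c_i$ at exactly half, which gives the $(1-x)/2$ terms inside $\Htilde_q$. Each configuration has probability $q^{-n(H_q(p)+o(1))}$ up to a constant: the probability depends only on the weight, and the maximizing weight is at most $pn$ with $p\le 1-1/q$. So the term for weight $w$ is at most
\begin{align*}
A_w\, q^{-n(F^{(q)}(\gamma,p)+o(1))}\;.
\end{align*}
In the binary case $\zeta$ is forced to be $0$ and this reduces to $M(\gamma,p)$. Since there are only $n+1$ values of $w$, it suffices to show that $\log_q A_{\gamma n}-nF^{(q)}(\gamma,p)\le -\Omega(n)$ uniformly over $\gamma\in[\delta,1]$; weights below $\delta n$ do not occur.

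\textbf{Bounding the weight distribution.} Here we use the Samorodnitsky-type inequality from~\cite{hkazla2021codes,abawonse2025generalized}: if $P_b(\cC_n,\qEC_\lambda)\to 0$, then for $w=\gamma n$,
\begin{align*}
A_w\le q^{\,n\gamma\log_q\frac{q-1}{q^\lambda-1}+o(n)}\;.
\end{align*}
This is the per-weight bound that, combined with the Bhattacharyya bound, yields \Cref{thm:original}. We will need to check that the bit-error hypothesis suffices, as it does in those papers. The goal then becomes
\begin{align*}
F^{(q)}(\gamma,p)>\gamma\log_q\frac{q-1}{q^\lambda-1}\quad\text{for all }\gamma\ge\delta\;.
\end{align*}
At $\gamma=\delta$ this is exactly the condition $p<p_*^{(q)}(\lambda,\delta)$, after using continuity to gain a margin, provided we know that $F^{(q)}(\delta,p')>\delta\log_q\frac{q-1}{q^\lambda-1}$ for all $p'<p_*^{(q)}$. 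For $p<\delta/2$ the minimum distance alone suffices.

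\textbf{Main obstacle: monotonicity in $\gamma$.} The hardest step is extending the inequality from $\gamma=\delta$ to all $\gamma\ge\delta$. That is, we need
\begin{align*}
G(\gamma)\coloneqq F^{(q)}(\gamma,p)-\gamma\log_q\frac{q-1}{q^\lambda-1}
\end{align*}
to stay positive for $\gamma\ge\delta$. My plan is to show that $G(\gamma)/\gamma$ is nondecreasing in $\gamma$, which is equivalent to $F^{(q)}(\gamma,p)/\gamma$ being nondecreasing. I would argue this through concavity: $M^{(q)}(\gamma,p)$ is a supremum of a function jointly concave in $(\gamma,\zeta\gamma,\beta(1-\gamma))$ under linear constraints, hence concave in $\gamma$. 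Also $M^{(q)}(0,p)=H_q(p)$, so $F^{(q)}(0,p)=0$. A convex function vanishing at $0$ has nondecreasing slope $F/\gamma$.

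A second subtlety is that the constraint $\gamma\le 2p$ may fail for large $\gamma$. In that case the set $D(\gamma,p)$ is empty, so no noise of weight at most $pn$ is closer to $c_w$, and the corresponding term vanishes.

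Finally, we track the $o(n)$ terms uniformly over $w$ and let $t\downarrow p$, which gives $P_B\to 0$.
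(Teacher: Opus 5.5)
Your overall architecture matches the paper's proof. You use a Poltyrev-type bound whose exponent is $M^{(q)}$, the Samorodnitsky bound $A_w\le\bigl(\frac{q^\lambda-1}{q-1}\bigr)^{-w}2^{H(X|Y)}$, and convexity of $\gamma\mapsto F^{(q)}(\gamma,p)-\gamma\log_q\frac{q-1}{q^\lambda-1}$ (via the perspective-function concavity of $M^{(q)}$) together with $F^{(q)}(0,p)=0$ to pass from $\gamma=\delta$ to all $\gamma\ge\delta$. Your count of the patterns $e$ with $\Delta(e,c_w)\le\wt(e)$ is also correct and reproduces the constraint set $D(\gamma,p)$.

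\textbf{The gap: the per-pattern probability step fails as written.} On $\qSC_p$ with $p\le 1-1/q$, a noise pattern of weight $k$ has probability $(p/(q-1))^k(1-p)^{n-k}$, which is \emph{decreasing} in $k$. A pattern of weight $k<pn$ therefore has probability $q^{-nH_q(p)}\bigl(\frac{(q-1)(1-p)}{p}\bigr)^{pn-k}$, which is exponentially larger than $q^{-nH_q(p)}$, not smaller. In your decomposition every noise weight $\le tn$ stays inside the union bound over codewords and is multiplied by $A_w$, which may be exponentially large. So you cannot dismiss low-weight noise on the grounds that the noise typically has weight about $pn$.

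\textbf{How to fix it.} The paper's fix, in \Cref{prop:poltyrev}, is to remove the single event $\{\wt(e)\le pn-n^{\theta}\}$ before taking the union bound. This event has probability $\exp(-\Omega(n^{2\theta-1}))$ and is not multiplied by any $A_w$. Every remaining pattern then has probability at most $q^{-nH_q(p)}\bigl(\frac{(q-1)(1-p)}{p}\bigr)^{n^\theta}$, and there are at most $\mu_q(n,t,w)$ of them. Alternatively, you can keep your decomposition and bound the weight-$\kappa n$ slice by $n^2q^{-n(F^{(q)}(\gamma,\kappa)+D_{\mathrm{KL}}(\kappa\,\|\,p))}$. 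You would then need \Cref{lem:fq-decreasing} ($F^{(q)}$ strictly decreasing in $p$) to see that $\kappa\approx p$ dominates. Either way, an ingredient is needed that your sketch does not supply.

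\textbf{Two smaller points.}
\begin{itemize}
\item With $t>p$, the terms with $2p<\gamma\le 2t$ do not vanish. They must be handled by uniform continuity of $M^{(q)}$, as at the end of the proof of \Cref{prop:qsc-vanishing-pb}.
\item The bit-error hypothesis suffices for a simple reason. On the erasure channel $H(X_i|Y)=\frac{q\log q}{q-1}P_{b,i}$, so $H(X|Y)\le\sum_i H(X_i|Y)=o(n)$ (\Cref{cl:pbit-implies-entropy}). This is enough here because everything is at exponential scale. It is not ``as in those papers'': the union-bound argument of \cite{hkazla2021codes} generally needs more.
\end{itemize}
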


Note that in our result we assume only vanishing
\emph{bit} error probability on the erasure channel. This is due to the
assumption of linear minimum distance, and in contrast to 
\cite{hkazla2021codes}, where in general stronger assumptions 
on the bit error probability are required (see their Theorem 3.1).
Furthermore, due to the 
sharp threshold results for the block error probability discussed above, the rate of
convergence of the block error probability in \Cref{Thm:1}
is exponential in $n$.

To prove \Cref{Thm:1}, we use the same weight distribution
bounds as introduced in \cite{hkazla2021codes,abawonse2025generalized}.
However, instead of using the union 
(Bhattacharyya) bound
as therein, we use the assumption $\delta(\cC_n)\ge\delta$ and a
version of the tighter 
Poltyrev bound~\cite{poltyrev2002bounds,pathegama2023smoothing}
for the symmetric channel error probability in terms of the weight distribution.
The Poltyrev bound is expressed in terms of the cardinalities of
intersections 
$\mu_q(n,t,w)\coloneqq|B(0^n,t)\cap B(1^w0^{n-w},t)|$, where $B(x,t)$ is the Hamming ball
centered at $x$ and with radius $t$. 

As will be seen later, $M^{(q)}(\gamma,p)$ is related
to $\mu_q(n,t,w)$ in the sense that 
$M^{(q)}(\gamma,p)=\lim_{n\to\infty}\frac{1}{n}\log_q(\mu_q(n,\allowbreak pn,\lfloor \gamma n\rfloor)$.
The most technical part of our proofs is
the analysis of various properties of
functions $F^{(q)}$ and
$M^{(q)}$. This is not too difficult for $q=2$,
but more involved for larger alphabets, since
in that case $M^{(q)}$ does not seem to have a closed form. Among others,
we show that $M^{(q)}(\gamma,p)$
is continuous, that it is concave in $\gamma$, 
and that $F^{(q)}(\gamma,p)$ is decreasing in $p$.
We also derive an ``almost closed form'' for 
$M^{(q)}$ which allows to compute its values easily.

Finally, we instantiate our technique on the binary Gaussian channel
$\BAWGN_{\sigma^2}$.
In this case, we replace the union Bhattacharyya bound with the ``sphere bound''~\cite{HP94,sason2006performance}. 
In all cases, adding the minimum distance assumption gives strictly better results for any fixed relative distance 
$\delta>0$, and the resulting bounds
converge to those from
\cite{hkazla2021codes,abawonse2025generalized} as
$\delta\to 0$.

\subsection{Improving on the Johnson bound}

Let us return to the problem
of bounding symmetric channel
performance for linear codes of
relative distance $\delta$.
As we described in~\Cref{sec:intro-distance-vs-qsc},
a $q$-ary code family of relative distance
$\delta$ has vanishing block error probability
on $\qEC_\lambda$ for $\lambda<\frac{q}{q-1}\delta$.
Plugging this into~\Cref{Thm:1}, we show:

\begin{theorem}
\label{thm:unconditional_bound}
For $q\ge 2$ a prime power and $0<\delta\le 1-1/q$, 
it holds
$\pLsym(q,\delta)\ge p^{(q)}_*\left(\frac{q\delta}{q-1}, \delta\right)$.
\end{theorem}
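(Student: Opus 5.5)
The plan is to feed the erasure consequence of relative distance into \Cref{Thm:1}, and then pass to the limit in $\lambda$ using monotonicity and continuity of $p^{(q)}_*$. Fix a linear family $\{\cC_n\}_n$ with $\delta(\cC_n)\ge\delta$ and any $\lambda<\frac{q}{q-1}\delta$. The target is $P_b(\cC_n,\qEC_\lambda)\to 0$. Once this holds, \Cref{Thm:1} gives $P_B(\cC_n,\qSC_p)\to0$ for all $p<p^{(q)}_*(\lambda,\delta)$. Letting $\lambda\uparrow\frac{q}{q-1}\delta$ should then yield every $p<p^{(q)}_*(\frac{q\delta}{q-1},\delta)$, provided $p^{(q)}_*$ behaves well in $\lambda$; this is addressed in the last paragraph. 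The edge case $\delta=1-1/q$, where $\frac{q\delta}{q-1}=1$, needs the same limit argument with $\lambda\uparrow 1$.

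For the erasure step I would use the Johnson-type erasure bound plus a direct expectation argument, rather than the sharp-threshold erasure version of \cite{pernice2025list}. The reason is that \Cref{Thm:1} only needs vanishing bit error. By linearity, the MAP decoder on $\qEC_\lambda$ fails on coordinate $i$ only if some nonzero codeword $c$ has $c_i\neq 0$ and is supported inside the erased set $E$.

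Here is the concrete route. Take any subset $T\subseteq[n]$ of size $\tau n$ with $\tau<\frac{q}{q-1}\delta$. Consider the codewords of $\cC$ that vanish outside $T$. Restricted to $T$, they form a $q$-ary code of length $\tau n$ and relative distance at least $\delta/\tau>\frac{q-1}{q}$. By the Plotkin bound, this code has size at most a constant $K(\delta,\tau)$. The subspace of codewords supported in $T$ therefore has dimension $O(1)$. Now choose $\tau$ with $\lambda<\tau<\frac{q}{q-1}\delta$. With probability $1-e^{-\Omega(n)}$ the erased set satisfies $|E|\le\tau n$, so the ambiguity space $V_E$ of codewords supported in $E$ has bounded size. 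I would then bound the expected number of coordinates $i$ with some $c\in V_E$, $c_i\ne0$. Any nonzero $c\in V_E$ has weight at least $\delta n$, so a crude count suffices: $P_b\le \Pr[V_E\ne\{0\}]$. That probability is in turn at most $\sum_{w\ge\delta n}A_w\lambda^w$ restricted to the relevant weights. Bounding this sum needs weight-distribution control, which is what Plotkin provides locally but not globally.

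A cleaner alternative is to apply the erasure analogue of \Cref{thm:list-double-counting}. Bounded erasure list size gives $A_w\binom{n-w}{t-w}\le\binom{n}{t}K$ for $t=\tau n$. Summing $A_w\lambda^w$ against this bound gives an exponentially small total whenever $\lambda<\tau$. I expect this computation to be routine; it is the erasure analogue of the proof of \Cref{thm:list-vs-qsc-main}.

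The main obstacle is the final limit. \Cref{Thm:1} is proved for each fixed $\lambda$, and we need $\lim_{\lambda\uparrow\lambda_0}p^{(q)}_*(\lambda,\delta)\ge p^{(q)}_*(\lambda_0,\delta)$. The right-hand side of the defining inequality, $\delta\log_q\frac{q-1}{q^\lambda-1}$, decreases continuously in $\lambda$. So $p^{(q)}_*$ is nondecreasing in $\lambda$, and I would aim to show it is left-continuous. I would use the facts announced in the paper, that $F^{(q)}(\delta,\cdot)$ is continuous and decreasing in $p$. The infimum of a sublevel set of a continuous decreasing function, taken at a continuously decreasing threshold, is continuous except where $F^{(q)}$ is locally constant, which can only cause jumps in the favorable direction. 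If strict monotonicity of $F^{(q)}$ fails, I would instead rerun the proof of \Cref{Thm:1} directly at $\lambda_0$, using the exponential decay obtained above. Since the erasure bound holds for every $\lambda<\lambda_0$ with exponential rates, the weight-distribution inequalities of \cite{abawonse2025generalized} should apply with exponent $\lambda_0-o(1)$.
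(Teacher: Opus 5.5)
Your outer structure is the paper's: Plotkin gives bounded-list erasure decodability below $\frac{q}{q-1}\delta$, this should give vanishing $\qEC_\lambda$ error for every $\lambda<\frac{q}{q-1}\delta$, and \Cref{Thm:1} is applied at such a $\lambda$. Your limit step is also fine. The paper does it pointwise: for $p<p_*$ the strict inequality $F^{(q)}(\delta,p)>\delta\log_q\frac{q-1}{q^{\lambda_0}-1}$ persists for some $\lambda<\lambda_0$ by continuity, and \Cref{lem:fq-decreasing} then gives $p<p_*^{(q)}(\lambda,\delta)$, so no fallback is needed.

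The genuine gap is that you never prove $P_b(\cC_n,\qEC_\lambda)\to 0$. Knowing $|V_E|\le K$ with high probability is bounded ambiguity, not vanishing error. Bit error does not help here either: if $V_E\ne\{0\}$, then at least $\delta n$ coordinates are undetermined, so $P_b\ge\frac{q-1}{q}\delta\Pr[V_E\ne\{0\}]$. Your ``cleaner alternative'' fails as stated. The double-counting bound reads $A_w\le K\binom{n}{w}/\binom{t}{w}$. At $w=t=\tau n$ this is just $A_t\le K\binom{n}{t}$, so the corresponding term of $\sum_w A_w\lambda^w$ is only bounded by $K\binom{n}{\tau n}\lambda^{\tau n}=2^{n(h(\tau)+\tau\log_2\lambda)+o(n)}$. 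For $q=2$, $\delta=1/4$ this forces $\lambda\le 2^{-h(\tau)/\tau}<1/4=\delta$, i.e., nothing beyond the trivial range. The reason is that $\lambda^w$ is the unconditional probability that a given support is erased. For $w=\Omega(n)$ it is dominated by atypically large erasure sets, about which your list-size bound says nothing.

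The missing idea is the thinning argument of \Cref{clm:erasure_reliability}, adapted from \cite{pernice2025list}; it is elementary and uses no sharp threshold. Pick $\rho_0\in(\lambda,\tau)$ and sample $\cE_0$ at rate $\rho_0$. Form the $\qEC_\lambda$ erasure set $\cE'\subseteq\cE_0$ by keeping each element of $\cE_0$ independently with probability $\lambda/\rho_0$. With probability $1-e^{-\Omega(n)}$ we have $|\cE_0|\le\tau n$, hence $|V_{\cE_0}|\le K$. Since $V_{\cE'}\subseteq V_{\cE_0}$, and a nonzero $c\in V_{\cE_0}$ lies in $V_{\cE'}$ with probability $(\lambda/\rho_0)^{\wt(c)}\le(\lambda/\rho_0)^{\delta n}$, a union bound over at most $K$ candidates gives $\Pr[V_{\cE'}\ne\{0\}]\to0$. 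That is even vanishing block error.

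Equivalently, your double-counting bound does work if you replace $\lambda^w$ by the probability conditioned on $|E|=m\le(\lambda+o(1))n$, namely $\binom{m}{w}/\binom{n}{w}$. The weight-$w$ contribution then becomes at most $K\binom{m}{w}/\binom{t}{w}\le K(m/t)^{\delta n}$, which is exponentially small. This truncation, not the plain union bound, is the real erasure analogue of the Poltyrev-based proof of \Cref{thm:list-vs-qsc-main}.
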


As we mentioned, it can be observed numerically 
that this improves somewhat
on the Johnson bound for $q\ge 4$ and larger values
of $\delta$. We include an illustration for $q=9$ and $q=17$ in the whole range $[0,1-1/q]$ (see \Cref{fig:pstar_vs_jq}), and for $q\in\{3,4,9,17\}$ for larger values of $\delta$ (see \Cref{fig:pstar_vs_jq_large_delta} in the appendix). 

\begin{figure}[!ht]
    \centering
    \includegraphics[width=0.8\linewidth]{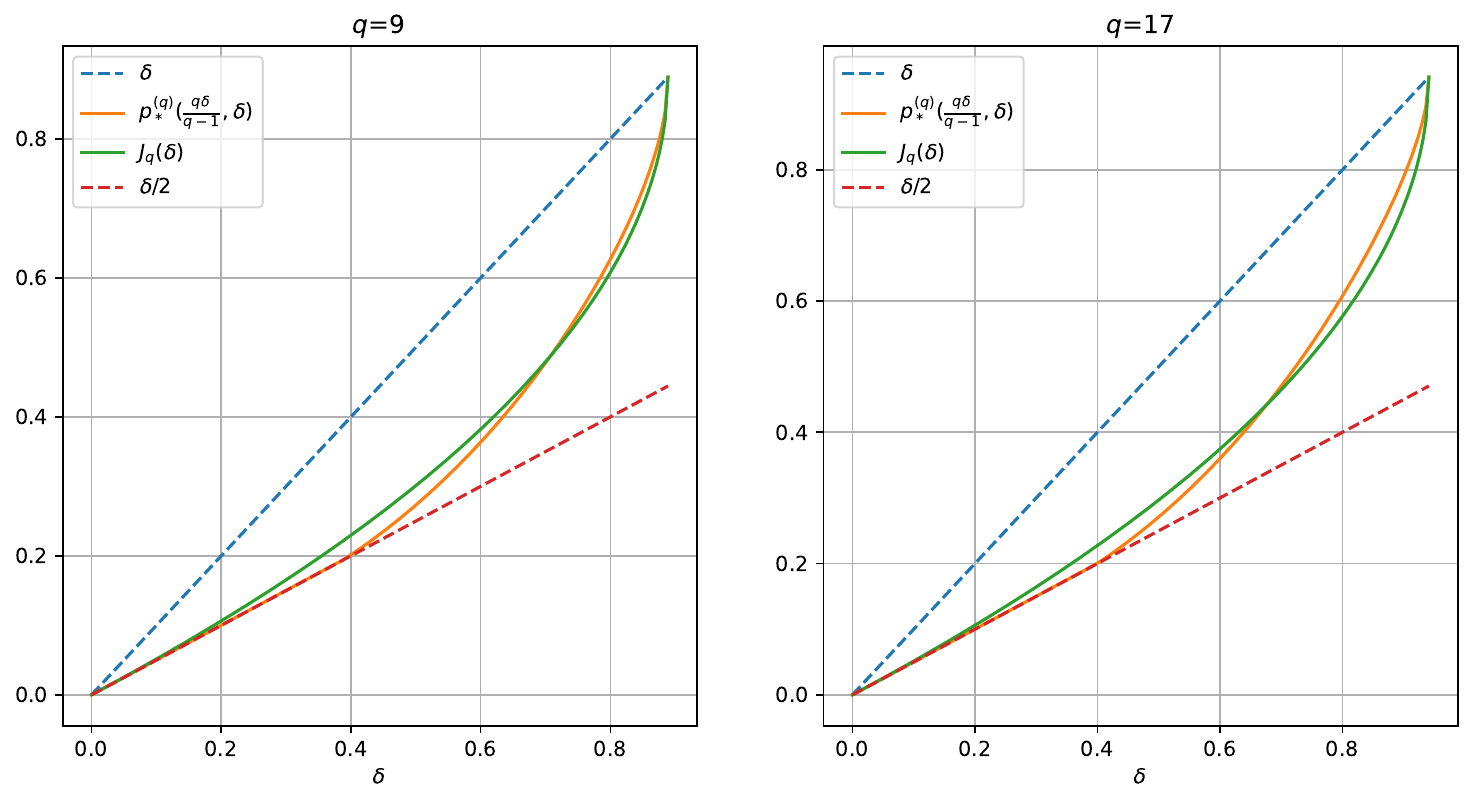}
    \caption{Our lower bound $p_*^{(q)}(\frac{q\delta}{q-1},\delta)$ from \Cref{thm:unconditional_bound} 
    plotted against the Johnson bound for
    $q=9$ and $q=17$. Our bound is larger for larger values of $\delta$. 
    Also the upper bound
    of $\delta$ and the lower bound of $\delta/2$ are plotted
    for illustration.}
    \label{fig:pstar_vs_jq}
\end{figure}

\paragraph{Outline of the paper.}
\Cref{sec:preliminaries} recaps our notation and preliminaries.
\Cref{sec:proof-of-list-qsc} contains the proof
of \Cref{thm:list-vs-qsc-main}.
In \Cref{sec:samorodnitsky} we prove and discuss
\Cref{Thm:1}, and \Cref{sec:dual} discusses
a similar bound in terms of the
erasure channel performance of the \emph{dual code}.
\Cref{sec:unconditional_bound} contains the proof
of \Cref{thm:unconditional_bound},
and in \Cref{sec:m-f-proofs} we prove various
properties of $M^{(q)}$ and $F^{(q)}$ deferred from the
preceding sections. \Cref{sec:bawgn} discusses
the analogue of \Cref{Thm:1} for the BAWGN channel.

\section{Preliminaries}
\label{sec:preliminaries}
Throughout the paper, $\log(\cdot)$ without specification will denote the logarithm in base 2. 
We let $H_q(x)\coloneqq -(1-x)\log_q(1-x)-x\log_q\frac{x}{q-1}$ be the $q$-ary entropy function. For $q=2$, we denote the binary entropy function $H_2(x)$ by $h(x)$. Furthermore, for $q\ge 3$
we let 
$\Htilde_q(x)\coloneqq -(1-x)\log_q\frac{1-x}{2}
-x\log_q\frac{x}{q-2}$.
Recall that the $q$-ary
entropy function $H_q(\beta)$ is
a concave function increasing
for $0\le \beta\le1-1/q$ and
decreasing for $1-1/q\le\beta\le 1$.
A similar claim about the function
$\Htilde_q(\beta)$ is easily checked:
\begin{claim}
     \label{clm:3}
     $\tilde{H}_q(\beta)$ is a concave function,
     increasing for $0\le\beta\le1-2/q$, and 
     decreasing for $1-2/q\le\beta\le 1$.
\end{claim}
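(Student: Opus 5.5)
We prove \Cref{clm:3} by direct differentiation on the open interval $(0,1)$, with the convention $0\log 0=0$ handling the endpoints by continuity. Writing everything in natural logarithms (the factor $1/\ln q$ is positive and does not affect monotonicity or concavity), we have
\begin{align*}
\ln q\cdot\Htilde_q(\beta) = -(1-\beta)\ln\frac{1-\beta}{2}-\beta\ln\frac{\beta}{q-2}\;.
\end{align*}

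For concavity, we differentiate twice. The first derivative is
\begin{align*}
\ln q\cdot\Htilde_q'(\beta)=\ln\frac{1-\beta}{2}+1-\ln\frac{\beta}{q-2}-1=\ln\frac{(1-\beta)(q-2)}{2\beta}\;.
\end{align*}
Differentiating again gives $\ln q\cdot\Htilde_q''(\beta)=-\frac{1}{1-\beta}-\frac{1}{\beta}<0$ on $(0,1)$. Hence $\Htilde_q$ is strictly concave on $(0,1)$. Since $\Htilde_q$ is continuous on $[0,1]$, concavity extends to the closed interval.

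For monotonicity, the derivative vanishes exactly when $(1-\beta)(q-2)=2\beta$, that is, at $\beta=\frac{q-2}{q}=1-2/q$. Because $\Htilde_q'$ is strictly decreasing (its derivative is negative), it is positive for $\beta<1-2/q$ and negative for $\beta>1-2/q$. Therefore $\Htilde_q$ is increasing on $[0,1-2/q]$ and decreasing on $[1-2/q,1]$, again using continuity at the endpoints.

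Alternatively, the claim follows from the footnote's interpretation: $\Htilde_q(x)$ is the entropy of a distribution that is mixed linearly in $x$ between the uniform distribution on two points and the uniform distribution on $q-2$ points. Concavity then follows from concavity of entropy, and the maximum at $x=1-2/q$ corresponds to the uniform distribution on all $q$ points, where the value is $1$. The computation is routine and has no real obstacle; the only point needing care is that $q\ge 3$, which ensures that $q-2>0$ and the logarithms are defined.
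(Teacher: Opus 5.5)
Your proof is correct and is essentially what the paper intends. The paper gives no written proof, only remarking that the claim is ``easily checked.'' Your derivative $\tilde{H}_q'(\beta)=\log_q\frac{(q-2)(1-\beta)}{2\beta}$ is exactly the formula the paper later uses in the proof of \Cref{lem:fq-decreasing}, and together with your second derivative and the critical point at $1-2/q$ it settles both concavity and monotonicity.
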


We denote the $q$-ary finite alphabet
by $\Sigma$, so that $|\Sigma|=q$.
From time to time for convenience we make
the identification $\Sigma=\mathbb{Z}_q$.
For $q$ prime power and linear codes we take
$\Sigma=\mathbb{F}_q$.

We denote the $q$-ary Hamming distance by $\Delta(x,y)$.
For $\Sigma=\mathbb{Z}_q$ we write $\wt(x)\coloneqq \Delta(x,0^n)$.
The $q$-ary Hamming ball (respectively, sphere) of center $x\in\Sigma^n$ and radius $t\in\mathbb{R}_+$ (respectively, $t\in\mathbb{N}$) is denoted as $B(x,t)\coloneqq\{y\in\Sigma^n: \Delta(x,y)\le t\}$ (respectively, $S(x,t)\coloneqq\{y\in\bF_q^n: \Delta(x,y)= t\}$).
We also let
$\mu_q(n,t,w):=|B(x,t)\cap B(y,t)|$ and $\nu_q(n,t_1,t_2,w):=|S(x,t_1)\cap S(y,t_2)|$ for some choice
of $x,y\in\Sigma^n$ with $\Delta(x,y)=w$.
(It is easy to see that the values
are independent of the choice of $x,y$.)
Then, we let $\nu_q(n,t,w)\coloneqq
\nu_q(n,t,t,w)$.

A $q$-ary code of block length $n$ is a 
nonempty subset $\cC\subseteq\Sigma^n$
and a linear code is a linear subspace of
$\mathbb{F}_q^n$. We denote the rate
$R(\cC)\coloneqq \frac{\log_q|\cC|}{n}$,
the minimum distance $d(\cC)\coloneqq\min_{x,y\in \cC: x\ne y}\Delta(x,y)$ and the relative distance
$\delta(\cC)\coloneqq \frac{d(\cC)}{n}$.
For $0\le w\le n$ and $x\in\Sigma^n$,
let $A_w(x)\coloneqq
\left|\left\{
y\in\cC:\Delta(x,y)=w
\right\}\right|$.
Then, we let 
$A_w\coloneqq \frac{1}{|\cC|}\sum_{x\in\cC}A_w(x)$.
The vector $(A_0,\ldots,A_n)$ is known as the
\emph{weight distribution} of a code.
Recall that for a linear code it holds
$A_w=A_w(x)$ for every $x\in\cC$.

\paragraph{List Decoding.}
For $x\in\Sigma^n$ and a subset 
$T\subseteq[n]\coloneqq\{1,\ldots,n\}$, we denote $x_T\coloneqq(x_i)_{i\in T}$. 
Let $\cC\subseteq\Sigma^n$ be a $q$-ary code, 
$0\le\rho\le 1$ and $L\ge 1$.
\begin{definition}
    The code $\cC$ is $(\rho,L)$-list decodable if for every $y\in\Sigma^n$ it holds
    $|\{x\in \cC:\Delta(x,y)\le\rho n\}|\le L$.
\end{definition}
\begin{definition}
    The code $\cC$ is $(\rho,L)$-erasure list decodable if,
for every $T\subseteq[n]$ with $|T|> (1-\rho)n$
and every $y\in\Sigma^T$, it holds
$|\{x\in \cC: x_T=y\}|\le L$.
\end{definition}
\paragraph{MAP decoding.}
Let $W$ be a channel with input alphabet $\Sigma$
and
output alphabet $\mathcal{Y}$. Assume we transmit uniformly at random a codeword $X\in\cC\subseteq\Sigma^n$ and denote by $Y\in\mathcal{Y}^n$ the output of transmitting $X$ through the $n$ 
independent uses of $W$. The maximum a posteriori (MAP) block decoding  is defined as 
\begin{align*}
    \hat{x}^{\MAP}(y):=\underset{x\in\cC}{\argmax} \ P_{X|Y}(x|y)\;,
\end{align*}
breaking the ties arbitrarily. (None of our results
depend on the choice of the tiebreaking method.)
In particular, for the $\qSC$ (respectively $\BAWGN$) one can verify that $\hat{x}^{\MAP}(y)$ is a closest codeword $c\in\cC$ to $y$ with respect to the Hamming distance (respectively the Euclidean distance). Similarly we can define the symbol-MAP (bit-MAP in $\bF_2$) decoding for $1\le i\le n$ as
$\hat{x_i}^{\MAP}(y):=\underset{x_i\in\Sigma}{\argmax}\ P_{X_i|Y}(x_i|y)$.
\begin{definition}[Error Probability]
For $x\in\cC$,
the block-MAP and symbol-MAP error probabilities are respectively defined as
\begin{align*}
    P_{B}(\cC,W,x)&\coloneqq
    \Pr\left[ \hat{x}^{\MAP}(Y)\neq X\;\vert\;X=x\right]\;,\
    \\P_{b}(\cC,W,x)
    &\coloneqq\frac{1}{n}\sum_{i=1}^n P_{b,i}(\cC,W,x)
    \coloneqq\frac{1}{n}\sum_{i=1}^n
    \Pr\left[ \hat{x_i}^{\MAP}(Y)\neq X_i
    \;\vert\;X=x\right]\;.
\end{align*}
We then let
$P_B(\cC,W)\coloneqq\max_{x\in\cC} P_B(\cC,W,x)$
and $P_b(\cC,W)\coloneqq\max_{x\in\cC}
P_b(\cC,W,x)$.
\end{definition}

\begin{definition}[Bhattacharyya coefficient]
\label{def:bhat-coeff}
Let $W$ be a channel with input
alphabet $\Sigma$.
Let $X$ be distributed uniformly in $\Sigma$
and $Y$ be the channel output
determined
by $P_{Y|X}=W$. Then, the Bhattacharayya coefficient
$Z(W)$ is given as
\begin{align*}
Z(W)&\coloneqq
\max_{x,x'\in\Sigma: x\ne x'}\E\left[
\sqrt{\frac{\Pr[X=x'\;\vert\;Y]}
{\Pr[X=x\;\vert\;Y]}}
\;\Big\vert\;X=x
\right]\;.
\end{align*}
\end{definition}

\section{Proof of \Cref{thm:list-vs-qsc-main}}
\label{sec:proof-of-list-qsc}

\subsection{The $M^{(q)}$ and $F^{(q)}$ functions}

In this section we will state some properties
of the $M^{(q)}$ and $F^{(q)}$ functions defined in
\Cref{sec:intro-sam}. Our main concern is their
relation to the sizes of the intersections of Hamming balls
and spheres.
We defer the proofs of the lemmas stated in
this subsection to \Cref{sec:m-f-proofs}.

Recall $\mu_q(n,t,w)=|B(0^n,t)\cap B(1^{w}0^{n-w},t)|$, the size of the intersection of two 
$q$-ary Hamming balls of radius $t$ whose centers differ in $w$ coordinates.
For some discussion on this 
quantity, see \cite[Section 3]{dong2023number}.
Here we establish the connection between
$M^{(q)}(w/n,t/n)$ and
$\mu_q(n,t,w)$ for $w,t=\Theta(n)$.

\begin{lemma}
    \label{lem:qary_balls_intersection}
    Let $q\ge 2$.
    Let $t=p n$ (not necessarily an integer)
    for $0\le p\le 1-1/q$
    and $w=\gamma n$ be an integer
    with $0\le w\le \min(2t,n)$. Then,
\begin{align*}
\mu_q(n,t,w)\le n^2\cdot \exp_q\left(nM^{(q)}(\gamma,p)
\right)\;.
\end{align*}
On the other hand, if $w>2t$, then
$\mu_q(n,t,w)=0$.
\end{lemma}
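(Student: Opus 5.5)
We count the points $y\in B(0^n,t)\cap B(1^w0^{n-w},t)$ by their ``type'', bound the number of types polynomially, and bound the size of each type class by an entropy expression. Let $x=0^n$ and $x'=1^w0^{n-w}$. Split the coordinates into the block $I=[w]$, where $x$ and $x'$ differ, and the block $J=[n]\setminus[w]$, where they agree. For a point $y$, on $I$ each coordinate falls into one of three classes: $y_i=0$ (agrees with $x$), $y_i=1$ (agrees with $x'$), or $y_i\notin\{0,1\}$. On $J$ each coordinate is either $0$ or nonzero. Let $a,b,c$ be the counts of the three classes on $I$ (so $a+b+c=w$) and $e$ the number of nonzero coordinates on $J$. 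Then
\begin{align*}
\Delta(y,x)=b+c+e\;,\qquad \Delta(y,x')=a+c+e\;.
\end{align*}
So $y$ lies in the intersection iff both quantities are at most $t$. If $w>2t$, adding the two constraints gives $w+c+2e\le 2t$, which is impossible, so $\mu_q(n,t,w)=0$.

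For $w\le\min(2t,n)$, the number of $y$ with given $(a,b,c,e)$ is
\begin{align*}
\binom{w}{a,b,c}(q-2)^c\binom{n-w}{e}(q-1)^e\;.
\end{align*}
The number of admissible tuples $(a,b,c,e)$ is at most $(w+1)^2(n-w+1)\le n^2$, up to an adjustment of the constant or the trivial cases $w\in\{0,n\}$, which I will handle separately.

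It remains to bound each term by $\exp_q(nM^{(q)}(\gamma,p))$. I will symmetrize the agreements on $I$. Replacing $(a,b)$ by $\big(\tfrac{a+b}{2},\tfrac{a+b}{2}\big)$ only increases the multinomial coefficient (or its entropy proxy), and it preserves admissibility, since $\max(a,b)\ge \tfrac{a+b}{2}$ and both constraints read $\max(a,b)+c+e\le t$. Set $\zeta=c/w$ and $\beta=e/(n-w)$. The standard entropy bounds $\binom{m}{k}(q-1)^k\le q^{mH_q(k/m)}$ and
\begin{align*}
\binom{w}{a,b,c}(q-2)^c\le \exp_q\Big(w\Htilde_q(\zeta)\Big)\quad\text{when } a+b=w(1-\zeta)
\end{align*}
then give a bound of the form $q^{w\Htilde_q(\zeta)+(n-w)H_q(\beta)}$. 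The second inequality holds because the multinomial coefficient is at most $q$ raised to $w$ times the entropy of $(a/w,b/w,c/w)$, and this entropy is maximized over $a+b$ fixed at the equal split; this is exactly the definition of $\Htilde_q$. Dividing the exponent by $n$ gives $\gamma\Htilde_q(\zeta)+(1-\gamma)H_q(\beta)$.

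Finally I check that $(\zeta,\beta)\in D(\gamma,p)$. Averaging the two distance constraints gives $\tfrac{a+b}{2}+c+e\le t$, that is $\tfrac{w(1-\zeta)}{2}+w\zeta+(n-w)\beta\le pn$, which rearranges to $\gamma\zeta/2+(1-\gamma)\beta\le p-\gamma/2$. Hence every term is at most $\exp_q(nM^{(q)}(\gamma,p))$, and summing over types gives the claim. The main care point is the symmetrization and multinomial entropy step, in particular checking that the $(q-2)^c$ factor and the boundary cases ($w=0$, $w=n$, $q=2$ where $c=0$) match the definition of $\Htilde_q$. For $q=2$, $D$ forces the $\Htilde$ term to be evaluated at $\zeta=0$, where $\Htilde_2(0)=\log_2 2=1$, which agrees with the formula for $M$. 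The polynomial factor $n^2$ may need a slightly careful count, but the cruder $(n+1)^3$ suggests absorbing it with $w\le n$. I expect $(w+1)^2(n-w+1)\le n^2$ to fail only in degenerate cases, and there I would count only tuples with $c,e$ determined up to the constraints, or fall back on $(n+1)^2$.
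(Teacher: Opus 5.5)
Your type decomposition, the averaged constraint and the entropy estimates are all correct, and they follow the paper's argument closely. The paper also relaxes the two distance constraints to their average, and it uses $\binom{w}{s}2^s(q-2)^{w-s}\le\exp_q\bigl(w\Htilde_q(1-s/w)\bigr)$. The step that fails is the count of types. For $q\ge 3$ the number of tuples $(a,b,c,e)$ with $a+b+c=w$ is $\binom{w+2}{2}(n-w+1)$. Your bound $(w+1)^2(n-w+1)$ exceeds $n^2$ as soon as $w$ is larger than about $\sqrt{n}$; for $w=\lfloor n/2\rfloor$ it is of order $n^3/8$. So the failure is not confined to degenerate cases. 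Falling back on $(n+1)^2$ is also not available while you bound every type class separately by the same maximum. As written, your argument only gives $\mu_q(n,t,w)\le (n+1)^3\exp_q\bigl(nM^{(q)}(\gamma,p)\bigr)$. That would still suffice in \Cref{prop:qsc-vanishing-pb}, where any polynomial factor is absorbed into $\exp(o(n))$, but it is weaker than the stated lemma.

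To get $n^2$, sum out the split between $a$ and $b$ before applying entropy bounds, instead of symmetrizing term by term. The averaged constraint $(w-c)/2+c+e\le t$ depends only on $(c,e)$, and $\sum_{a+b=w-c}\binom{w}{a,b,c}=\binom{w}{c}2^{w-c}$, so
\begin{align*}
\mu_q(n,t,w)\le\sum_{\substack{0\le c\le w,\ 0\le e\le n-w\\ (w-c)/2+c+e\le t}}\binom{w}{c}2^{w-c}(q-2)^c\binom{n-w}{e}(q-1)^e\;.
\end{align*}
Each summand is at most $\exp_q\bigl(w\Htilde_q(c/w)+(n-w)H_q(e/(n-w))\bigr)\le\exp_q\bigl(nM^{(q)}(\gamma,p)\bigr)$. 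There are only $(w+1)(n-w+1)\le n^2$ pairs $(c,e)$ for $n\ge 2$. This is exactly the paper's proof, with its $s$ playing the role of your $a+b$.

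A smaller point concerns $q=2$. There it is the factor $(q-2)^c=0^c$ that forces $c=0$, not the set $D(\gamma,p)$; the paper defines $D(\gamma,p)$, like $\Htilde_q$, only for $q\ge 3$. You still need the short check that $\sup\{\gamma+(1-\gamma)h(\beta):\beta\le\frac{p-\gamma/2}{1-\gamma}\}=M(\gamma,p)$. This holds because $\frac{p-\gamma/2}{1-\gamma}\le\frac12$ and $h$ is increasing on $[0,\frac12]$.
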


Next, $M^{(q)}$ is also
the lower bound on the (exponent of the) 
intersection size, not only for balls,
but also for two spheres of the same radius.

\begin{lemma} 
\label{lem:nu_t lower bound_general}
Let $q\ge 2$, and $0\le w,t\le n$ be integers
such that $t\le (1-1/q)n$.
Let us write $t=p n$ and $w=\gamma n$. 
If $w>2t$ or $q=2$ and $w$ is odd, then
$\nu_q(n,t,w)=0$. In all other cases,
\begin{align*}
\nu_q(n,t,w)\ge \exp_q\left(n\left[M^{(q)}(\gamma,p)-o(1)\right]
\right)\;,
\end{align*}
where $o(1)$ denotes a function
that goes to $0$ as $n$ goes to infinity,
uniformly for all $0\le p\le 1-1/q$
and $0\le\gamma\le\min(2p,1)$ (possibly depending on $q$).
\end{lemma}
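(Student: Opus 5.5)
We prove the bound by an explicit count of $\nu_q(n,t,w)$. Put $x=0^n$ and $y=1^w0^{n-w}$, and call $A=[w]$ and $B=[n]\setminus[w]$. A point $z$ lies in $S(x,t)\cap S(y,t)$ exactly when its restrictions to $A$ and $B$ give distance $t$ to both centers. On $B$ the two centers coincide, so a coordinate $z_i\neq 0$ with $i\in B$ adds $1$ to both distances. On $A$ a coordinate $z_i$ has one of three types:
\begin{itemize}
\item $z_i=0$, which adds $1$ to the distance to $y$ only;
\item $z_i=1$, which adds $1$ to the distance to $x$ only;
\item $z_i\notin\{0,1\}$ (only possible for $q\ge 3$), which adds $1$ to both distances.
\end{itemize}
Let $a$ be the number of coordinates in $A$ with $z_i=0$, $a$ again the number with $z_i=1$ (the two counts must be equal for the distances to agree), $c=w-2a$ the number of other values in $A$, and $b$ the number of nonzero coordinates in $B$. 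Then both distances equal $a+c+b=w-a+b$, and we need $w-a+b=t$. For each such admissible triple there are exactly
\begin{align*}
\binom{w}{a,\,a,\,w-2a}(q-2)^{w-2a}\binom{n-w}{b}(q-1)^{b}
\end{align*}
points $z$. Since $\nu_q$ is a sum of nonnegative terms, it is at least any single term.

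\textbf{Choice of the term.} Translate to the variables of $D(\gamma,p)$ by setting $\zeta=c/w$, so that $a=w(1-\zeta)/2$, and $\beta=b/(n-w)$. The constraint $w-a+b=t$ becomes $\gamma(1+\zeta)/2+(1-\gamma)\beta=p$, which is the boundary of $D(\gamma,p)$. By Stirling (the standard multinomial estimate, with polynomial loss) the term equals $\exp_q(n[\gamma\Htilde_q(\zeta)+(1-\gamma)H_q(\beta)]-O(\log n))$, with $\Htilde_q$ replaced by the binary entropy $h$ evaluated at the forced value when $q=2$. The plan is:
\begin{enumerate}
\item Show that the supremum defining $M^{(q)}$ is attained, or approached, on the boundary line. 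The objective is monotone in $\beta$ for $\beta\le 1-1/q$ (by \Cref{clm:3} and the monotonicity of $H_q$), and we may assume $p\le 1-1/q$, so a maximizer can be pushed onto the boundary or into a region where the constraint is slack only at the unconstrained maxima $\zeta=1-2/q$, $\beta=1-1/q$.
\item In the slack case, check separately that the boundary point still gives the same value up to $o(1)$, using continuity and concavity of the objective.
\item Take a maximizer $(\zeta^*,\beta^*)$ and round it to integers $a,b$ with $w-2a\ge 0$, $w-a+b=t$, and $0\le b\le n-w$. Parity matters here: $a$ is an integer, which is free unless $q=2$, where $c=0$ forces $w=2a$ even. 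This is why that case is excluded.
\item Rounding moves $(\zeta,\beta)$ by $O(1/n)$, or by $O(1/\sqrt{n})$ relative when $w$ or $n-w$ is small. The objective is uniformly continuous on $[0,1]^2$ (entropies are uniformly continuous), so the loss is $o(1)$ uniformly in $p,\gamma$.
\end{enumerate}

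\textbf{Main obstacle.} The step I expect to be hardest is uniformity when $w$ or $n-w$ is tiny. In those cases the rounded $\zeta$ or $\beta$ may be far from optimal, but it is weighted by $\gamma$ or $1-\gamma$, which is then small, so the contribution is still $o(1)$. A related boundary issue is feasibility: when $t$ is close to $w/2$ or $b$ is near $n-w$, we must still be able to find an integer solution. This is handled by adjusting $a$ within $[\max(0,w-t),\lfloor w/2\rfloor]$, which is nonempty because $w\le 2t$, and then setting $b=t-w+a\in[0,n-w]$; this range is valid because $t\le n$. For $q=2$ the formula reduces to $\binom{w}{w/2}\binom{n-w}{t-w/2}$, matching $M(\gamma,p)$ directly.
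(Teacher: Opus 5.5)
Your proposal is correct and follows essentially the paper's proof. Both count $S(0^n,t)\cap S(1^w0^{n-w},t)$ explicitly, reduce the supremum in $M^{(q)}$ to the line $\gamma\frac{1+\zeta}{2}+(1-\gamma)\beta=p$ by monotonicity, round the maximizer, and finish by uniform continuity. Your split according to whether $w$ or $n-w$ is below $\sqrt{n}$ does the job of the paper's substitution $u=\gamma\zeta$, $v=(1-\gamma)\beta$. Your ``slack case'' is in fact vacuous: for $p\le 1-1/q$ the unconstrained maximizer $(1-2/q,\,1-1/q)$ lies on that line or outside $D(\gamma,p)$.

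One slip to fix: $b=t-w+a\le n-w$ requires $a\le n-t$. This does not follow from $t\le n$ for every $a\le\lfloor w/2\rfloor$; take, e.g., $q\ge 3$, $w=n$ and $t$ near $(1-1/q)n$. Rounding $a^*=w-t+b^*\le n-t$ down, as the paper does, keeps $b$ in range.
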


We will also need some other properties
of $M^{(q)}$ and $F^{(q)}$. First, $M^{(q)}$ is continuous.

\begin{lemma}
\label{lem:M_q_continuity}
For every $q\ge 2$, the function $M^{(q)}(\gamma,p)$ is continuous on its domain $\{(\gamma,p): 0\le p\le1-1/q, 0\le \gamma\le\min(2p,1)\}$.
In particular, it is uniformly continuous.
\end{lemma}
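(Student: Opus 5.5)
We need to prove continuity of $M^{(q)}(\gamma,p)$ on the compact domain $K=\{(\gamma,p): 0\le p\le 1-1/q,\ 0\le\gamma\le\min(2p,1)\}$. Uniform continuity then follows from Heine--Cantor. The natural approach is to view $M^{(q)}$ as the value of a parametric optimization problem. Set
\[
G_\gamma(\zeta,\beta)\coloneqq\gamma\Htilde_q(\zeta)+(1-\gamma)H_q(\beta).
\]
Then $M^{(q)}(\gamma,p)=\sup_{(\zeta,\beta)\in D(\gamma,p)}G_\gamma(\zeta,\beta)$. The objective $G$ is jointly continuous in $(\gamma,\zeta,\beta)$ on $[0,1]^3$, since $H_q$ and $\Htilde_q$ are continuous on $[0,1]$. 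The feasible set $D(\gamma,p)$ is a compact polytope cut out by the linear constraint $\gamma\zeta/2+(1-\gamma)\beta\le p-\gamma/2$, and it is nonempty because $\gamma\le 2p$ makes $(0,0)$ feasible. So I will invoke Berge's maximum theorem. Upper semicontinuity of $M^{(q)}$ needs the correspondence $D$ to be upper hemicontinuous with closed graph, and lower semicontinuity needs $D$ to be lower hemicontinuous.

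Upper semicontinuity is routine. The graph $\{(\gamma,p,\zeta,\beta):\gamma\zeta/2+(1-\gamma)\beta\le p-\gamma/2\}$ is closed in a compact space. Take $(\gamma_k,p_k)\to(\gamma,p)$ with near-maximizers $(\zeta_k,\beta_k)$. Extract a convergent subsequence; its limit lies in $D(\gamma,p)$, and continuity of $G$ gives $\limsup M^{(q)}(\gamma_k,p_k)\le M^{(q)}(\gamma,p)$.

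Lower semicontinuity is the main obstacle. Fix a feasible $(\zeta,\beta)\in D(\gamma,p)$. We must produce feasible points $(\zeta_k,\beta_k)\in D(\gamma_k,p_k)$ converging to it. The difficulty is at the boundary, for example $\gamma=2p$, where $D(\gamma,p)$ can degenerate. My plan is to shrink toward the origin. Define the slack $s_k\coloneqq p_k-\gamma_k/2\ge 0$ and the load $\ell_k(\zeta,\beta)\coloneqq\gamma_k\zeta/2+(1-\gamma_k)\beta$, and set $(\zeta_k,\beta_k)\coloneqq c_k(\zeta,\beta)$ with
\[
c_k\coloneqq\min\bigl(1,\, s_k/\ell_k(\zeta,\beta)\bigr),
\]
taking $c_k=1$ if $\ell_k=0$. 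This point is feasible because $\ell_k$ is linear in $(\zeta,\beta)$ and $(0,0)$ is feasible. It remains to check that $c_k\to1$, or that the point converges to the target regardless.

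If $\ell(\zeta,\beta)<s\coloneqq p-\gamma/2$, then $c_k=1$ eventually. If $\ell(\zeta,\beta)=s>0$, then $s_k/\ell_k\to1$. If $\ell(\zeta,\beta)=s=0$, the constraint forces $\gamma\zeta=0$ and $(1-\gamma)\beta=0$, and this degenerate case needs separate care. When $\gamma\in(0,1)$ the target is $(0,0)$ and the shrunk points trivially converge. When $\gamma=0$ we get $\beta=0$ with $\zeta$ arbitrary, but $\zeta$ enters $G$ with weight $\gamma_k\to0$. So the objective values converge anyway: $\gamma_k\Htilde_q(\zeta_k)\to0$ since $\Htilde_q$ is bounded, while $\beta_k\to0$. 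The case $\gamma=1$ is symmetric, with $\beta$ having weight $1-\gamma_k\to0$.

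So I will argue convergence of objective values directly, using boundedness of $H_q$ and $\Htilde_q$ on $[0,1]$, rather than convergence of points. This yields $\liminf M^{(q)}(\gamma_k,p_k)\ge G_\gamma(\zeta,\beta)$ for every feasible $(\zeta,\beta)$, hence lower semicontinuity. For $q=2$ the same scheme applies, since $M^{(2)}=M$ in closed form is visibly continuous except at the corner $(1,1/2)$. There the argument of $h$ is a ratio whose value is handled by the extension $M(1,1/2)=1$ together with $h\le1$ and $\gamma+(1-\gamma)h(\cdot)\ge\gamma\to1$.
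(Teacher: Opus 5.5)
Your argument is correct, but it takes a different route from the paper's. For $q=2$ you argue exactly as the paper does. For $q\ge 3$, the paper relies on the ``almost closed form'' of Lemma~\ref{lem:almost-closed-form}. It writes $M^{(q)}(\gamma,p)=\Phi(\gamma,\zeta(\gamma,p))$, where $\zeta(\gamma,p)$ is the unique root of $\gamma\zeta/2+(1-\gamma)\beta(\zeta)=p-\gamma/2$. It then shows that $(\gamma,p)\mapsto\zeta(\gamma,p)$ is continuous by a subsequence argument, using compactness of $[0,1]$ and uniqueness of the root.

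You never identify the maximizer and work directly with the supremum definition. Upper semicontinuity comes from the closed graph of $D$ and compactness of $[0,1]^2$. Lower semicontinuity comes from scaling a fixed feasible point toward the always-feasible origin, which preserves membership in $[0,1]^2$ and the linear constraint. At the degenerate boundary $p=\gamma/2$ with $\gamma\in\{0,1\}$, you pass to convergence of objective values, using boundedness of $H_q,\Htilde_q$ and the vanishing weight $\gamma_k$ or $1-\gamma_k$. Your case split is exhaustive, since $\ell>s$ is excluded by feasibility.

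The paper's route is shorter because it needs the almost closed form anyway for the derivative computations in Lemmas~\ref{lem:fq-decreasing} and~\ref{lem:m-derivative-at-0}. However, it inherits everything that goes into the Lagrange-multiplier characterization: that the supremum is attained on the constraint boundary, that the critical point is unique, and how things behave at $\gamma\in\{0,1\}$. Your route is self-contained and more robust. It uses only continuity and boundedness of the objective and the fact that $D(\gamma,p)$ is closed under scaling toward $(0,0)$. It would therefore give continuity for any continuous objective over this family of constraint sets, at the price of a short case analysis at the boundary.
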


As $M^{(q)}(\gamma, p)$ is related to the size
of the intersection of Hamming balls of radius $pn$,
it is easy to show that it is increasing in $p$.
What is more, the function
$F^{(q)}(\gamma, p)=H_q(p)-M^{(q)}(\gamma, p)$
is decreasing in $p$.

\begin{lemma}
\label{lem:fq-decreasing}
For $q\ge 2$ and $0<\gamma\le 1$, the function
$p\mapsto F^{(q)}(\gamma,p)$
is strictly decreasing for $\gamma/2\le p\le 1-1/q$.
\end{lemma}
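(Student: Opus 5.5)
The plan is to differentiate in $p$ and compare the marginal growth of $M^{(q)}$ with that of $H_q$. We use the envelope theorem (Lagrangian duality) for the concave program defining $M^{(q)}(\gamma,p)$. The objective $\gamma\Htilde_q(\zeta)+(1-\gamma)H_q(\beta)$ is concave by \Cref{clm:3} and concavity of $H_q$. The feasible set $D(\gamma,p)$ is a polytope whose single nontrivial constraint $\gamma\zeta/2+(1-\gamma)\beta\le p-\gamma/2$ has right-hand side linear in $p$. Hence $p\mapsto M^{(q)}(\gamma,p)$ is concave, and its one-sided derivatives equal the optimal Lagrange multiplier $\lambda(p)\ge 0$ of that constraint.

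It then suffices to show $\lambda(p)<H_q'(p)=\log_q\frac{(1-p)(q-1)}{p}$ for $\gamma/2\le p<1-1/q$. This gives $\partial_p F^{(q)}<0$ on the open interval, and strict monotonicity on the closed interval follows from continuity (\Cref{lem:M_q_continuity}). At $p=\gamma/2$ the feasible set is just $\{(0,0)\}$, and $H_q'$ is $+\infty$ there, so that endpoint is harmless. For $q=2$ the same plan is even simpler, since $M(\gamma,p)$ is explicit: a direct derivative computation gives $\partial_pM=h'\bigl(\frac{p-\gamma/2}{1-\gamma}\bigr)$, and its argument is $<p$ exactly when $p<1/2$.

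First step: locate the optimizer $(\zeta^*,\beta^*)$.
\begin{itemize}
\item The unconstrained maximizer of the objective is $(1-2/q,\,1-1/q)$. It is feasible only when $p\ge 1-1/q$, so for $p<1-1/q$ the constraint is tight and $\lambda>0$.
\item Interior KKT conditions give $\lambda=H_q'(\beta^*)$ and $\lambda=2\Htilde_q'(\zeta^*)$, where $\Htilde_q'(\zeta)=\log_q\frac{(1-\zeta)(q-2)}{2\zeta}$.
\item Boundary cases ($\zeta^*\in\{0,1\}$ or $\beta^*\in\{0,1\}$) would be excluded or handled separately. The derivatives blow up at $0$, which rules out $0$. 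The value $1$ is suboptimal when $p<1-1/q$, because decreasing $\zeta$ or $\beta$ toward the unconstrained maximizer stays feasible and increases the objective.
\end{itemize}

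Second step: the comparison. Since $H_q'$ is strictly decreasing, $\lambda=H_q'(\beta^*)<H_q'(p)$ is equivalent to $\beta^*>p$. From the tight constraint, $(1-\gamma)\beta^*=p-\gamma(1+\zeta^*)/2$. A short rearrangement shows $\beta^*>p$ fails (i.e.\ $\beta^*\le p$ holds) exactly when $p\le(1+\zeta^*)/2$. The task is therefore to show $\zeta^*>2p-1$, which makes $\lambda<H_q'(p)$ automatic.

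I would argue by contradiction. Suppose $\zeta^*\le 2p-1$, which then also forces $\beta^*\ge p$; this requires $p\ge 1/2$, and since $2p-1\le 1-2/q$ we are in the increasing region of $\Htilde_q$. Monotonicity of the derivatives then gives
\[
2\Htilde_q'(2p-1)\le 2\Htilde_q'(\zeta^*)=\lambda=H_q'(\beta^*)\le H_q'(p).
\]
This reduces to the explicit one-variable inequality
\[
\left(\frac{(1-p)(q-2)}{2p-1}\right)^2\le\frac{(1-p)(q-1)}{p},
\]
to be refuted for $1/2<p<1-1/q$. At $p=1-1/q$ both sides coincide (both derivatives vanish there), so the verification is a calculus check on the ratio of the two sides. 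I expect this case analysis, making the KKT reasoning rigorous at boundaries together with this final inequality, to be the main obstacle.
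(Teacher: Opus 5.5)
Your route is essentially the paper's. The paper also obtains $\partial_p M^{(q)}=2\Htilde_q'(\zeta^*)=H_q'(\beta^*)$, though by implicit differentiation of the almost closed form rather than Lagrangian duality. It then reduces to $\beta^*<p$ using the convex-combination form $p=\gamma\frac{1+\zeta^*}{2}+(1-\gamma)\beta^*$ of the tight constraint, and closes with $\beta(\zeta)<\frac{1+\zeta}{2}\iff\zeta<1-2/q$. Your final inequality $2\Htilde_q'(2p-1)>H_q'(p)$ is exactly that inequality at $\zeta=2p-1$, i.e.\ $\beta(2p-1)<p$.

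However, as written, the direction of your key reduction is reversed. Since $\partial_pF^{(q)}=H_q'(p)-\lambda$, strict decrease needs $\lambda>H_q'(p)$, equivalently $\beta^*<p$, equivalently $\zeta^*>2p-1$ (for $0<\gamma<1$). Your claim that it suffices to show $\lambda<H_q'(p)$ is wrong: that would make $F^{(q)}$ increasing. It is also false, as your own binary computation $\frac{p-\gamma/2}{1-\gamma}<p$ shows. The error is undone only by a second slip, ``$\zeta^*>2p-1$ makes $\lambda<H_q'(p)$ automatic''; in fact $\zeta^*>2p-1$ gives $\lambda>H_q'(p)$. So your target $\zeta^*>2p-1$ and your contradiction argument are the right ones, but both statements need fixing. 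The same reversal appears in your endpoint remark: at $p=\gamma/2$ it is $\lambda$ that is infinite, while $H_q'(\gamma/2)$ is finite. Continuity makes the endpoints irrelevant anyway.

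Two smaller points. First, at $\gamma=1$ the constraint forces $\zeta^*=2p-1$ and $\beta$ drops out of the objective. So ``$\zeta^*>2p-1$'' fails there, and the KKT condition for $\beta^*$ is vacuous. Argue that case directly: $\lambda=2\Htilde_q'(2p-1)>H_q'(p)$, which is precisely your final inequality. Second, that final inequality, which you expect to be the main obstacle, takes one line. After multiplying by $p(2p-1)^2/(1-p)>0$, you must show $(q-2)^2p(1-p)>(q-1)(2p-1)^2$. The difference of the two sides factors as $-(qp-1)(qp-q+1)$, which is strictly positive for $1/q<p<1-1/q$, and this range contains $(1/2,1-1/q)$. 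No calculus on the ratio is needed.
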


\subsection{Block error probability bound}

The values $\mu_q(n,t,w)$ play a role
in the bound on the error probability on the
symmetric channel due to Poltyrev~\cite{poltyrev2002bounds}.
A convenient version of this bound is
given by Pathegama and Barg
in~\cite[Proposition 5]{pathegama2023smoothing}. 
Here we give a version adapted to our purposes.
The proof is deferred to \Cref{apx: A1}.

\begin{proposition}
\label{prop:poltyrev}
    Let $\cC\subseteq\Sigma^n$ be a $q$-ary
    code with weight
    distribution $(A_w(x))_x$,
    $0<p\le 1-1/q$, $1\le w_0\le n$,
    and $t=p n+n^{\theta}$
    (not necessarily an integer), where $\theta\in(1/2,1)$. Then, 
    for every $x\in\cC$,
    \begin{align}
    \label{eq:22}
        P_{B}(\cC,\qSC_p, x)&\le 2\exp\left(-2n^{2\theta-1}\right)+\sum_{w=1}^{w_0}A_w(x)Z(\qSC_p)^w\\
        &\qquad+\left(\frac{(1-p)(q-1)}{p}\right)^{n^\theta}q^{-nH_q(p)}\sum_{w=w_0+1}^{n}A_w(x)\mu_q(n,t,w)\;.
    \label{eq:23}
    \end{align}
\end{proposition}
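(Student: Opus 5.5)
We prove the bound pointwise for a fixed transmitted codeword $x\in\cC$. Let $Y$ be the output of $\qSC_p$ on input $x$, and write $E=\Delta(x,Y)$. The number of symbol errors $E$ is a sum of $n$ independent Bernoulli$(p)$ variables. We split the MAP error event according to whether $E>t$ or $E\le t$, with $t=pn+n^\theta$.

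\textbf{The event $E>t$.} By Hoeffding's inequality,
\[
\Pr[E>pn+n^\theta]\le \exp\bigl(-2n^{2\theta-1}\bigr),
\]
which is at most the first term of~\eqref{eq:22}. The factor $2$ leaves room for a two-sided bound if one is needed later.

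\textbf{The event $E\le t$.} On this event the MAP decoder errs only if some codeword $z\ne x$ satisfies $\Delta(Y,z)\le\Delta(Y,x)$; MAP on $\qSC_p$ is minimum-distance decoding, with ties counted as errors. Taking a union bound over $z$ and splitting by $w=\Delta(x,z)$ gives two groups of terms.

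For $w\le w_0$, we drop the restriction $E\le t$. The pairwise error probability $\Pr[\Delta(Y,z)\le\Delta(Y,x)]$ is bounded by the Bhattacharyya bound $\prod_i Z_i$. Only the $w$ coordinates where $x_i\ne z_i$ contribute, and each contributes at most $Z(\qSC_p)$, giving the term $Z(\qSC_p)^w$. Summing over the $A_w(x)$ such codewords yields $\sum_{w\le w_0}A_w(x)Z(\qSC_p)^w$.

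For $w>w_0$, we keep the restriction. The event $\{E\le t,\ \Delta(Y,z)\le \Delta(Y,x)\}$ forces $Y\in B(x,t)\cap B(z,t)$. Every $y$ in this intersection has $\Pr[Y=y]=(p/(q-1))^{e}(1-p)^{n-e}$ with $e=\Delta(x,y)$. Since $p\le 1-1/q$, we have $p/(q-1)\le 1-p$, so this probability is decreasing in $e$. Over $e\le pn+n^\theta$ it is therefore maximized near the boundary, where it equals
\[
(1-p)^n\Bigl(\tfrac{p}{(1-p)(q-1)}\Bigr)^{e}\le (1-p)^n\Bigl(\tfrac{p}{(1-p)(q-1)}\Bigr)^{pn}\Bigl(\tfrac{(1-p)(q-1)}{p}\Bigr)^{n^\theta}\cdot(\text{correction}).
\]
Here we must be careful about the direction. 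Since the probability decreases in $e$, the maximum over the intersection is attained at the smallest attainable $e$, not at the boundary. This step needs more care, and the fix is to refine the split. For $y$ with $e\ge pn-n^\theta$ we get $\Pr[Y=y]\le q^{-nH_q(p)}\bigl(\frac{(1-p)(q-1)}{p}\bigr)^{n^\theta}$, using the identity
\[
(1-p)^{n-pn}\Bigl(\tfrac{p}{q-1}\Bigr)^{pn}=q^{-nH_q(p)}.
\]
Summing over the at most $\mu_q(n,t,w)$ such $y$ gives the third term. The remaining low-weight event $E<pn-n^\theta$ is handled by Hoeffding, which contributes another $\exp(-2n^{2\theta-1})$ and explains the factor $2$ in the first term.

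I expect this step to be the main obstacle: getting the tail and typicality bookkeeping exactly right so that the two Hoeffding tails give the factor $2$ and the exponent correction is exactly $\bigl(\frac{(1-p)(q-1)}{p}\bigr)^{n^\theta}$. The boundary cases $p=1-1/q$, where this ratio equals $1$, and non-integer $t$ are routine.
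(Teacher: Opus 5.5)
Your final argument is correct and is essentially the paper's proof: the paper also removes $\wt(Z)\notin(pn-n^\theta,t)$ with a two-sided Hoeffding bound (hence the factor $2$). It then handles $w\le w_0$ by the Bhattacharyya bound, and for $w>w_0$ bounds each point of $B(x,t)\cap B(z,t)$ in the typical window by $q^{-nH_q(p)}\bigl(\frac{(1-p)(q-1)}{p}\bigr)^{n^\theta}$ before summing to get $\sum_{w>w_0}A_w(x)\mu_q(n,t,w)$. When writing it up, delete the first ``boundary'' display, whose direction is wrong as you noticed, and set up the two-sided window $pn-n^\theta\le E\le t$ from the start.
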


\Cref{prop:poltyrev}
and \Cref{lem:qary_balls_intersection} lead
to an asymptotic bound on the error
probability that we will apply several
times later on.

\begin{proposition}
\label{prop:qsc-vanishing-pb}
Let $\{\cC_n\}_n$ be a family of $q$-ary codes
with distance $d(\cC_n)\ge\omega(\log(nL))$
for some $L=L(n)$.
Let $\alpha>0$ and $0<p<1-1/q$.

Assume that $A_w(x)\le L(n)$ for every $x\in\cC_n$ and
$w<\alpha n$. Furthermore, assume that
for $\alpha n\le w=\gamma n\le \min(2pn+n^{3/4}, n)$ it holds
\begin{align}
\label{eq:24}
\frac{1}{n}\log_q A_{\gamma n}(x)\le G(\gamma)+o(1)
\end{align}
for some continuous function $G(\gamma)$,
and where $o(1)$ denotes a function that goes to $0$ as
$n$ grows, uniformly for all $w$ and $x\in\cC_n$.

If $F^{(q)}(\gamma, p)>G(\gamma)$ for every
$\alpha\le \gamma\le \min(2p, 1)$, then
\begin{align*}
\lim_{n\to\infty} P_B(\cC_n, \qSC_p)=0\;.
\end{align*}
\end{proposition}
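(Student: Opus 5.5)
We apply \Cref{prop:poltyrev} with $\theta=3/4$, so $t=pn+n^{3/4}$, and with $w_0\coloneqq\lfloor\alpha n\rfloor-1$, fixing a codeword $x\in\cC_n$ (all bounds below are uniform in $x$). The first term $2\exp(-2n^{1/2})$ vanishes trivially. This leaves two terms to control: the low-weight union bound, and the Poltyrev tail over $w\ge\alpha n$.

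\textbf{Low-weight term.} We have $A_w(x)=0$ for $1\le w<d(\cC_n)$, and $A_w(x)\le L$ for $w<\alpha n$. Since $p<1-1/q$, we have $Z\coloneqq Z(\qSC_p)<1$, a constant. Hence
\begin{align*}
\sum_{w=1}^{w_0}A_w(x)Z^w\le nL\,Z^{d(\cC_n)}=\exp\bigl(\ln(nL)-d(\cC_n)\ln(1/Z)\bigr)\to 0,
\end{align*}
using $d(\cC_n)=\omega(\log(nL))$.

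\textbf{High-weight term.} The prefactor $\bigl(\frac{(1-p)(q-1)}{p}\bigr)^{n^{3/4}}$ equals $q^{o(n)}$. For $w>\min(2t,n)$ the summand vanishes, since $\mu_q(n,t,w)=0$ by \Cref{lem:qary_balls_intersection}. For $\alpha n\le w\le\min(2t,n)$, write $\gamma=w/n$ and $p'=t/n=p+n^{-1/4}$. Lemma~\ref{lem:qary_balls_intersection} gives $\mu_q(n,t,w)\le n^2q^{nM^{(q)}(\gamma,p')}$. Note that $p'\le 1-1/q$ for large $n$, because $p<1-1/q$ strictly. Also, $w\le 2t$ is exactly the range $\gamma\le\min(2pn+2n^{3/4},n)/n$. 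I would make sure the hypothesis \eqref{eq:24}, stated for $w\le 2pn+n^{3/4}$, covers this range, handling the slack by a tiny shift of $\theta$ if necessary (for instance $\theta$ slightly below $3/4$, so that $2n^\theta\le n^{3/4}$).

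Combining these with \eqref{eq:24}, each summand of the tail is at most
\begin{align*}
\exp_q\Bigl(n\bigl[G(\gamma)+M^{(q)}(\gamma,p')-H_q(p)+o(1)\bigr]\Bigr).
\end{align*}
By uniform continuity of $M^{(q)}$ (\Cref{lem:M_q_continuity}), we have $M^{(q)}(\gamma,p')=M^{(q)}(\gamma,p)+o(1)$ uniformly in $\gamma$. The only subtlety is that $\gamma$ may slightly exceed $2p$. In that case I compare with $M^{(q)}(\min(\gamma,2p),p)$, again by uniform continuity on the compact domain, and likewise use the continuity of $G$. The exponent then becomes $G(\gamma')-F^{(q)}(\gamma',p)+o(1)$ with $\gamma'\in[\alpha,\min(2p,1)]$.

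\textbf{Compactness gap.} The function $\gamma\mapsto F^{(q)}(\gamma,p)-G(\gamma)$ is continuous and positive on the compact interval $[\alpha,\min(2p,1)]$, so it is bounded below by some $\eta>0$. Therefore every summand is at most $q^{-n\eta/2}$ for large $n$. Summing over at most $n$ values of $w$ and multiplying by the $q^{o(n)}$ prefactor still gives a quantity tending to $0$.

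The main obstacle is the bookkeeping at the boundary $\gamma\approx 2p$. The Poltyrev radius $t$ exceeds $pn$ by $n^{3/4}$, so we must ensure that \eqref{eq:24} is available there and that passing from $(\gamma,p')$ to points in the domain of $M^{(q)}$ costs only $o(1)$ uniformly. Everything else reduces to the cited lemmas.
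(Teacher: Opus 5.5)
Your proposal is correct and follows the paper's proof essentially step by step: the same Poltyrev split, the bound $nL\,Z(\qSC_p)^{d(\cC_n)}\to 0$ for low weights, uniform continuity of $M^{(q)}$ plus compactness for the tail, and the range $2p<\gamma\le 2t/n$ handled by comparing to $\gamma=2p$. Your choice of $\theta$ slightly below $3/4$ actually repairs a small mismatch in the paper, which invokes \eqref{eq:24} up to $w\le 2t=2pn+2n^{3/4}$ although it is only assumed up to $2pn+n^{3/4}$; the one slip is $w_0=\lfloor\alpha n\rfloor-1$, which should be $\lceil\alpha n\rceil-1$ as in the paper, since otherwise the term $w=\lfloor\alpha n\rfloor<\alpha n$ falls into the tail, where you only treat $w\ge\alpha n$.
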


\begin{proof}
Let $t\coloneqq pn+n^{3/4}$ and $x\in\cC_n$. 
Below and for the rest of the proof, 
the $o(1)$ and $o(n)$ terms are uniform in $w$ and $x$.
Applying
\Cref{prop:poltyrev} with $w_0=\lceil \alpha n\rceil-1$, 
it holds
\begin{align*}
P_B(\cC_n,\qSC_p,x)
&\le o(1)+\sum_{d(\cC_n)\le w<\alpha n}A_w(x)Z(\qSC_p)^w
+\exp(o(n))q^{-nH_q(p)}\sum_{w\ge\alpha n}A_w(x)\mu_q(n,t,w)\;.
\end{align*}
We will show that each of the two sums above goes to 0,
uniformly in $x$. First, since $p<1-1/q$, it holds
$Z(\qSC_p)<1$. Furthermore, by assumption, we have
$A_w(x)\le L$ for $w<\alpha n$. Hence,
and using $d(\cC_n)\ge\omega(\log(nL))$,
\begin{align*}
\sum_{d(\cC_n)\le w< \alpha n}
A_w(x)Z(\qSC_p)^w\le nL\cdot Z(\qSC_p)^{d(\cC_n)}=o(1)\;.
\end{align*}
We now turn to the second sum.
Let us write $w=\gamma n$
and $p_n\coloneqq t/n=p+n^{-1/4}$. Using the 
assumption~\eqref{eq:24} and \Cref{lem:qary_balls_intersection},
\begin{align*}
\exp(o(n))q^{-nH_q(p)}\sum_{w\ge\alpha n}A_w(x)\mu_q(n,t,w)&\le
\exp(o(n))\sum_{\alpha n\le w\le \min(2t,n)}
\exp_q\left(n\left[
-H_q(p)+G(\gamma)+M^{(q)}(\gamma,p_n)
\right]\right)\\
&\le
\exp_q\left(o(n)-n\left[
\inf_{\alpha\le \gamma\le\min(2p_n,1)}
H_q(p)-G(\gamma)-M^{(q)}(\gamma,p_n)
\right]\right)\;.
\end{align*}
It remains to show that
$H_q(p)-M^{(q)}(\gamma, p_n)-G(\gamma)$ is positive and uniformly
bounded away from $0$.
For $\alpha\le\gamma\le\min(2p, 1)$ this holds
because, by \Cref{lem:M_q_continuity}, we have
$H_q(p)-M^{(q)}(\gamma,p_n)-G(\gamma)
=F^{(q)}(\gamma,p)-G(\gamma)+o(1)$.
But by assumption $F^{(q)}(\gamma, p)-G(\gamma)>0$
is a strictly positive and continuous function
on the compact interval
$\alpha\le \gamma\le \min(2p,1)$, so it holds
$H_q(p)-M^{(q)}(\gamma,p_n)-G(\gamma)\ge \eps>0$
for $n$ large enough.

For $2p<\gamma\le 2p_n$, 
once again by continuity of $M^{(q)}$
and $G$, we have
$H_q(p)-M^{(q)}(\gamma, p_n)-G(\gamma)
=F^{(q)}(2p,p)-G(2p)+o(1)\ge \eps>0$
for large enough $n$. All in all, we showed
that $P_B(\cC_n,\qSC_p,x)\xrightarrow{n\to\infty}0$
uniformly in $x$, therefore also
$P_B(\cC_n,\qSC_p)\xrightarrow{n\to\infty} 0$.
\end{proof}

\subsection{Proof of \Cref{thm:list-vs-qsc-main}}
Let $\{\cC_n\}_n$ be a family of $q$-ary codes satisfying
the assumptions of the theorem. 
Our plan is to apply \Cref{prop:qsc-vanishing-pb}
with the function
$G(\gamma)= F^{(q)}(\gamma, p)$ and $\alpha=p'$.
Indeed, by $(p,L)$-list decodability,
it holds $A_w(x)\le L$ for $w< p'n$. 

On the other hand,
let $t\coloneqq \lfloor p n\rfloor$.
Note that the condition
$d(\cC_n)=\omega(\log(nL))$ ensures that
$L\le\exp(o(n))$.
By \Cref{thm:list-double-counting} applied
for $t_1=t_2=t$, 
and using \Cref{lem:M_q_continuity,lem:nu_t lower bound_general},
for $w=\gamma n\le \min(2t,n)$ it holds\footnote{
For $q=2$ and odd $w$, \eqref{eq:28} is not valid since
$\nu_2(n,t,w)=0$. This can be fixed by considering
$\cC'\subseteq\Sigma^{n+1}$ such that
$\cC\coloneqq\{(x,1):x\in\cC\}$. The code $\cC'$ is list decodable
up to $t$ errors and list size $L$. Therefore, for odd $w\le 2t-1$,
\begin{align*}
    A_w(x)=A'_{w+1}(x,0)\le
    \frac{\binom{n+1}{t}L}{\nu_2(n+1,t,w+1)}
    \le \exp_2\left((n+1)\left[
    h\left(\frac{t}{n+1}\right)+o(1)-M\left(\frac{w+1}{n+1},\frac{t}{n+1}\right)
    \right]\right)\;,
\end{align*}
which is also uniformly bounded by $\exp_2(nF(\gamma,p)+o(n))$.
}
\begin{align}
\label{eq:28}
A_w(x)\le\frac{\binom{n}{t}(q-1)^t L}{\nu_q(n,t,w)}
\le \exp_q\left(n\left(H_q(t/n)+o(1)-M^{(q)}(\gamma, t/n)\right)\right)
\le \exp_q\left(n\left(F^{(q)}(\gamma, p)+o(1)\right)\right)
\;,
\end{align}
so indeed for $w=\gamma n\le 2p'n+n^{3/4}$
it holds $w\le 2t$ and
\begin{align*}
\frac{1}{n}\log_q A_{\gamma n}(x)\le F^{(q)}(\gamma, p)+o(1)=G(\gamma)+o(1)\;,
\end{align*}
uniformly in $x$ and $\gamma$.
Finally, by \Cref{lem:fq-decreasing} it holds
that for $p'\le\gamma\le \min(2p',1)$ we have
\begin{align*}
    F^{(q)}(\gamma, p')-G(\gamma)
    =F^{(q)}(\gamma,p')-F^{(q)}(\gamma,p)>0\;.
\end{align*}
Therefore, by \Cref{prop:qsc-vanishing-pb}
we have $\lim_{n\to\infty}P_B(\cC_n,\qSC_{p'})=0$.
\qed

\section{Symmetric channel decoding
from minimum distance and erasure channel}
\label{sec:samorodnitsky}

\subsection{Weight distribution bounds}

The main technical tool in this section is
a clean bound on the weight distribution of a linear code 
in terms of its conditional entropy on the erasure channel.
The bound is taken from~\cite{abawonse2025generalized} and is a generalization of the binary bound 
from~\cite{hkazla2021codes}.
\begin{theorem}[Theorem 39 in \cite{abawonse2025generalized}]
     \label{prop:weight_distribution_bound}
     Let $n\ge1$, $\cC\subseteq\bF^n_q$ a linear code
     and let $0\le\lambda\le 1$. Let $X$ be a uniformly chosen codeword in $\cC$ and $Y$ be the output of transmitting $X$ over the channel $\qEC_\lambda$. Then, the weight distribution of $\cC$ satisfies
     \begin{align*}
         \sum_{i=0}^nA_i\left(\frac{q^\lambda-1}{q-1}\right)^i\le 2^{H(X|Y)}\;,
     \end{align*}
     where $H(X|Y)$ is the conditional Shannon entropy 
     (with base-2 logarithm) of 
     $X$ given $Y$.
\end{theorem}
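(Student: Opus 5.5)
The plan is to derive the bound from a Samorodnitsky-type inequality for the noise operator on $\mathbb{F}_q^n$. Concretely, I would apply it to a function whose Fourier transform is the indicator of $\cC$. The left-hand side will become a squared $L_2$ norm of the noisy version of this function. The right-hand side will become an average, over random coordinate sets, of $L_2$ norms of conditional expectations. These turn out to count codewords supported on a random set, which is exactly what $H(X|Y)$ measures for the erasure channel.

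\emph{Step 1 (entropy on the erasure channel).} Let $E\subseteq[n]$ be the random set of erased coordinates, with each coordinate erased independently with probability $\lambda$. Given $E$ and the unerased values $x_{[n]\setminus E}$, the posterior on $X$ is uniform on a coset of the subcode
\[
\cC_E\coloneqq\{c\in\cC:\mathrm{supp}(c)\subseteq E\}.
\]
Therefore $H(X|Y)=\E_E\left[\log_2|\cC_E|\right]$, and it suffices to show
\[
\sum_{c\in\cC}\rho^{2\wt(c)}\le 2^{\E_E\log_2|\cC_E|},\qquad \rho^2\coloneqq\frac{q^\lambda-1}{q-1}.
\]
Note that Jensen's inequality goes the wrong way here, since $\E|\cC_E|=\sum_c\lambda^{\wt(c)}$ upper bounds $2^{\E\log|\cC_E|}$. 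This is why a genuinely stronger, hypercontractive-type inequality is needed.

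\emph{Step 2 (Fourier encoding).} Let $f\colon\mathbb{F}_q^n\to\mathbb{R}$ be the normalized indicator of $\cC^\perp$, scaled so that $\hat f=1_{\cC}$. Let $T_\rho$ be the $q$-ary noise operator, which multiplies the character $\chi_c$ by $\rho^{\wt(c)}$. By Parseval, $\|T_\rho f\|_2^2=\sum_{c\in\cC}\rho^{2\wt(c)}$. For a set $E$, the conditional expectation $\E[f\mid x_E]$ keeps exactly the Fourier coefficients supported in $E$, so $\|\E[f\mid x_E]\|_2^2=|\cC_E|$.

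\emph{Step 3 (Samorodnitsky's inequality).} The required statement is that for nonnegative $f$ and $E$ a $\lambda$-random subset,
\[
\log_2\|T_\rho f\|_2^2\le \E_E\log_2\|\E[f\mid x_E]\|_2^2,
\]
whenever $\rho^2=\frac{q^\lambda-1}{q-1}$. Combined with Steps 1 and 2, this gives the theorem.

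I would prove this inequality by induction on $n$, tensorizing one coordinate at a time via a chain rule for the relevant log-norm/entropy functional. This reduces it to a single-coordinate inequality on $\mathbb{F}_q$. That base inequality is an interpolation between the case of an erased coordinate (factor $q$) and a kept coordinate (factor $1$). Convexity of $\lambda\mapsto\log$ of the norm produces the relation $1+(q-1)\rho^2=q^\lambda$, which is precisely the stated value of $\rho^2$.

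The main obstacle is this tensorization step together with the $q$-ary base case. In the binary case it is Samorodnitsky's argument. For general $q$ one must check that the one-dimensional inequality holds with the constant $\frac{q^\lambda-1}{q-1}$ and that the induction goes through for nonnegative functions. I would first attempt the base case via Riesz--Thorin/log-convexity of $p\mapsto\|\cdot\|_p$ combined with the standard $q$-ary hypercontractive estimate. Only if that fails would I fall back to a direct entropy computation.
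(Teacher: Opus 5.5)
Your Steps 1 and 2 are correct. $H(X|Y)=\E_E\log_2|\cC_E|$, and $f=|\cC|\,1_{\cC^\perp}$ has $\hat f=1_{\cC}$, so the theorem is exactly your Step 3 applied to this $f$. The paper gives no proof of its own: it imports the statement from \cite{abawonse2025generalized}, where it follows from a $q$-ary Samorodnitsky-type inequality, so your reduction matches the intended route.

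The gap is Step 3, which carries all the content, and your plan for it would not give the stated constant. First, the constant has no slack: for $\cC=\mathbb{F}_q^n$ both sides equal $q^{\lambda n}$, so the one-coordinate inequality must be sharp. Riesz--Thorin plus hypercontractivity is not sharp. Already for $q=2$, interpolating $\|T_\rho f\|_2\le\|f\|_{1+\rho^2}$ between $L_1$ and $L_2$ gives exponent $\frac{2\rho^2}{1+\rho^2}$, which exceeds $\log_2(1+\rho^2)$. This yields only $\sum_i A_i\bigl(\frac{\lambda}{2-\lambda}\bigr)^i\le 2^{H(X|Y)}$, strictly weaker since $\frac{\lambda}{2-\lambda}<2^\lambda-1$ on $(0,1)$.

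The correct base case is elementary and uses nonnegativity. For $h\ge 0$ on $\Sigma$ we have $\|T_\rho h\|_2^2=(1-\rho^2)u+\rho^2 v$, where $u=(\E h)^2\le v=\E h^2\le q u$. When $1+(q-1)\rho^2=q^\lambda$, the line $1-\rho^2+\rho^2 t$ is the chord of the concave map $t\mapsto t^\lambda$ between $t=1$ and $t=q$. Hence $\|T_\rho h\|_2^2\le u^{1-\lambda}v^{\lambda}$; your ``kept versus erased coordinate'' remark is exactly this chord.

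Second, what tensorizes here is not a chain rule but H\"older's inequality over fibres, and an induction over scalar functions does not close. Apply the base case in the last coordinate on each fibre $x'\in\Sigma^{n-1}$ and combine the fibres by H\"older:
\[
\|T_\rho^{\otimes n}f\|_2^2\le\bigl\|T_\rho^{\otimes(n-1)}\E[f\mid x_{[n-1]}]\bigr\|_2^{2(1-\lambda)}\,\bigl\|(T_\rho^{\otimes(n-1)}\otimes I)f\bigr\|_2^{2\lambda}.
\]
The second factor is a function of $n$ variables with one noiseless coordinate, which a scalar hypothesis on $\Sigma^{n-1}$ does not cover.

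The missing idea is to prove a stronger statement. Take nonnegative $F$ on $\Sigma^n\times\Omega$, with $\Omega$ an arbitrary finite probability space, let the noise act only on $\Sigma^n$, and always include $\omega$ in the conditioning. The base case survives because $\E_a F(a,\omega)^2\le q\,(\E_a F(a,\omega))^2$ holds for each $\omega$. In the analogue of the display, the two factors are instances on $\Sigma^{n-1}$ with extra spaces $\Omega$ and $\Sigma\times\Omega$ respectively, so the induction hypothesis applies to both. Averaging over whether $n\in E$ then gives $\E_E\log\|\E[F\mid x_E,\omega]\|_2^2$. Taking $\Omega$ trivial recovers your Step 3.
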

One corollary of \Cref{prop:weight_distribution_bound} is a bound on the block error probability of decoding $\cC$ on a variety of channels.
\begin{corollary}[Corollary 44 in \cite{abawonse2025generalized}]
    \label{cor:block_error_bound_W}
    Let $n\ge1$, $0<\lambda\le 1$ and  $\cC\subseteq\bF^n_q$ a linear code with minimum distance $d$. Let $X$ be a uniformly chosen codeword in $\cC$ and $Y$ be the output of transmitting $X$ over the channel $\qEC_\lambda$. Let $W:\mathbb{F}_q\to\mathcal{Y}$ be a channel, and assume that $c\coloneqq Z(W)\frac{q-1}{q^{\lambda}-1}<1$. Then, it holds,
    \begin{align*}
        P_B(\cC,W)\le\frac{c^d}{1-c}\cdot2^{H(X|Y)}\;.
    \end{align*}
\end{corollary}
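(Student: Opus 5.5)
The statement to prove is \Cref{cor:block_error_bound_W}. I would derive it from \Cref{prop:weight_distribution_bound} combined with the union (Bhattacharyya) bound for MAP decoding.

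Since $\cC$ is linear and the channel is memoryless, the block error probability does not depend on the transmitted codeword. So I would fix $x\in\cC$. MAP decoding can only fail if some other codeword $x'\ne x$ has posterior at least that of $x$. Bounding the indicator of this event by $\sqrt{P(y|x')/P(y|x)}$ and applying the union bound gives
\begin{align*}
P_B(\cC,W,x)\le\sum_{x'\in\cC\setminus\{x\}}\prod_{i:x_i\ne x'_i}\E\left[\sqrt{\frac{W(Y_i|x'_i)}{W(Y_i|x_i)}}\;\Big\vert\;X_i=x_i\right]\le\sum_{w\ge d}A_w Z(W)^w\;.
\end{align*}
Coordinates where $x_i=x'_i$ contribute a factor of $1$. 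Each differing coordinate contributes at most $Z(W)$, by the definition of $Z(W)$ as a maximum over pairs of distinct symbols. Here I use that, with a uniform input, the posterior ratio equals the likelihood ratio. The number of $x'$ at distance $w$ is $A_w(x)=A_w$ by linearity, and there are none with $0<w<d$.

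Next, write $a\coloneqq\frac{q^\lambda-1}{q-1}$, so that $Z(W)=ca$. For $w\ge d$,
\begin{align*}
A_w Z(W)^w=A_w a^w c^w\le c^{w}\cdot 2^{H(X|Y)}\;,
\end{align*}
because each individual term $A_wa^w$ is at most the full sum $\sum_i A_ia^i\le 2^{H(X|Y)}$ by \Cref{prop:weight_distribution_bound}. Summing this geometric series over $w\ge d$ with $c<1$ gives $\frac{c^d}{1-c}2^{H(X|Y)}$. Taking the maximum over $x$ finishes the proof.

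The only delicate point is the justification of the Bhattacharyya step for a general finite-output channel and for arbitrary tie-breaking. Ties are harmless because the bound counts the event "posterior of $x'\ge$ posterior of $x$", which includes ties. I expect no real obstacle beyond this.
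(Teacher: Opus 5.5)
Your proposal is correct and matches the argument behind the cited corollary; the paper itself only imports it from \cite{abawonse2025generalized}, noting that it follows from \Cref{prop:weight_distribution_bound} via the Bhattacharyya union bound. That argument bounds $P_B(\cC,W,x)\le\sum_{w\ge d}A_wZ(W)^w$, bounds each term $A_w\bigl(\frac{q^\lambda-1}{q-1}\bigr)^w$ by $2^{H(X|Y)}$, and sums the geometric series, exactly as you do.

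The only inaccuracy is your opening remark that $P_B(\cC,W,x)$ does not depend on $x$. This fails for non-symmetric $W$ (e.g.\ a Z-channel), but you never use it, since your bound is uniform in $x$ and you take the maximum at the end.
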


Our proofs apply~\Cref{prop:weight_distribution_bound} in a
black box fashion, and our results give a generalization 
of \Cref{cor:block_error_bound_W}.

\subsection{Further properties of
$M^{(q)}$ and $F^{(q)}$}

The function $M^{(q)}(\gamma,\delta)$
has more properties that will be useful
in this section. First, it is concave
as a function of $\gamma$.

\begin{lemma}
\label{clm:M_gamma_concavity}
For fixed $q\ge 2$ and $0\le p\le 1-1/q$,
the function $\gamma\mapsto - M^{(q)}(\gamma,p)$ is a convex function on $[0,\min(2p,1)]$.
\end{lemma}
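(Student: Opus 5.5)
The plan is to rewrite $M^{(q)}(\gamma,p)$ as the supremum of a jointly concave function over a jointly convex set, using the perspective transformation. Concavity of $\gamma\mapsto M^{(q)}(\gamma,p)$ then follows from the standard fact that partial maximization preserves concavity.

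\textbf{Case $q\ge 3$.} Fix $p$ and introduce the new variables $u\coloneqq \gamma\zeta$ and $v\coloneqq(1-\gamma)\beta$. The constraints $(\zeta,\beta)\in D(\gamma,p)$ become
\begin{align*}
0\le u\le \gamma,\qquad 0\le v\le 1-\gamma,\qquad \frac{u}{2}+v\le p-\frac{\gamma}{2}\;.
\end{align*}
These are all linear in $(\gamma,u,v)$. Hence the set $K\coloneqq\{(\gamma,u,v)\}$ they define is convex. The objective becomes $g(\gamma,u,v)\coloneqq \gamma\tilde H_q(u/\gamma)+(1-\gamma)H_q(v/(1-\gamma))$.

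By \Cref{clm:3}, $\tilde H_q$ is concave, and $H_q$ is concave as well. For a concave $f$ on $[0,1]$, the perspective $(s,x)\mapsto s f(x/s)$ is jointly concave on $\{0<s,\ 0\le x\le s\}$. Consequently $g$ is jointly concave on $K$.

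\textbf{Boundary points.} At $\gamma\in\{0,1\}$ one of the perspectives is evaluated at $s=0$, where the constraints force $x=0$. There we set the value to $0$, which matches the definition of $M^{(q)}$, since the term is multiplied by $\gamma=0$ or $1-\gamma=0$. I will check directly that this extension preserves the concavity inequality. For a convex combination of $(s_1,x_1)$ with $(0,0)$ we get $\theta s_1 f(x_1/s_1)=\theta\cdot s_1f(x_1/s_1)+(1-\theta)\cdot 0$, which is an equality. For combinations of two points with $s>0$, the usual perspective argument applies.

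\textbf{Conclusion of the concavity argument.} Take $\gamma_1,\gamma_2\in[0,\min(2p,1)]$ and $\theta\in[0,1]$. Pick near-optimal feasible $(u_i,v_i)$ for $\gamma_i$; the supremum is attained by compactness and continuity, but near-optimal points suffice. The convex combination $(\theta\gamma_1+(1-\theta)\gamma_2,\ \theta u_1+(1-\theta)u_2,\ \theta v_1+(1-\theta)v_2)$ lies in $K$. By joint concavity,
\begin{align*}
M^{(q)}(\theta\gamma_1+(1-\theta)\gamma_2,p)\ge \theta g(\gamma_1,u_1,v_1)+(1-\theta)g(\gamma_2,u_2,v_2)\;.
\end{align*}
Letting the choices approach optimality gives concavity, i.e.\ convexity of $-M^{(q)}(\cdot,p)$.

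\textbf{Case $q=2$.} Here $M$ has the closed form $\gamma+(1-\gamma)h\bigl(\frac{p-\gamma/2}{1-\gamma}\bigr)$. This is a linear term plus the perspective of the concave function $h$, evaluated at $(s,x)=(1-\gamma,\,p-\gamma/2)$, which is affine in $\gamma$. It is therefore concave. The point $\gamma=1$, $p=1/2$ is covered by the continuous extension $M(1,1/2)=1$, together with the fact that a function on an interval that is concave on the interior and continuous at the endpoint is concave.

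The main obstacle I expect is the bookkeeping at the boundary of the domain. This includes $\gamma\in\{0,1\}$, where the perspectives degenerate, and the possibly empty or degenerate feasible set when $\gamma=2p$. In each of these cases I would verify that the original definition of $M^{(q)}$ agrees with the extended perspective formulation, so that the convex-combination argument is legitimate.
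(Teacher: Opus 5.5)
Your proposal is correct and follows essentially the same route as the paper for $q\ge 3$: the substitution $u=\gamma\zeta$, $v=(1-\gamma)\beta$, joint concavity of the two perspective terms, and preservation of concavity under partial maximization over a convex set. The differences are minor. The paper proves the inequality for $0<\gamma<1$ and then invokes continuity of $M^{(q)}$ at $\gamma\in\{0,1\}$, whereas you extend the perspective to $s=0$ directly. For $q=2$ the paper computes $\partial^2 M/\partial\gamma^2=\frac{(p-1/2)^2}{(1-\gamma)^3}h''(\cdot)\le 0$, which is equivalent to your perspective argument.
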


Second, its partial derivative with respect
to $\gamma$ evaluated at $\gamma=0$
is equal to the logarithm of the Bhattacharayya coefficient
$Z(\qSC_p)$. We will use this property in analyzing
the limiting behavior of $p_*^{(q)}(\lambda,\delta)$ as $\delta\to 0$.

\begin{lemma}  
\label{lem:m-derivative-at-0}
Let $q\ge 2$, $0<p\le 1-\frac{1}{q}$, then it holds
\begin{align*}
     \frac{\partial  M^{(q)}(\gamma,p)}{\partial \gamma}\bigg\rvert_{\gamma=0}&=\log_q Z(\qSC_p)=\log_q\left(\frac{q-2}{q-1}p+2\sqrt{\frac{p(1-p)}{q-1}}\right)\;.
\end{align*}
\end{lemma}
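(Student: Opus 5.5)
The plan is to compute the right derivative at $\gamma=0$ directly from the variational definition of $M^{(q)}$ in \Cref{def:q_ary_functions}. I would prove matching upper and lower bounds of the form $M^{(q)}(\gamma,p)\le H_q(p)+\gamma g^*$ and $M^{(q)}(\gamma,p)\ge H_q(p)+\gamma g^*-O(\gamma^2)$. Here $g^*$ is the value of an explicit one-dimensional optimization in $\zeta$. Since $M^{(q)}(0,p)=H_q(p)$ (at $\gamma=0$ the constraint is $\beta\le p\le 1-1/q$ and $H_q$ is increasing there), the derivative is $g^*$. It then remains to identify $g^*=\log_q Z(\qSC_p)$. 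I treat $q\ge 3$ first and handle $q=2$ separately at the end, since for $q=2$ there is the closed form \eqref{eq:11}.

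For the upper bound, write $H_q'(p)=\log_q\frac{(1-p)(q-1)}{p}\ge 0$, which is valid for $0<p\le 1-1/q$. Take any feasible $(\zeta,\beta)\in D(\gamma,p)$. Concavity of $H_q$ gives $H_q(\beta)\le H_q(p)+H_q'(p)(\beta-p)$. The constraint gives $(1-\gamma)\beta\le p-\gamma\frac{1+\zeta}{2}$, and $H_q'(p)\ge0$. Combining these,
\begin{align*}
\gamma\Htilde_q(\zeta)+(1-\gamma)H_q(\beta)\le H_q(p)+\gamma\, g(\zeta),\qquad g(\zeta)\coloneqq \Htilde_q(\zeta)-H_q(p)+H_q'(p)\Big(p-\tfrac{1+\zeta}{2}\Big).
\end{align*}
Taking the supremum gives $M^{(q)}(\gamma,p)\le H_q(p)+\gamma\sup_{\zeta\in[0,1]}g(\zeta)$ exactly, with no error term.

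For the lower bound, fix a maximizer $\zeta^*$ of $g$; it exists since $g$ is continuous on $[0,1]$. Set $\beta_\gamma\coloneqq (p-\gamma\frac{1+\zeta^*}{2})/(1-\gamma)$. Because $p>0$, this lies in $[0,1]$ for small $\gamma$ and makes the constraint tight, so $(\zeta^*,\beta_\gamma)\in D(\gamma,p)$. Note $\beta_\gamma-p=\gamma(p-\frac{1+\zeta^*}{2})+O(\gamma^2)$. Since $H_q$ is smooth near $p\in(0,1)$, a Taylor expansion gives $(1-\gamma)H_q(\beta_\gamma)=H_q(p)-\gamma H_q(p)+\gamma H_q'(p)(p-\frac{1+\zeta^*}{2})+O(\gamma^2)$. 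Hence $M^{(q)}(\gamma,p)\ge H_q(p)+\gamma g(\zeta^*)-O(\gamma^2)$, and the one-sided derivative at $0$ equals $g^*=g(\zeta^*)$.

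The main step is evaluating $g^*$. Put $r\coloneqq\frac{p}{(1-p)(q-1)}$, so that $H_q'(p)=-\log_q r$ and $H_q(p)-pH_q'(p)=-\log_q(1-p)$. Then
\begin{align*}
g(\zeta)=\log_q(1-p)+\tfrac12\log_q r+\Big[\Htilde_q(\zeta)+\tfrac{\zeta}{2}\log_q r\Big].
\end{align*}
The bracket is the entropy of the distribution with masses $\frac{1-\zeta}{2},\frac{1-\zeta}{2},\frac{\zeta}{q-2},\dots$ plus the expectation of the potential that equals $\log_q\sqrt r$ on the $q-2$ small atoms and $0$ on the two large atoms. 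By the Gibbs variational principle (or by direct differentiation in $\zeta$), its maximum is $\log_q(2+(q-2)\sqrt r)$. Therefore
\begin{align*}
g^*=\log_q\big((1-p)\sqrt r\,(2+(q-2)\sqrt r)\big)=\log_q\Big(2\sqrt{\tfrac{p(1-p)}{q-1}}+\tfrac{q-2}{q-1}p\Big),
\end{align*}
which is $\log_q Z(\qSC_p)$.

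For $q=2$, I would differentiate the closed form $M(\gamma,p)=\gamma+(1-\gamma)h\big(\frac{p-\gamma/2}{1-\gamma}\big)$ directly at $\gamma=0$. This gives $1-h(p)+h'(p)(p-\frac12)=\log_2\big(2\sqrt{p(1-p)}\big)$, which matches the formula with $q=2$. At $p=\frac12$ one has $M\equiv1$ and $Z=1$, which is consistent.
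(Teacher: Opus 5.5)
Your proof is correct. For $q\ge 3$ it takes a different route from the paper; your $q=2$ case is identical to the paper's.

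The paper's route for $q\ge 3$ goes through the ``almost closed form'' (\Cref{lem:almost-closed-form}). It writes $M^{(q)}(\gamma,p)=\gamma\Htilde_q(\zeta(\gamma))+(1-\gamma)H_q(\beta(\zeta(\gamma)))$ along the tight constraint and differentiates by the chain rule, with $\zeta'$ supplied by the implicit function theorem. It eliminates $\zeta'(0)\beta'(\zeta_0)$ via the differentiated constraint, then solves $\beta(\zeta_0)=p$ explicitly and simplifies.

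Your route instead sandwiches $M^{(q)}$. The tangent line of $H_q$ at $p$ gives the exact upper bound $H_q(p)+\gamma\sup_\zeta g(\zeta)$, and an explicit feasible point gives the matching lower bound up to $O(\gamma^2)$. The remaining one-dimensional maximization is then a log-partition (Gibbs) computation.

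The two arguments meet in the middle. The paper's intermediate expression is exactly your $g(\zeta_0)$, and its $\zeta_0$ coincides with your Gibbs maximizer $\zeta^*=\frac{(q-2)\sqrt r}{2+(q-2)\sqrt r}$. So your proof is essentially a rigorous envelope-theorem version of theirs.

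What your version buys:
\begin{itemize}
\item It is self-contained: it needs neither the Lagrange-multiplier characterization of the maximizer nor differentiability of the implicitly defined $\zeta(\gamma)$ near the boundary point $\gamma=0$.
\item Existence of the one-sided derivative falls out of the two-sided bound.
\item It explains conceptually why $Z(\qSC_p)$ appears: it is a partition function $\sum_x q^{\phi(x)}$ scaled by $(1-p)\sqrt r$.
\item As a bonus, your upper bound gives the global inequality $M^{(q)}(\gamma,p)\le H_q(p)+\gamma\log_q Z(\qSC_p)$ for every admissible $\gamma$. In the paper this only follows by combining the lemma with the concavity in \Cref{clm:M_gamma_concavity}.
\end{itemize}

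What the paper's version buys: it reuses machinery already needed for the continuity and monotonicity of $M^{(q)}$ and $F^{(q)}$. Its chain-rule formula also gives $\partial_\gamma M^{(q)}$ at every $\gamma$, not just at $0$.
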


The proofs of the lemmas are deferred to \Cref{sec:m-f-proofs}. We also state a claim
which is easy to check from the definitions.

\begin{claim}
\label{cl:f-is-zero-for-gamma-zero}
For every $q\ge 2$ and $0\le p\le 1-1/q$,
it holds $F^{(q)}(0,p)=0$.
\end{claim}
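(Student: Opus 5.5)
We need to show $M^{(q)}(0,p)=H_q(p)$ for every $q\ge 2$ and $0\le p\le 1-1/q$. The plan is to evaluate $M^{(q)}$ at $\gamma=0$ directly from the definitions and compare with $H_q(p)$. We treat the binary case first and then the general case.

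For $q=2$, \Cref{def:delta-star-binary} gives
\begin{align*}
M(0,p)=0+1\cdot h\left(\frac{p-0}{1}\right)=h(p)\;,
\end{align*}
so $F(0,p)=h(p)-h(p)=0$. The domain condition $0\le\gamma\le 2p$ holds trivially at $\gamma=0$.

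For $q\ge 3$, we use \Cref{def:q_ary_functions} with $\gamma=0$. The constraint defining $D(0,p)$ becomes $\beta\le p$, with $\zeta\in[0,1]$ free. The objective reduces to $H_q(\beta)$, because the term $\gamma\Htilde_q(\zeta)$ vanishes. Hence
\begin{align*}
M^{(q)}(0,p)=\sup_{0\le\beta\le p}H_q(\beta)\;.
\end{align*}
Recall from the preliminaries that $H_q$ is increasing on $[0,1-1/q]$. Since $p\le 1-1/q$, the supremum is attained at $\beta=p$, so $M^{(q)}(0,p)=H_q(p)$ and therefore $F^{(q)}(0,p)=0$.

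There is no real obstacle here. The only points needing care are that the monotonicity of $H_q$ is used within the allowed range $p\le 1-1/q$, and that the set $D(0,p)$ is nonempty, which holds since $(0,0)\in D(0,p)$ for any $p\ge 0$.
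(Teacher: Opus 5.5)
Your proof is correct and is exactly the direct verification from the definitions that the paper intends; the paper states this claim without proof as ``easy to check from the definitions.'' At $\gamma=0$ the binary formula gives $M(0,p)=h(p)$, and for $q\ge 3$ the constraint reduces to $\beta\le p$, so the monotonicity of $H_q$ on $[0,1-1/q]$ yields $M^{(q)}(0,p)=H_q(p)$, just as you argue.
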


\subsection{Proof of \Cref{Thm:1}}

Let $\{C_n\}_n$ be a family of $q$-ary linear
codes with relative distances $\delta$
and vanishing bit error probability
on $\qEC_\lambda$. We wish to apply
\Cref{prop:qsc-vanishing-pb}
for $\alpha=\delta$ and
$G(\gamma)= \gamma\log_q\left(\frac{q-1}{q^{\lambda}-1}\right)$. Since the code
is linear, we can assume without loss of generality that the transmitted codeword is $x=0^n$ and bound 
$A_w=A_w(0^n)$.

Indeed, for $w<\delta n$ it holds
$A_w=0$. For $w\ge\delta n$, by
\Cref{prop:weight_distribution_bound},
\begin{align*}
\frac{1}{n}\log_q A_{\gamma n}
\le \gamma \log_q\left(\frac{q-1}{q^{\lambda}-1}\right)+\log_q 2\cdot \frac{H(X|Y)}{n}\le
G(\gamma)+o(1)\;,
\end{align*}
where the last step follows since
on the erasure channel the
vanishing bit error probability
implies that $H(X|Y)=o(n)$
(see \Cref{cl:pbit-implies-entropy}).

Let $p_*\coloneqq p^{(q)}_*(\lambda,\delta)$
and recall that $p<p_*$. If $p<\delta/2$,
then the code can correct $p$ fraction of
errors in the worst case, so
$P_B(\cC_n,\qSC_p)\to 0$ follows
by a standard argument without any assumptions
on the erasure channel performance.
Otherwise $F^{(q)}(\delta,p)$ is well defined
and, by the definition
of $p_*$, it holds $F^{(q)}(\delta, p)>G(\delta)$. However, by~\Cref{clm:M_gamma_concavity}, the function
$\gamma\mapsto -M^{(q)}(\gamma,p)$ is
convex, and therefore also
\begin{align*}
W(\gamma)\coloneqq
F^{(q)}(\gamma,p)-G(\gamma)
=H_q(p)-M^{(q)}(\gamma, p)-G(\gamma)
\end{align*}
is convex as a sum convex functions.
Since $W(0)=0$ by \Cref{cl:f-is-zero-for-gamma-zero} and we just established $W(\delta)>0$,
by convexity it also holds $W(\gamma)>0$
for $\gamma\ge\delta$. Therefore,
the assumptions of \Cref{prop:qsc-vanishing-pb}
hold and it follows that
$\lim_{n\to\infty}P_B(\cC_n,\qSC_p)=0$.
\qed

\subsection{Illustration and limiting behavior for $\delta\to 0$}

The following  
\Cref{fig:function_F},
\Cref{fig:delta_for_fixed_alpha},
\Cref{fig:delta_for_fixed_lambda},
and \Cref{fig:q_ary_delta_for_fixed_lambda}
illustrate various aspects of
functions $F^{(q)}(\gamma, p)$ and
$p_*^{(q)}(\lambda,\delta)$.

\begin{figure}[ht]
    \centering
    \includegraphics[width=0.7
    \linewidth]{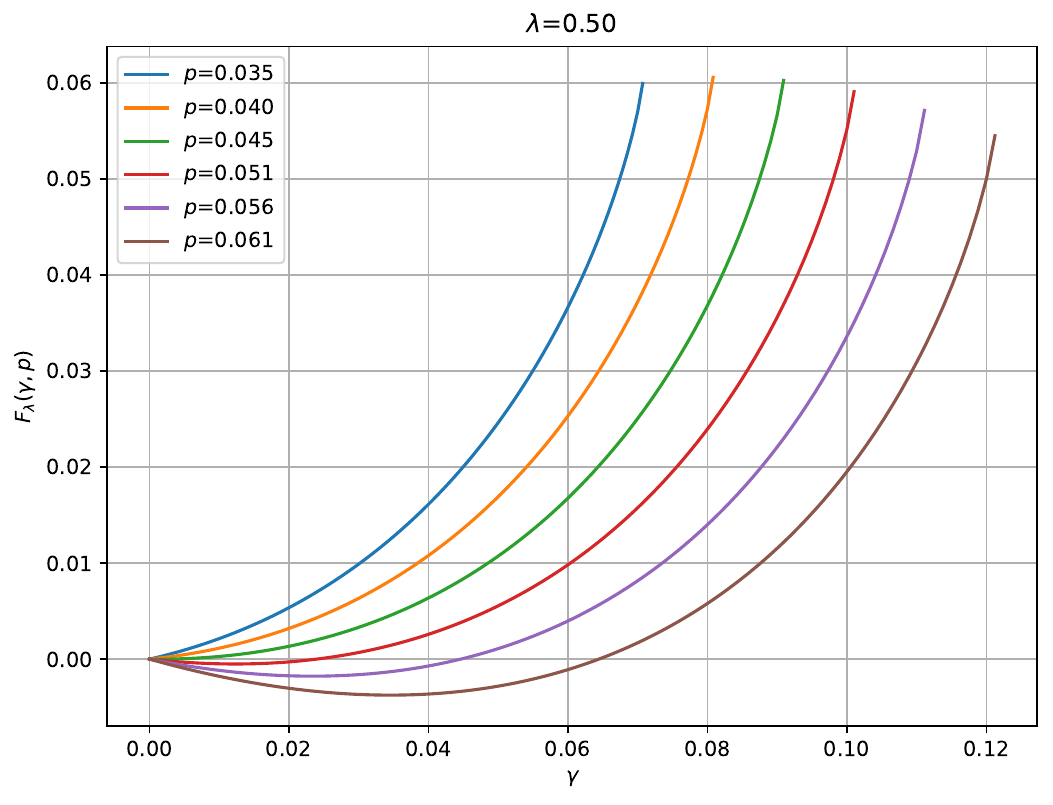}
    \caption{
    Binary case $q=2$, plot of
    $F_\lambda(\gamma, p)\coloneqq F(\gamma,p)+\gamma\log\left(2^\lambda-1\right)$
    as a function of $\gamma$,
    for $\lambda=0.5$ and some values of
    $p$. Our results imply vanishing
    error probability on $\BSC_p$
    whenever  $F_\lambda(\delta, p)>0$,
    assuming relative distance $\delta$
    and vanishing error probability
    on $\BEC_\lambda$.
    }
    \label{fig:function_F}
\end{figure}

\begin{figure}[ht]
    \centering
    \includegraphics[width=0.7\linewidth]{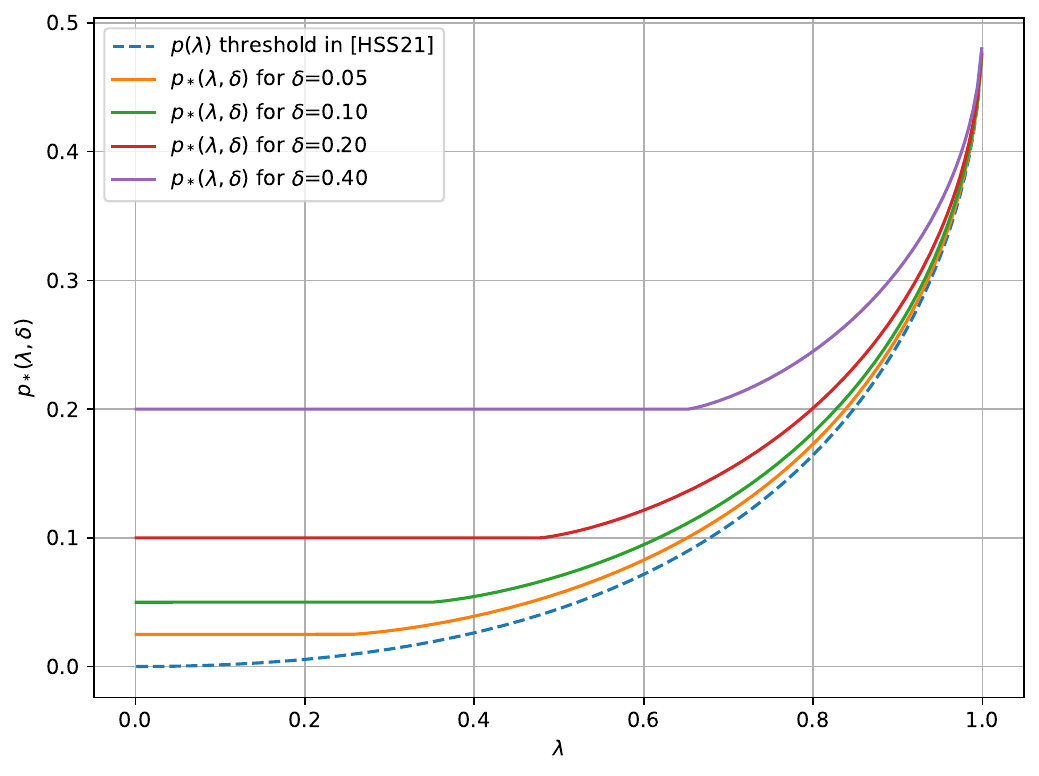}
    \caption{
    Binary case $q=2$, plot of
    $p_*(\lambda,\delta)$ as a function of the BEC erasure probability $\lambda$, for fixed relative distances $\delta\in\{0.05, 0.1, 0.2,0.4\}$. For comparison, we plot the threshold $p(\lambda)=\frac{1}{2}-\sqrt{2^{\lambda-1}(1-2^{\lambda-1})}$
    from~\cite{hkazla2021codes} which
    does not require the assumption of $\delta>0$. In the intervals where the graphs are constant,
    our bound matches the trivial
    bound $p_*(\lambda,\delta)\ge \delta/2$.}
    \label{fig:delta_for_fixed_alpha}
\end{figure}

\begin{figure}[!ht]
    \centering
    \includegraphics[width=0.7\linewidth]{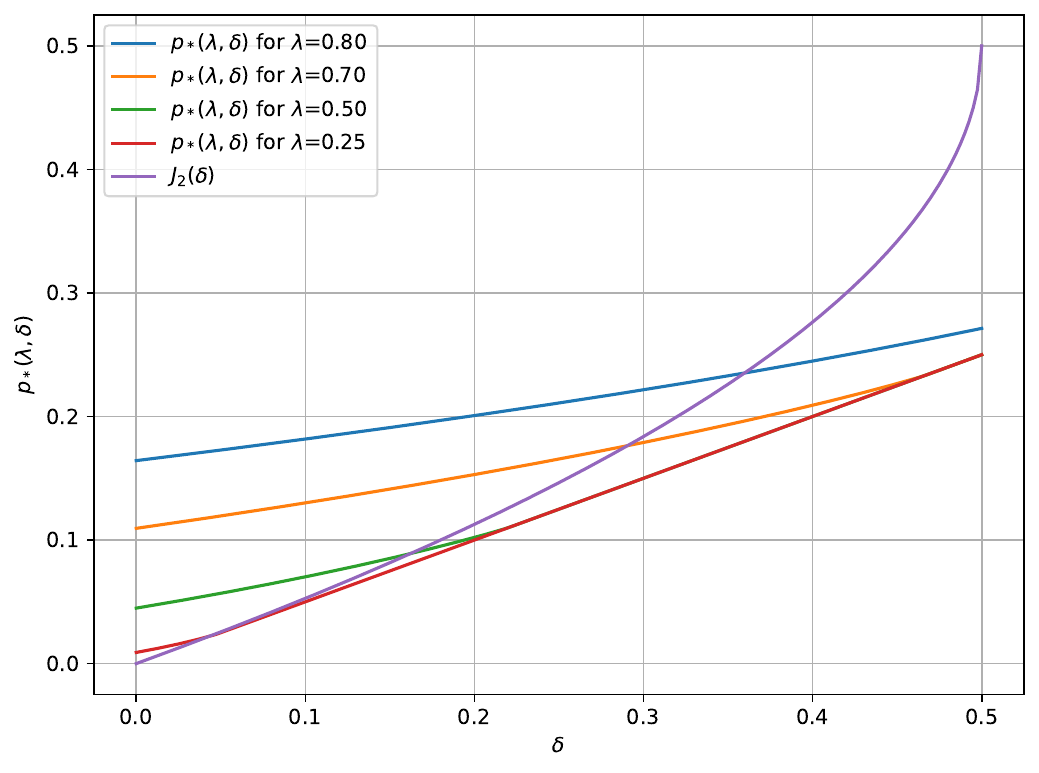}
    \caption{
    Binary case $q=2$,
    $p_*(\lambda,\delta)$ as a function of the relative distance $\delta$, for fixed erasure probabilities $\lambda\in\{0.25, 0.5, 0.7, 0.8\}$. 
    Since a code family with relative distance $\delta$
    has list decoding radius $J_2(\delta)$,
    by \Cref{thm:list-vs-qsc-main}
    it also has vanishing symmetric channel error probability for $p<J_2(\delta)$. Our bounds improve
    upon this result under the additional BEC assumption whenever $p_*(\lambda,\delta)>J_2(\delta)$.
    }
    \label{fig:delta_for_fixed_lambda}
\end{figure}

\Cref{Thm:1} establishes that 
linear codes with relative
distance $\delta$ which perform well on the erasure channel $\qEC_\lambda$ also achieve reliable decoding on the symmetric channel $\qSC_p$, for all $p<p_*^{(q)}(\lambda,\delta)$. 
On the other hand, applying \Cref{thm:original}, a general linear code
that decodes errors on $\qEC_\lambda$ also decodes errors
on $\qSC_p$ as long as
\begin{align}
\label{eq:13}
Z(\qSC
_p)=\frac{q-2}{q-1}p+2\sqrt{\frac{(1-p)p}{q-1}}<\frac{q^{\lambda}-1}{q-1}\;.
\end{align}
\Cref{Thm:1} refines this picture under the additional assumption that the code family has linear minimum distance, i.e. $\delta(\cC_n)\ge\delta$.
We will now argue that the threshold $p^{(q)}_*(\lambda,\delta)$ is increasing in $\delta$, and in the limit $\delta\to0^+$ matches the earlier threshold of~\eqref{eq:13}. 

\begin{lemma}
\label{lem:delta_limit}
Let $q\ge 2$ and $0<\lambda\le1$. Then, $p_*^{(q)}(\lambda,\delta)$ is 
increasing 
as a function of $\delta>0$. Furthermore,
\begin{align*}
    p_*^{(q)}(\lambda)\coloneqq \lim_{\delta \to 0^+}p_*^{(q)}(\lambda,\delta)
\end{align*}
satisfies $Z\left(\qSC_{p_*^{(q)}(\lambda)}\right)=\frac{q^\lambda-1}{q-1}$,
in particular
$p_*^{(2)}(\lambda)=\frac{1}{2}-\sqrt{2^{\lambda-1}(1-2^{\lambda-1})}$.
\end{lemma}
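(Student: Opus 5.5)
The plan is to rewrite the defining condition of $p_*^{(q)}(\lambda,\delta)$ as a condition on the slope of a chord. Set $c_\lambda\coloneqq\log_q\frac{q-1}{q^\lambda-1}\ge 0$ and, for fixed $p$, set $\phi_p(\gamma)\coloneqq F^{(q)}(\gamma,p)=H_q(p)-M^{(q)}(\gamma,p)$ on $[0,\min(2p,1)]$. By \Cref{cl:f-is-zero-for-gamma-zero} we have $\phi_p(0)=0$, and by \Cref{clm:M_gamma_concavity} $\phi_p$ is convex. Hence the slope $\phi_p(\delta)/\delta$ is nondecreasing in $\delta$. The condition $F^{(q)}(\delta,p)\le\delta c_\lambda$ then reads $\phi_p(\delta)/\delta\le c_\lambda$. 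By \Cref{lem:fq-decreasing} the set of admissible $p$ is an up-interval $[p_*,1-1/q]$ (closed by continuity, \Cref{lem:M_q_continuity}). So $p<p_*^{(q)}(\lambda,\delta)$ holds exactly when $p<\delta/2$ or $\phi_p(\delta)>\delta c_\lambda$.

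\emph{Monotonicity.} Take $\delta<\delta'$ and $p<p_*^{(q)}(\lambda,\delta)$; I will show $p<p_*^{(q)}(\lambda,\delta')$.
\begin{itemize}
\item If $p<\delta/2$, then $p<\delta'/2$, and we are done.
\item Otherwise $\phi_p(\delta)/\delta>c_\lambda$. If $\delta'\le 2p$, monotonicity of chord slopes gives $\phi_p(\delta')/\delta'>c_\lambda$. If $\delta'>2p$, then $p<\delta'/2$.
\end{itemize}
In both cases $p<p_*^{(q)}(\lambda,\delta')$, so $p_*^{(q)}$ is nondecreasing in $\delta$.

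\emph{The limit.} Since $\phi_p$ is convex with $\phi_p(0)=0$, we have $\lim_{\delta\to0^+}\phi_p(\delta)/\delta=\partial_\gamma\phi_p(0)=-\log_q Z(\qSC_p)$ by \Cref{lem:m-derivative-at-0}. Define $p_0$ by $Z(\qSC_{p_0})=\frac{q^\lambda-1}{q-1}$. This is well defined because $Z(\qSC_p)$ is continuous and strictly increasing on $[0,1-1/q]$, running from $0$ to $1$. (I would check strict increase directly from the formula, since its derivative $\frac{q-2}{q-1}+\frac{1-2p}{\sqrt{(q-1)p(1-p)}}$ is positive.)

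For $p<p_0$ we get $-\log_q Z(\qSC_p)>c_\lambda$. Every chord slope is at least the right derivative at $0$, so $\phi_p(\delta)/\delta>c_\lambda$ for all small $\delta\le 2p$. Hence $p_*^{(q)}(\lambda,\delta)\ge p$ for every $\delta$, and by monotonicity the limit is at least $p_0$.

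For $p>p_0$ we get $-\log_q Z(\qSC_p)<c_\lambda$, so $\phi_p(\delta)/\delta<c_\lambda$ for all sufficiently small $\delta$ (with $\delta\le 2p$). For such $\delta$, $p$ is admissible, so $p_*^{(q)}(\lambda,\delta)\le p$. Thus the limit is at most $p_0$, and the limit equals $p_0$.

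\emph{Binary formula.} For $q=2$, $Z(\BSC_p)=2\sqrt{p(1-p)}=2^\lambda-1$ gives $p(1-p)=(2^{\lambda-1}-1/2)^2$. Taking the root below $1/2$ and using $1/4-(2^{\lambda-1}-1/2)^2=2^{\lambda-1}(1-2^{\lambda-1})$ yields $p=\frac12-\sqrt{2^{\lambda-1}(1-2^{\lambda-1})}$.

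\emph{Main obstacle.} The delicate step is the interchange of the limit $\delta\to0$ with the convexity argument. It uses that the right derivative at $0$ exists and equals the value in \Cref{lem:m-derivative-at-0}, and that this holds for each fixed $p$; the argument above uses only pointwise-in-$p$ statements, so no uniformity is needed. Edge cases also need care: when $\lambda=1$, $c_\lambda=0$ and $p_0=1-1/q$, where I would verify the upper bound directly.
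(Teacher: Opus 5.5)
Your proposal is correct and follows essentially the same route as the paper. Both arguments rest on the convexity of $\gamma\mapsto F^{(q)}(\gamma,p)$ with $F^{(q)}(0,p)=0$ (your chord-slope monotonicity) and on the derivative at $0$ from \Cref{lem:m-derivative-at-0}. The differences are cosmetic: your monotonicity argument is direct rather than by contradiction, and you explicitly check that $Z(\qSC_p)$ is strictly increasing and handle the $\lambda=1$ edge case, both of which the paper leaves implicit.
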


\begin{proof}
    Let $q\ge 2$ and $0<\lambda\le 1$. Let
    \begin{align*}
    W(\gamma,p)&\coloneqq F^{(q)}(\gamma, p)+\gamma\log_q\left(\frac{q^{\lambda}-1}{q-1}\right)
    =H_q(p)-M^{(q)}(\gamma,p)+\gamma\log_q\left(\frac{q^{\lambda}-1}{q-1}\right)\;.
    \end{align*}
    Recall that $p_{*}^{(q)}(\lambda,\delta)
    =\inf\{\delta/2\le p\le 1-1/q: W(\delta, p)\le 0\}$. 
    Let $p_0\coloneqq p_{*}^{(q)}(\lambda,\delta_0)$, 
    $p_1\coloneqq p_{*}^{(q)}(\lambda,\delta_1)$
    for some $0<\delta_0<\delta_1\le 1$. Note that $\delta_0/2<\delta_1/2\le p_1$, so the values
    of $F^{(q)}(\delta_0,p_1)$ and $W(\delta_0,p_1)$
    are well defined.

    If $W(\delta_0,p_1)\le 0$, then
    by definition of $p_0$ it holds $p_0\le p_1$.
    On the other hand, let us show that $W(\delta_0,p_1)>0$
    is not possible.
    Note that $\gamma\mapsto W(\gamma,p_1)$ is convex as a sum of 
    $-M^{(q)}(\gamma,p_1)$ (which is convex by \Cref{clm:M_gamma_concavity}) and a linear function.
    Furthermore, $W(0, p_1)=0$, so 
    $W(\delta_0,p_1)>0$ implies
    $W(\delta_1,p_1)> 0$.
    But since $W$ is continuous, that contradicts
    the definition of $p_1$. Therefore,
    the function $\delta\mapsto p_*^{(q)}(\lambda,\delta)$
    is increasing.

    Now let $p_*\coloneqq p_*^{(q)}(\lambda)$ and let $p_0$ satisfy $Z(\qSC_{p_0})=\frac{q^{\lambda}-1}{q-1}$. We are going to
    show that $p_*=p_0$.
    For $p<p_0$, recall from \Cref{lem:m-derivative-at-0} that
    \begin{align*}
    \frac{\partial}{\partial\gamma}W(\gamma,p)\bigg\rvert_{\gamma=0}
    &=-\log_qZ(\qSC_p)+\log_q\left(\frac{q^{\lambda}-1}{q-1}\right)
    =: c>0
    \end{align*}
Since the function $\gamma\mapsto W(\gamma,p)$ is convex, then for all $\delta>0$, it holds $W(\delta,p)\ge c\delta>0$, thus by definition of $p_*^{(q)}(\lambda,\delta)$ and 
\Cref{lem:fq-decreasing}, we have $p_*(\lambda,\delta)\ge p$,
and taking $\delta\to 0$ and then $p\to p_0$ it follows $p_*\ge p$
and $p_*\ge p_0$.

On the other hand, for $p>p_0$, we have 
$\frac{\partial}{\partial\gamma}W(\gamma,p)\bigg\rvert_{\gamma=0}
=:-c<0$.
Therefore, there exists $\delta_0>0$ such that 
$W(\delta,p)\le 0$ for every $0<\delta\le\delta_0$,
which implies
$p_*^{(q)}(\lambda,\delta)\le p$ for $0<\delta\le\delta_0$,
and as before
$p_*\le p$ and $p_*\le p_0$.
\end{proof}

Figures~\ref{fig:delta_for_fixed_alpha} illustrates the improvement of $p^{(2)}_*(\lambda,\delta)$ over the thresholds in~\cite{hkazla2021codes} for different parameter choices.

\begin{figure}[!ht]
    \centering
    \includegraphics[width=0.7\linewidth]{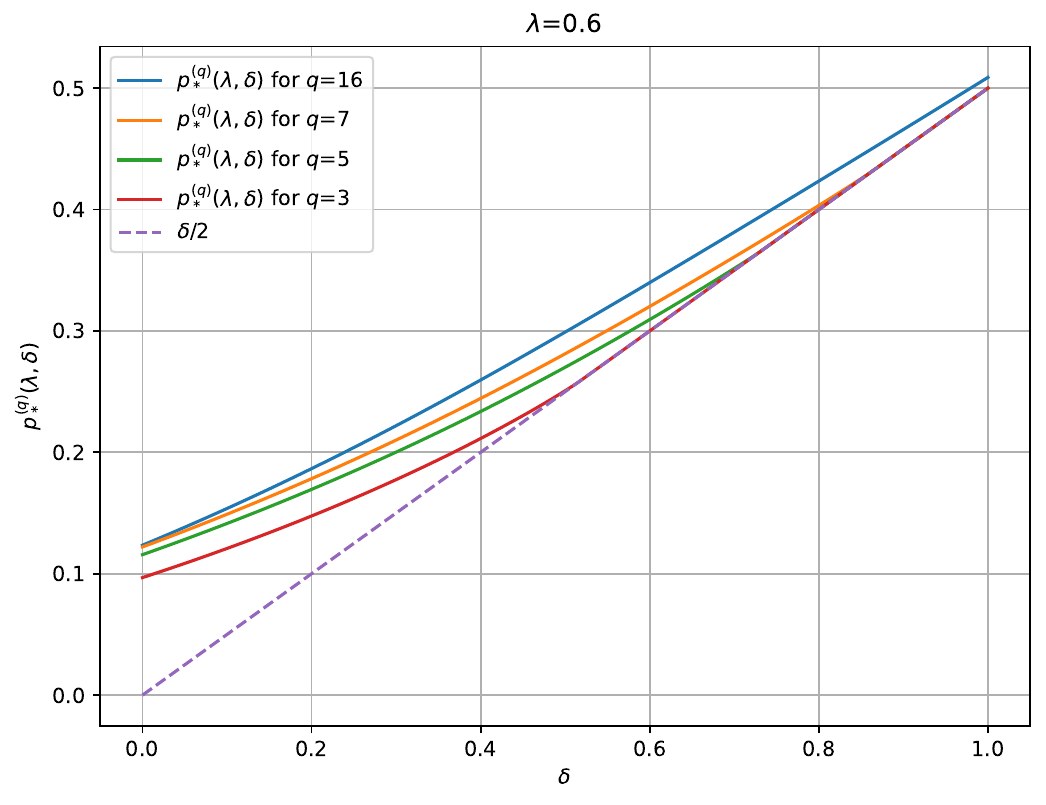}
    \caption{The threshold $p_*^{(q)}(\lambda,\delta)$ as a function of the relative distance $\delta$, for fixed erasure probability $\lambda=0.6$ and $q\in\{3,5,7,16\}$. We also provide the trivial bound $\delta/2$.
    (As in \Cref{fig:delta_for_fixed_lambda}
    our bound can be compared with the Johnson
    radius $J_q(\delta)$. We omit this in the
    interest of visual clarity.)
    }
    \label{fig:q_ary_delta_for_fixed_lambda}
\end{figure}

\subsection{Binary dual bound}
\label{sec:dual}

There is another bound for the weight distribution
in terms of the conditional entropy on the
erasure channel of the \emph{dual code}
$\cC^\perp\coloneqq \{x:\sum_{i=1}^n x_iy_i=0 \ \forall y\in\cC\}$.
A similar result to \Cref{Thm:1} can be obtained
by replacing \Cref{prop:weight_distribution_bound}
with this dual bound. For simplicity,
in this section we restrict ourselves to $q=2$.

\begin{definition}
Let $0\le\lambda\le R< 1$. Let
\begin{align*}
    G^\perp_{\lambda,R}(\gamma)&\coloneqq \begin{cases}R-\lambda-\min(\gamma,1-\gamma)\log(2^{1-\lambda}-1)& \text{if}\quad 0\le \min(\gamma,1-\gamma)<1-2^{\lambda-1},\\
    h(\gamma)-(1-R)&\text{otherwise.}\end{cases}
\end{align*}
Then, for $0\le\delta\le 1$, let
\begin{align*}
    p^{\perp}_*(\lambda,R,\delta)\coloneqq\inf\left\{\delta/2\le p\le 1/2: F(\delta,p)\le G^\perp_{\lambda,R}(\delta)\right\}\;.
\end{align*}
\end{definition}

\begin{theorem}
\label{Thm:main_dual}
   Let $0<\delta,\lambda<1$, let $\cC=\{\mathcal{C}_n\}_{n}$ be a family of binary linear codes of rate $R=\lim_n R(\cC_n)$ with $\delta(\cC_n)\ge\delta$, and such that       $\underset{n\to\infty}{\lim}P_b(\cC^{\perp}_n,\BEC_{\lambda})=0.$  Then,  
   \begin{align*}
       \underset{n\to\infty}{\lim}P_B(\cC_n,\BSC_{p})=0,\ \text{for every}\ p<p^{\perp}_{*}(\lambda,R,\delta).
   \end{align*}
\end{theorem}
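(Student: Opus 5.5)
We follow the proof of \Cref{Thm:1} and replace \Cref{prop:weight_distribution_bound} by a dual weight distribution bound. The target is to apply \Cref{prop:qsc-vanishing-pb} with $q=2$, $\alpha=\delta$ and $G(\gamma)=G^\perp_{\lambda,R}(\gamma)$. By linearity we may take $x=0^n$. For $w<\delta n$ we have $A_w=0$, so the small-weight hypothesis holds with $L=1$, and $d(\cC_n)\ge\delta n=\omega(\log n)$.

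The first and main step is to prove that $\frac1n\log A_{\gamma n}\le G^\perp_{\lambda,R}(\gamma)+o(1)$ uniformly for $\gamma\in[\delta,1]$. This should be the binary dual bound of~\cite{hkazla2021codes}, which is derived from Samorodnitsky's inequality applied to $\cC^\perp$. Write $X^\perp$ for a uniform codeword of $\cC^\perp$ and $Y^\perp$ for its $\BEC_\lambda$ output. The inequality relates $\sum_w A_w\,\phi_w$, with $\phi_w$ a Krawtchouk-type weight such as $(2^{1-\lambda}-1)^{\min(w,n-w)}$, to $|\cC|$ times $2^{H(X^\perp|Y^\perp)}$, and $H(X^\perp|Y^\perp)$ is close to $(1-R)\lambda n$ up to a term like $(R-\lambda)n$. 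From this I expect the bound $A_w\le \exp_2\bigl(n[R-\lambda+o(1)]\bigr)(2^{1-\lambda}-1)^{-\min(w,n-w)}$. It must be combined with the trivial bound $A_w\le\binom nw\le 2^{nh(\gamma)}$, improved by the MacWilliams/linear-programming consideration to $h(\gamma)-(1-R)$ in the complementary range. The case split in $G^\perp$ at $\min(\gamma,1-\gamma)=1-2^{\lambda-1}$ is exactly where the two expressions meet tangentially. I expect this step to be the obstacle. I would first look for the precise dual statement in~\cite{hkazla2021codes} and invoke it, adjusting the entropy term via \Cref{cl:pbit-implies-entropy} applied to $\cC_n^\perp$, since vanishing bit error on $\BEC_\lambda$ gives $H(X^\perp|Y^\perp)=o(n)$. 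The claimed uniformity in $\gamma$ is fine because the exponents are explicit continuous functions. Continuity of $G^\perp$ at the switch point must also be checked; otherwise we replace $G^\perp$ by the minimum of the two branches, which is continuous and is what the bound actually gives.

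Second, fix $p<p^\perp_*(\lambda,R,\delta)$. If $p<\delta/2$, unique decoding within half the distance gives $P_B\to0$. Otherwise the definition of the infimum together with \Cref{lem:fq-decreasing} gives $F(\delta,p)>G^\perp_{\lambda,R}(\delta)$. We then need $F(\gamma,p)>G^\perp(\gamma)$ for all $\gamma\in[\delta,\min(2p,1)]$. Unlike in \Cref{Thm:1}, $G^\perp$ is not linear: it is convex-then-concave, and $G^\perp(0)=R-\lambda\ge0=F(0,p)$, so the convexity argument through the origin fails.

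To handle this, I would argue that $\gamma\mapsto F(\gamma,p)-G^\perp(\gamma)$ is increasing on $[\delta,\min(2p,1)]$. In the first regime the derivative of $-M$ is at least $-\log Z(\BSC_p)$ by concavity (\Cref{clm:M_gamma_concavity}, \Cref{lem:m-derivative-at-0}), while $G^\perp$ has slope $-\log(2^{1-\lambda}-1)$. In the entropy regime a direct comparison of $M(\gamma,p)+h(\gamma)$ against $h(p)+1-R$ is needed. If monotonicity fails in some range, the fallback is to check positivity directly on the compact interval using \Cref{lem:M_q_continuity}, possibly restricting to $\gamma\le1-\delta$-type symmetry. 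Once positivity holds, \Cref{prop:qsc-vanishing-pb} yields $\lim_n P_B(\cC_n,\BSC_p)=0$.
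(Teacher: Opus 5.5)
\textbf{Gap in your second step: the shape of $G^\perp_{\lambda,R}$.} Let $m_0=1-2^{\lambda-1}$. Then $G^\perp_{\lambda,R}$ is:
linear with slope $-\log(2^{1-\lambda}-1)=h'(m_0)$ on $[0,m_0]$;
equal to $h(\gamma)-(1-R)$ on $[m_0,1-m_0]$;
linear with slope $h'(1-m_0)$ on $[1-m_0,1]$.
The pieces meet with matching derivatives, so $G^\perp_{\lambda,R}$ is \emph{concave} on $[0,1]$ (this is \Cref{lem:F_dual_convex}), not convex-then-concave. Together with \Cref{clm:M_gamma_concavity}, this makes $W(\gamma)\coloneqq F(\gamma,p)-G^\perp_{\lambda,R}(\gamma)$ convex.

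The sign $W(0)=\lambda-R\le 0$ does not break the argument through the origin; it is exactly what that argument needs. Here $R\ge\lambda$ follows from Shannon's theorem applied to $\cC_n^\perp$ on $\BEC_\lambda$. For $\delta<\gamma\le\min(2p,1)$, convexity gives $W(\delta)\le(1-\delta/\gamma)W(0)+(\delta/\gamma)W(\gamma)\le(\delta/\gamma)W(\gamma)$. Hence $W(\delta)>0$ forces $W(\gamma)>0$, and this one line is the paper's whole argument for the step.

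\textbf{Your substitute does not close the gap.} Concavity of $M$ and \Cref{lem:m-derivative-at-0} only give $W'(\gamma)\ge\log(2^{1-\lambda}-1)-\log Z(\BSC_p)$ on the linear piece. This is positive only if $Z(\BSC_p)<2^{1-\lambda}-1$, and for $R=\lambda$ that is exactly the limiting threshold as $\delta\to 0$. The theorem has content precisely for $p$ above this threshold and below $p^\perp_*(\lambda,R,\delta)$, where $W'(0)<0$ but $W(\delta)>0$. In that range your lower bound is negative and proves nothing. Monotonicity of $W$ on $[\delta,\min(2p,1)]$ does hold, but it comes from convexity of $W$ plus the positive secant slope $(W(\delta)-W(0))/\delta$, not from the derivative at $0$.

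The entropy regime is left as an unspecified ``direct comparison.'' The fallback of checking positivity by continuity is circular: continuity on a compact interval upgrades positivity to a uniform margin, but it cannot establish positivity.

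\textbf{Your first step is fine as a plan.} It is \Cref{lem:qary_Sam}, part 2, with $q=2$ and $\tau=2^\lambda-1$, plus \Cref{cl:pbit-implies-entropy} applied to $\cC_n^\perp$. Drop the fallback of taking the minimum of the two branches, though. The linear branch is the tangent line at $m_0$ to the concave branch $h(\gamma)-(1-R)$ and lies above it. So that minimum would be $h(\gamma)-(1-R)$ everywhere, which the dual bound does not give for $\min(\gamma,1-\gamma)<m_0$. The fallback is unnecessary anyway, since $G^\perp_{\lambda,R}$ is continuous.
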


The function $G^\perp_{\lambda,R}$ is defined
only  for $\lambda\le R$. This is without
loss of generality, since $\lambda\le R$
follows from the Shannon's coding theorem
if the dual code family 
has vanishing bit error probability
on the $\BEC_\lambda$.
Since $F(\delta,1/2)=0$
and $G_{\lambda,R}^\perp(\delta)\ge 0$
(see \Cref{prop:negative_G}), the
value of $p^\perp_*(\lambda,R,\delta)$ is well defined.

We will need two prerequisites before we
prove \Cref{Thm:main_dual}. First,
let us state the dual weight distribution
bound due to~\cite{samorodnitsky2019upper}:
\begin{proposition}[Corollary 40 in \cite{abawonse2025generalized}]
\label{lem:qary_Sam}
    Let $\cC$ be a linear code over $\mathbb{F}_q^{n}$, with weight distribution $(A_0,\ldots,A_n)$. Let $0<\lambda\le 1$ and $\tau\coloneqq \frac{q^\lambda-1}{q-1}$.
    \begin{itemize}
        \item[1)] Let $X$ be a uniformly chosen random element in $\cC$, $Y$ be the output of transmitting $X$ through $\qEC_\lambda$. Then, for every $0\le w\le n$, it holds $A_w\le \tau^{-w} 2^{H(X|Y)}$.
        \item[2)] Similarly, let $X^{\perp}$ be a uniformly chosen random element in $\cC^{\perp}$, and $Y^{\perp}$ be the output of transmitting $X^{\perp}$ through $\qEC_\lambda$. Let $\gamma=\frac{w}{n}$, then, for every $0\le w\le n$, it holds
        \begin{align*}
            A_w\le|\cC|2^{H(X^{\perp}|Y^{\perp})}\begin{cases}(1-\tau)^{-\gamma n}(1+(q-1)\tau)^{-(1-\gamma)n}&\quad\text{if}\ \gamma<\frac{q-1}{q}(1-\tau),\\ q^{-n(1-H_q(\gamma))}&\quad\text{if}\ \frac{q-1}{q}(1-\tau)\le \gamma\le \frac{q-1}{q}(1+\tau),
            \\
            (1+\tau)^{-\gamma n}
            (1-(q-1)\tau)^{-(1-\gamma)n}&\quad\text{if}\ \gamma>\frac{q-1}{q}(1+\tau). \end{cases} 
        \end{align*}
        \end{itemize}
\end{proposition}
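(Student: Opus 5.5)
Part 1 is the single-term form of \Cref{prop:weight_distribution_bound}. Since all summands are nonnegative, I would drop every term except $i=w$ in that inequality. This gives $A_w\tau^w\le 2^{H(X|Y)}$, i.e.\ $A_w\le\tau^{-w}2^{H(X|Y)}$, with $\tau=\frac{q^\lambda-1}{q-1}$. Nothing more is needed there.

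For part 2, the plan is to run the same single-term argument through MacWilliams duality. Applying \Cref{prop:weight_distribution_bound} to $\cC^\perp$ gives $\sum_i A_i^\perp\tau^i\le 2^{H(X^\perp|Y^\perp)}$. The dual enumerator satisfies $\sum_i A_i^\perp x^i=\frac{1}{|\cC|}\sum_j A_j(1-x)^j(1+(q-1)x)^{n-j}$. Evaluating at $x=\tau\in[0,1]$, every term on the right is nonnegative, so
\begin{align*}
\frac{1}{|\cC|}\sum_j A_j(1-\tau)^j(1+(q-1)\tau)^{n-j}\le 2^{H(X^\perp|Y^\perp)}\;.
\end{align*}
Keeping only $j=w$ yields $A_w\le |\cC|2^{H(X^\perp|Y^\perp)}(1-\tau)^{-\gamma n}(1+(q-1)\tau)^{-(1-\gamma)n}$. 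This bound is valid for every $w$, which covers the first case directly.

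For the other cases I would use a family of such bounds rather than the one at $\tau$. Since the conditional entropy on the erasure channel is monotone in the erasure probability, for every $0\le\tau'\le\tau$ (erasure probability $\lambda'\le\lambda$) we have $H(X^\perp|Y'^\perp)\le H(X^\perp|Y^\perp)$. Hence the bound holds with any $\tau'\in[0,\tau]$ in place of $\tau$.

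The same MacWilliams step can also be applied with negative arguments $x=-\tau'$, $\tau'\in[0,\tau]$. The right side then has terms $A_j(1+\tau')^j(1-(q-1)\tau')^{n-j}$, which are nonnegative when $(q-1)\tau'\le1$. The concern is the left side: it is $\sum_i A_i^\perp(-\tau')^i$, which could be negative, so it needs absolute value control. I would bound $|\sum_i A_i^\perp(-\tau')^i|\le\sum_i A^\perp_i\tau'^i\le 2^{H}$. This gives $A_w\le|\cC|2^H(1+\tau')^{-w}(1-(q-1)\tau')^{-(n-w)}$.

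Finally I optimize $\tau'$ for each $\gamma$. The exponent $\gamma\log(1-x)+(1-\gamma)\log(1+(q-1)x)$ over $x\in[-\tau,\tau]$ is maximized at $x^*=1-\frac{q\gamma}{q-1}$. When $|x^*|\le\tau$, i.e.\ in the middle range of $\gamma$, plugging in $x^*$ gives exactly $q^{n(H_q(\gamma)-1)}$. Otherwise the maximum is at the endpoint $\pm\tau$, which reproduces the first and third cases. The main care points are the sign and absolute-value handling at negative $x$ and the entropy monotonicity in $\lambda$; the optimization itself is routine calculus.
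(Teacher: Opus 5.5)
Your proof is correct. The paper gives no proof of this proposition; it quotes it from \cite{abawonse2025generalized} as a corollary of \Cref{prop:weight_distribution_bound}. Your derivation is the natural way to obtain it as such a corollary, so there is no in-paper route to contrast with.

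Your steps are as follows. Part 1) is single-term truncation. For part 2), you apply \Cref{prop:weight_distribution_bound} to $\cC^\perp$ at every $\lambda'\le\lambda$. This is legitimate because $\qEC_\lambda$ is a degraded version of $\qEC_{\lambda'}$, so the conditional entropy can only grow with the erasure probability. You then evaluate the MacWilliams identity at $x\in[-\tau,\tau]$, using the triangle inequality on the dual side when $x<0$. Finally you choose $x=\tau$, $x^*=1-\frac{q\gamma}{q-1}$, and $x=-\tau$ in the three ranges. I checked that $x^*$ gives exactly $q^{-n(1-H_q(\gamma))}$, and that $|x^*|\le\tau$ is equivalent to the middle range.

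You flagged but did not close one point: for $x<0$ you need $1+(q-1)x\ge 0$. When $q\ge 3$ and $\tau>\frac{1}{q-1}$, this fails on part of $[-\tau,0)$, so the optimization must be restricted to $x\ge-\frac{1}{q-1}$. That restriction costs nothing, for two reasons:
\begin{itemize}
\item[(i)] The interior optimizer satisfies $x^*=1-\frac{q\gamma}{q-1}\ge-\frac{1}{q-1}$ for every $\gamma\le 1$.
\item[(ii)] The third case is nonempty only if $\frac{q-1}{q}(1+\tau)<1$, i.e.\ $(q-1)\tau<1$, so the endpoint $x=-\tau$ is admissible whenever you use it.
\end{itemize}
Stating this explicitly would make the argument complete.
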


Second, we need to check that 
$G^\perp_{\lambda,R}$ is a concave function.

\begin{lemma}
    \label{lem:F_dual_convex}
          The function $\gamma\mapsto G^{\perp}_{\lambda,R}(\gamma)$ is a concave function for $0\le\gamma\le 1$.
\end{lemma}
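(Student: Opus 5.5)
We prove concavity of $G^\perp_{\lambda,R}$ on $[0,1]$ by checking it on each piece and then verifying that the pieces glue together with matching slopes. Write $a\coloneqq 1-2^{\lambda-1}\in[0,1/2)$ for the breakpoint and $c\coloneqq\log(2^{1-\lambda}-1)$. The function is symmetric under $\gamma\mapsto 1-\gamma$, since it depends only on $\min(\gamma,1-\gamma)$ and $h(\gamma)=h(1-\gamma)$. So $G^\perp_{\lambda,R}$ is linear on $[0,a)$ and on $(1-a,1]$, and equals $h(\gamma)-(1-R)$ on $[a,1-a]$. The middle piece is strictly concave because $h$ is.

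The key step is that the pieces join in a $C^1$ fashion at $\gamma=a$.

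First, continuity. On the left we have $R-\lambda-ac$, and on the right $h(a)-1+R$, so we need $h(a)=1-\lambda-ac$. We have $1-a=2^{\lambda-1}$, hence $\log(1-a)=\lambda-1$. Also $2^{1-\lambda}-1=\frac{1-2^{\lambda-1}}{2^{\lambda-1}}=\frac{a}{1-a}$, so $c=\log\frac{a}{1-a}$. Then
\[
h(a)=-a\log a-(1-a)\log(1-a)=-a\log\tfrac{a}{1-a}-\log(1-a)=-ac+1-\lambda,
\]
as required.

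Second, the derivatives. The left piece has slope $-c=\log\frac{1-a}{a}$, and $h'(a)=\log\frac{1-a}{a}$, so the one-sided derivatives agree. The same holds at $1-a$ by symmetry.

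A function that is continuous, concave on each of finitely many closed intervals, and whose one-sided derivatives at every junction satisfy $f'_-\ge f'_+$ (here with equality) is concave. This follows because its derivative, defined a.e., is nonincreasing: it is constant $-c$, then $h'$ decreasing from $-c$ to $c$, then constant $c$. The function is therefore the integral of a nonincreasing function, hence concave.

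Two degenerate cases remain. If $a=0$, i.e. $\lambda=1$, the linear pieces are empty and $G^\perp=h-(1-R)$ on all of $[0,1]$. At $\gamma=1/2$ both formulas for $\min(\gamma,1-\gamma)$ coincide, so there is nothing to check there.

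The only nontrivial point is the identity $c=\log\frac{a}{1-a}$ linking the breakpoint to the slope. Once that algebra is verified, the rest is routine.
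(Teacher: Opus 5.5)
Your proof is correct and follows the paper's route. Both arguments check continuity at the breakpoint $1-2^{\lambda-1}$, match the one-sided derivatives there and at its mirror image, and use concavity of each piece. Your explicit identity $\log(2^{1-\lambda}-1)=\log\frac{a}{1-a}$ supplies the algebra the paper only asserts.

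One small slip: since $0\le\lambda\le R<1$, the breakpoint lies in $(0,1/2]$, not $[0,1/2)$. So the degenerate case that actually occurs is $\lambda=0$, where $G^\perp_{\lambda,R}\equiv R$ is constant, rather than $\lambda=1$. Your general argument covers this case unchanged.
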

\begin{proof}
  Let $G(\gamma)\coloneqq G^\perp_{\lambda,R}(\gamma)$,
  $m\coloneqq\min(\gamma,1-\gamma)$ and $m_0:=1-2^{\lambda-1}$. 
  Since $R-\lambda-m_0\log(2^{1-\lambda}-1)=h(m_0)-1-R$,
  the function $G(\gamma)$ is continuous on $[0,1]$. 
  Also 
\begin{align*}
    G'(\gamma)=\begin{cases}
            \mp\log(2^{1-\lambda}-1) \quad\ \text{if}\ 0\le m<m_0\\h'(\gamma)\quad\ \text{otherwise}\;,
        \end{cases}
\end{align*}
in particular one checks that
$G'^{-}(m_0)=-\log(2^{1-\lambda}-1)=h'(m_0)=G'^{+}(m_0)$, and by a symmetric argument
$G'^-(1-m_0)=G'^+(1-m_0)$. Since $G$ has continuous first derivative,
to conclude that it is concave
it is sufficient to check that
$G''(\gamma)\le 0$ for every $m\ne m_0$.
However, this is clear since
    \begin{align*}
        G''(\gamma)=\begin{cases}
            0 &\text{if }m<m_0\\
            h''(\gamma)&\text{if } m>m_0\;.
        \end{cases}
    \end{align*}
    In either case, $G''(\gamma)\le 0$, so $G$ is concave.
\end{proof}

\begin{proof}[Proof of \Cref{Thm:main_dual}]
    The proof follows a similar outline as \Cref{Thm:1}. 
    We assume can that $\delta\le 2p$, as otherwise
    $P_B(\cC_n,\BEC_p)\to 0$ follows from
    $\delta(\cC_n)\ge \delta$. As we discussed,
    the Shannon's coding theorem 
    and $\lim_n P_b(\cC_n^\perp,\BEC_\lambda)=0$
    imply $R\ge\lambda$. 

    We will apply \Cref{prop:qsc-vanishing-pb}
    for $\alpha=\delta$ and 
    $G(\gamma)\coloneqq G^\perp_{\lambda,R}(\gamma)$. Since the code is linear,
    we can bound $A_w=A_w(0^n)$ w.l.o.g.
    For $0<w<\delta n$ we have $A_w=0$
    by the distance assumption.

    For $w=\gamma n\ge\delta n$, by \Cref{lem:qary_Sam}
    it holds
    \begin{align*}
    \frac{1}{n}\log A_{\gamma n}
    &\le G(\gamma)+(R_n-R)+\frac{H(X^\perp|Y^\perp)}{n}\;.
    \end{align*}
    The term $R_n-R$ is $o(1)$ by assumption
    and $H(X^\perp|Y^\perp)=o(n)$ due
    to $P_b(\cC^\perp_n,\BEC_\lambda)\to 0$
    (see~\Cref{cl:pbit-implies-entropy}).

    Finally, since
    $p<p_*^\perp(\lambda,R,\delta)$,
    we have $F(\delta,p)>G(\delta)$.
    Furthermore, 
    $\gamma\mapsto F(\gamma, p)-G(\gamma)
    =h(p)-M(\gamma,p)-G(\gamma)$
    is convex by \Cref{clm:M_gamma_concavity}
    and \Cref{lem:F_dual_convex}.
    Since $F(0,p)-G(0)=\lambda-R\le 0$,
    from $F(\delta,p)-G(\delta)>0$
    it follows $F(\gamma,p)-G(\gamma)>0$
    for every $\delta\le \gamma\le 2p$.
    Accordingly, by \Cref{prop:qsc-vanishing-pb},
    it follows that
    $P_B(\cC_n,\BSC_p)\to 0$.
\end{proof}

\paragraph{Comparison with the primal bound.}
Consider a family of binary linear codes with 
relative minimum distance $\delta$ and
$\lim_n R(\cC_n)=R$
that achieves capacity on the BEC, that is 
$\lim_{n} P_b(\cC_n,\BEC_{\lambda})=0$ for every $\lambda<1-R$.
Then, by \Cref{Thm:1} (and continuity) it follows
that $\{\cC_n\}_n$ has vanishing block error probability
for $\BSC_p$ for $p<p_*(1-R,\delta)$.
At the same time, it is known that $\{\cC_n\}_n$ achieves capacity on the BEC
(under bit-MAP decoding)
if and only if the dual code family $\{\cC_n^\perp\}_n$ does so,
see, e.g., \cite[Remark 12]{kudekar2016reed}.
Therefore, by, \Cref{Thm:main_dual}, $\{C_n\}_n$ decodes errors
on the BSC for $p<p_*^\perp(R,R,\delta)$.

Accordingly, it is natural to ask which of the two bounds
is the better one.
It is straightforward to check from the
definitions that
for $0\le\delta\le 1-2^{R-1}$ it holds
\begin{align*}
p_*(1-R,\delta)=p^\perp_*(R,R,\delta)
=\inf\left\{\delta/2\le p\le 1/2:
F(\delta,p)\le -\delta\log(2^{1-R}-1)\right\}\;.
\end{align*}

Furthermore,
it can be checked that the known asymptotic bounds between minimum 
distance and rate 
exclude the existence of
binary code families with
rate $R$ and $\delta>1-2^{R-1}$. Therefore,
$p_*(1-R,\delta)=p_*^\perp(R,R,\delta)$ for all feasible
combinations of $\delta$ and $R$, and
\Cref{Thm:main_dual} does not improve upon \Cref{Thm:1}
for codes that achieve capacity on the BEC.
\Cref{Thm:main_dual} remains potentially applicable in scenarios
where $R>\lambda$.
\Cref{fig:delta_and_dual_for_fixed_lambda} provides an illustration of the dependence
of $p_*^\perp$ on $R$.

\begin{figure}[!ht]
    \centering
    \includegraphics[width=0.6\linewidth]{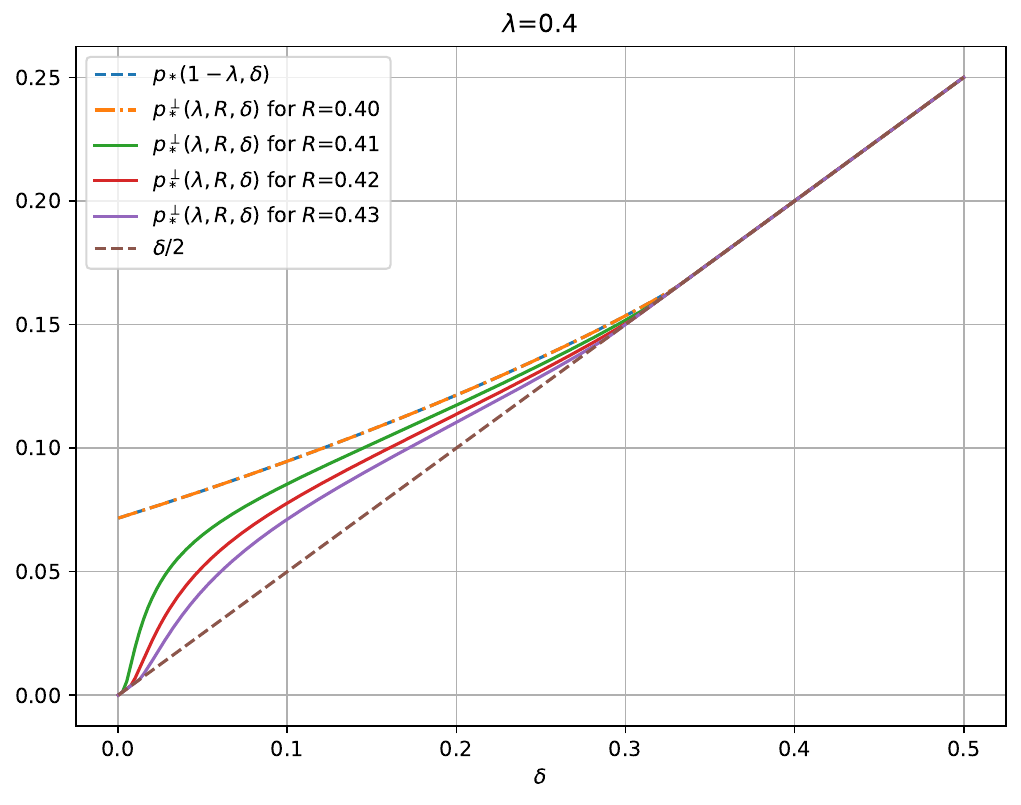}
    \caption{$p_*(0.6,\delta)$  and $p^{\perp}_*(0.4,R,\delta)$, as a function of the relative distance $\delta$, for fixed rates $R\in\{0.4, 0.41,, 0.42, 0.43\}$.}
    \label{fig:delta_and_dual_for_fixed_lambda}
\end{figure}

\section{Proof of \Cref{thm:unconditional_bound}}
\label{sec:unconditional_bound} 

To establish \Cref{thm:unconditional_bound}, we need two preliminary claims.
\begin{claim}[Exercise 7.17 in \cite{guruswami2012essential}] \label{clm:erasure_list_decocable} Let $\cC\subseteq\bF^n_q$ be a code with relative distance at least $0\le \delta\le 1-1/q$. Then, for every $\varepsilon>0$, $\cC$ is $\left(\left(\frac{q}{q-1}-\varepsilon\right)\delta,\frac{q}{(q-1)\varepsilon}\right)$-erasure list decodable.
\end{claim}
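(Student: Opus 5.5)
The plan is to prove the erasure Johnson bound in \Cref{clm:erasure_list_decocable} by a second-moment (pairwise-agreement) counting argument. This is the erasure analogue of the classical proof of the Johnson bound.

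Fix $T\subseteq[n]$ with $|T|=m>(1-\rho)n$, where $\rho\coloneqq(\frac{q}{q-1}-\varepsilon)\delta$, and fix $y\in\Sigma^T$. Let $c^{(1)},\dots,c^{(L)}$ be the codewords with $c^{(j)}_T=y$. Then any two of them agree on all of $T$. Hence they can differ only on the complement $E\coloneqq[n]\setminus T$, with $|E|=e<\rho n$. Since the pairwise distance is at least $\delta n$, the restrictions $c^{(j)}_E\in\Sigma^E$ form a code of length $e$ with pairwise distance at least $\delta n$. In particular we may assume $e\ge\delta n$, otherwise $L\le 1$.

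The key step is to double count pairwise agreements on $E$. For each $i\in E$ and $a\in\Sigma$, let $n_{i,a}$ be the number of $j$ with $c^{(j)}_i=a$. The number of ordered pairs $(j,j')$ with $j\ne j'$ that agree at coordinate $i$ is $\sum_a n_{i,a}^2-L$. By Cauchy--Schwarz this is at least $L^2/q-L$. Summing over $i\in E$, the total agreement is at least $e(L^2/q-L)$. On the other hand, each pair of distinct codewords agrees in at most $e-\delta n$ coordinates of $E$, so the total agreement is at most $L(L-1)(e-\delta n)$. Combining the two bounds and dividing by $L$ gives
\begin{align*}
e\left(\frac{L}{q}-1\right)\le (L-1)(e-\delta n)\;,
\end{align*}
which rearranges to
\begin{align*}
L\left(\delta n-e\left(1-\frac{1}{q}\right)\right)\le \delta n\;.
\end{align*}

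Finally, substitute $e<\rho n=(\frac{q}{q-1}-\varepsilon)\delta n$. This gives
\begin{align*}
\delta n-e\left(1-\frac{1}{q}\right)>\delta n\left(1-1+\frac{q-1}{q}\varepsilon\right)=\frac{q-1}{q}\varepsilon\,\delta n\;,
\end{align*}
and therefore $L<\frac{q}{(q-1)\varepsilon}$, as claimed.

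The only delicate points are bookkeeping. One is the strict versus non-strict inequality coming from $|T|>(1-\rho)n$. The other is handling the degenerate case where the coefficient of $L$ is not positive. That case is ruled out precisely by the choice of $\rho$. The hypothesis $\delta\le 1-1/q$ is consistent with $\rho\le 1$ but is not otherwise needed. The main conceptual step is recognizing that agreement on $T$ reduces everything to a Plotkin-type bound on $E$, and the computation after that is routine.
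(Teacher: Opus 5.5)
Your proof is correct and takes the paper's route. You restrict the codewords consistent with $y$ to the erased coordinates $[n]\setminus T$, where they form a code of length $e<\rho n$ with distance at least $\delta n>\frac{q-1}{q}e$, and bound its size by a Plotkin-type argument; the only difference is that you rederive the Plotkin bound via the standard pairwise-agreement double count, while the paper cites it as $|\cC'|\le\frac{qd}{qd-(q-1)n'}$ (and the case $\delta=0$, which you leave implicit, is vacuous since no $T$ with $|T|>n$ exists).
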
 
\begin{proof}
For $\delta=0$ the statement is trivial, so assume $\delta>0$.
Let $\varepsilon>0$, and $T\subseteq [n]$ such that
    $|T|=(1-\rho)n$ for some $\rho\le \left(\frac{q}{q-1}-\eps\right)\delta$. Let $y\in\Sigma^T$. We 
    want to show that 
    \begin{align*}
        \big|\{c\in \cC:c_T=y\}\big|\le\frac{q}{(q-1)\varepsilon}\;.
    \end{align*} 
    Consider the truncated code $\cC'\coloneqq \{c_{\bar{T}}: c\in\cC, c_T=y\}$, where $\bar{T}=[n]\backslash T$. By construction, $|\cC'|=|\{c\in \cC:c_T=y\}|$. Furthermore, the minimum distance 
    of $\cC'$ satisfies $d(\cC')\ge d(\cC)\ge\delta n$, and since
    $\cC'\subseteq\Sigma^{\bar{T}}$ its block length is $n'=|\bar{T}|=\rho n$. Then, we have 
    \begin{align*}
        \frac{q-1}{q}n'
        =\frac{q-1}{q}\rho n\le
        \delta n-\frac{q-1}{q}\eps\delta n
        <\delta n\le d(\cC')\;.
    \end{align*}
The Plotkin bound (\cite{Plo60}, see also~\cite[Theorem 4.4.1]{guruswami2012essential})
states that if a $q$-ary code $\cC$ with block length $n$ 
satisfies $\frac{q-1}{q}n<d\le d(\cC)$
then $|\cC|\le\frac{qd}{qd-(q-1)n}$. Applying this to $\cC'$,
   \begin{align*}
       |\cC'|&\le\frac{q\delta n}{q\delta n-(q-1)\rho n}\le \frac{q}{(q-1)\varepsilon}\;.\qedhere
   \end{align*}
\end{proof}

In~\cite{pernice2025list} it is proved that a 
$(p,L)$-list decodable code family has vanishing error
probability on the $\qEC_\lambda$ for $\lambda<p$. However,
their proof can be adapted to establish the same conclusion
assuming $(p,L)$-\emph{erasure} list decodability instead of
list decodability.

\begin{lemma}\label{clm:erasure_reliability}
Let $\{\cC_n\}_n$ be a family of $q$-ary codes and $0\le\rho\le 1$.
If $\cC_n$ is $(\rho,L)$-erasure list decodable for some $L=L(n)$
satisfying $d(\cC_n)\ge \omega(\log L)$, then for every $\rho'<\rho$ it
holds
\begin{align*}
    \lim_{n\to\infty}P_{B}(\cC_n,\qEC_{\rho'})=0\;.
\end{align*}
\end{lemma}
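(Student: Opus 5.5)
Under MAP decoding on the erasure channel, an error can only occur when the set of codewords consistent with the unerased positions has at least two elements. So the plan is to union bound over pairs of codewords that agree on the unerased set $T$, while exploiting erasure list decodability to control how many such pairs there can be. Concretely, fix a transmitted codeword $x\in\cC_n$ and let $E\subseteq[n]$ be the random erasure set, where each coordinate is erased independently with probability $\rho'$. Let $T=[n]\setminus E$. The MAP decoder errs with probability at most $1-1/|\{c\in\cC_n: c_T=x_T\}|$, so
\begin{align*}
P_B(\cC_n,\qEC_{\rho'},x)\le \Pr\left[\exists c\in\cC_n\setminus\{x\}: c_T=x_T\right]\;.
\end{align*}
We split according to whether $|E|<\rho n$ or not. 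By a Chernoff bound, since $\rho'<\rho$ we have $\Pr[|E|\ge\rho n]\le\exp(-\Omega(n))$, uniformly in $x$.

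On the event $|E|<\rho n$, we have $|T|>(1-\rho)n$, so erasure list decodability gives at most $L$ codewords agreeing with $x$ on $T$. The key step is to convert this list bound into a bound on the number of codewords that could be consistent, via a double counting argument analogous to \Cref{thm:list-double-counting}. A codeword $c$ with $\Delta(c,x)=w$ agrees with $x$ on $T$ exactly when $\mathrm{supp}(c-x)\subseteq E$. We count pairs $(E,c)$ with $|E|=k<\rho n$, $c\ne x$, and $c$ agreeing with $x$ outside $E$. Each fixed $E$ contributes at most $L-1$ such codewords, so the number of pairs is at most $\binom{n}{k}(L-1)$. On the other hand, each codeword at distance $w$ from $x$ lies in $\binom{n-w}{k-w}$ such sets $E$. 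This gives
\begin{align*}
A_w(x)\binom{n-w}{k-w}\le (L-1)\binom{n}{k}\;,
\end{align*}
and hence
\begin{align*}
\Pr\left[\exists c\ne x: c_T=x_T,\ |E|=k\right]\le \sum_{w\ge d}A_w(x)\frac{\binom{n-w}{k-w}}{\binom nk}\Pr[|E|=k]\;.
\end{align*}
This bound needs refining, because it only reproduces the trivial estimate $\le (L-1)\Pr[|E|=k]$. Instead, we will take the pairs $k'<k$ with $k\approx\rho n$ and $k'=|E|\approx \rho' n$. That is, we apply the double count with list sets of size $k=\lfloor\rho n\rfloor-1$ to get $A_w(x)\le L\binom nk/\binom{n-w}{k-w}$. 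We then use the union bound with the actual erasure probability: the probability that $\mathrm{supp}(c-x)\subseteq E$ is $\rho'^w$. This yields
\begin{align*}
P_B\le e^{-\Omega(n)}+\sum_{w\ge d(\cC_n)} L\,\frac{\binom nk}{\binom{n-w}{k-w}}\rho'^{\,w}\;.
\end{align*}

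It remains to show that this sum vanishes, and this is the main obstacle. We have $\binom{n-w}{k-w}/\binom nk=\prod_{i=0}^{w-1}\frac{k-i}{n-i}$. For $w\le\eta n$ with $\eta$ small this product is at least $(\rho-2\eta)^w$. Hence the terms are bounded by $L(\rho'/(\rho-2\eta))^w$, and choosing $\eta$ so that $\rho'<\rho-2\eta$, their sum is at most $nL\,c^{d(\cC_n)}=o(1)$ by the assumption $d(\cC_n)=\omega(\log L)$. This also uses $d=\omega(\log n)$, and if needed we use the larger list with $\log(nL)$, as in the paper's footnote style. For large $w>\eta n$ the ratio can become tiny, so the union bound fails there. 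Instead we use that any $c$ with $\mathrm{supp}(c-x)\subseteq E$ must satisfy $w\le|E|<\rho n$. For large $w$ we therefore bound the probability jointly over the erasure pattern: we count subsets $E$ of size about $\rho' n$ that contain a support of size $\ge\eta n$. By the list bound each such $E$ holds at most $L$ codewords, and we relate this to a slightly larger random set via a coupling $E\subseteq E'$ with $|E'|\approx\rho n$. I expect this large-$w$ regime to require the most care, probably by adapting the \cite{pernice2025list} argument of conditioning on the larger erasure set and using the $1-1/L$ fraction bound, iterated via monotone coupling of erasure probabilities.
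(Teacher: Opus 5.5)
There is a genuine gap, and it sits exactly where the lemma has content.

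\textbf{What works.} Your small-weight analysis is fine. The sum $\sum_{w\ge d(\cC_n)}Lc^{w}\le Lc^{d(\cC_n)}/(1-c)$ is geometric, so you need neither the factor $n$ nor any strengthening of the hypothesis to $\omega(\log n)$ or $\log(nL)$. You are also right that the per-codeword union bound $\sum_w A_w(x)\rho'^{\,w}$ must fail for large $w$. For random linear codes of rate close to $1-\rho$ it is exponentially large when $\rho'$ is close to $\rho$, yet those codes are erasure list decodable with constant list size. So the large-weight regime is not an artifact of your estimate.

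\textbf{The gap.} In that regime your proposal stops. You name a coupling $E\subseteq E'$ and only say you ``expect'' it to work. The tools you point to do not produce a vanishing bound. Counting $\rho'n$-sets that contain a support gives no probability estimate. The $1-1/L$ success fraction of a random list element does not vanish unless it is boosted by a sharp-threshold argument, which is not readily available for general nonlinear codes. An iterated monotone coupling is not needed at all.

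\textbf{The missing step.} It is short, and it makes your double counting and weight split unnecessary.
\begin{itemize}
\item Pick $\rho'<\rho_0<\rho$ and write $\rho'=(1-\eps)\rho_0$.
\item Let $E'$ contain each coordinate independently with probability $\rho_0$. Let $E\subseteq E'$ keep each element of $E'$ independently with probability $1-\eps$. Then $E$ is exactly the $\qEC_{\rho'}$ erasure pattern.
\item By Hoeffding, $|E'|<\rho n$ except with probability $o(1)$. On that event, erasure list decodability gives at most $L$ codewords agreeing with $x$ outside $E'$.
\item Every codeword agreeing with $x$ outside $E$ is among these, since $E\subseteq E'$.
\item Condition on $E'$ and fix such a $c\ne x$. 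It stays consistent with $x$, i.e.\ all coordinates where $c$ and $x$ differ land in $E$, with probability exactly $(1-\eps)^{\Delta(c,x)}\le(1-\eps)^{d(\cC_n)}$.
\item A union bound over these at most $L$ codewords gives $P_B(\cC_n,\qEC_{\rho'},x)\le o(1)+L(1-\eps)^{d(\cC_n)}=o(1)$. This holds uniformly in $x$ and uses only $d(\cC_n)=\omega(\log L)$.
\end{itemize}
This is the paper's proof. The union bound is taken only after conditioning on the larger erasure set, where the list property has already cut the candidates down to $L$, not over the whole code.
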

\begin{proof}
Take some $\rho'<\rho_0<\rho$ and write $\rho'=(1-\eps)\rho_0$.
 Consider the following two-step procedure.
First, sample a random set $\cE_0\subseteq [n]$,
such that every $1\le i\le n$ is placed in $\cE_0$ independently
with probability $\rho_0$. Then, conditioned on $\cE_0$,
choose $\cE'\subseteq\cE_0$ such that every $i\in\cE_0$ is
placed in $\cE'$ independently with probability $1-\eps$.
Note that the overall distribution of $\cE'$ is that of the
erasure pattern on the $\qEC_{\rho'}$.

Let $x\in\cC_n$.
Given $\cE\subseteq[n]$, let 
$S(\cE)\coloneqq\{y\in\cC: y_{\bar{\cE}}=x_{\bar{\cE}}\}$, where
$\bar{\cE}$ denotes the complement of $\cE$. The set $S(\cE)$
contains the codewords that cannot be distinguished from $x$ given
the erasure pattern $\cE$. Since if $|S(\cE)|=1$, then
$S(\cE)=\{x\}$ and the MAP decoder
is guaranteed to decode $x$ correctly, 
to show vanishing error probability on $\qEC_{\rho'}$
it is sufficient to upper
bound the probability that $|S(\cE')|>1$.

If $|\cE_0|\le \rho n$, then by the erasure list decoding property
it holds $|S(\cE_0)|\le L$. Fix some $\cE_0$ such that
$|S(\cE_0)|\le L$ and let us write $S(\cE_0)=\{x,y_1,\ldots,y_k\}$ for
some $k<L$. Observe that, conditioned on $\cE_0$, 
for every $1\le j\le k$ it holds
\begin{align*}
    \Pr\left[y_j\in \cE'\;\vert\; \cE_0\right]
    &=(1-\eps)^{\Delta(x,y_j)}\le (1-\eps)^{d(\cC_n)}\;.
\end{align*}
Applying the union bound, it holds $\Pr[|S(\cE')|>1\;\vert\;\cE_0]
\le L(1-\eps)^{d(\cC_n)}$. Since $\Pr[|\cE_0|>\rho n]=o(1)$ by the Hoeffding
bound, we have
\begin{align*}
\Pr\left[|S(\cE')|>1\right]
&\le \Pr\left[|\cE_0|>\rho n\right]
+\Pr\left[|S(\cE')>1|\;\vert\; |\cE_0|\le\rho n\right]\\
&\le o(1)+L(1-\eps)^{d(\cC_n)}=o(1)\;,
\end{align*}
where in the last step we invoke the assumption
$d(\cC_n)\ge \omega(\log L)$.
\end{proof}

\begin{proof}[Proof of~\Cref{thm:unconditional_bound}]
    Let $\lambda_*\coloneqq \frac{q}{q-1}\delta$
    and $p_*\coloneqq p_*^{(q)}(\lambda_*,\delta)$.
    Let $p<p_*$ and $\{\cC_n\}_n$ a $q$-ary linear code family with
    $\delta(\cC_n)\ge \delta$.
    If $p<\delta/2$, then
    $P_B(\cC_n,\qSC_p)\to 0$ follows trivially
    from $\delta(\cC_n)\ge\delta$.
    Otherwise,
    from \Cref{def:q_ary_functions}, the condition $p<p_*$
    implies $F^{(q)}(\delta,p)>\delta\log_q\left(\frac{q-1}{q^{\lambda_*}-1}\right)$. By continuity, there exists
    some $\lambda<\lambda_*$ such that
    $F^{(q)}(\delta,p)>\delta\log_q\left(\frac{q-1}{q^{\lambda}-1}\right)$.
    Since $F^{(q)}$ is continuous and decreasing in $p$
    (\Cref{lem:M_q_continuity} and \Cref{lem:fq-decreasing}),
    it follows that $p<p_*^{(q)}(\lambda, \delta)$.
    
    By \Cref{clm:erasure_list_decocable} applied for $\eps=\frac{1}{n}$, $\cC_n$ is $\left(\frac{q\delta}{q-1}-\frac{\delta}{n},\frac{qn}{(q-1)}\right)$-erasure list decodable. Then, by \Cref{clm:erasure_reliability}, it follows that 
    $\lim_{n\to\infty}P_{B}(\cC_n,\qEC_{\lambda'})=0$ for every $\lambda'<\lambda_*$, in particular
    $P_{B}(\cC_n,\qEC_{\lambda})=0$ for our previous choice of $\lambda$.
    Since $p<p_*^{(q)}(\lambda,\delta)$
    (and the vanishing block error probability implies
    vanishing bit error probability), from
    \Cref{Thm:1} it follows $\lim_{n\to\infty}P_B(\cC_n,\qSC_p)=0$.

    Since we showed vanishing block error probability for
    every linear code family with $\delta(\cC_n)\ge\delta$ and every
    $p<p_*$, indeed it holds $\pLsym(q,\delta)\ge p_*$.
\end{proof}

\section{Proofs of the properties of
$M^{(q)}$ and $F^{(q)}$}
\label{sec:m-f-proofs}

\subsection{``Almost closed form''
for $M^{(q)}$
}

While the function $M^{(q)}$ does not
seem to have a simple closed form for
$q\ge 3$, we can give an ``almost''
closed form for it. This is useful not only for
numerically computing the values of
$M^{(q)}$, but also in several of the proofs
that follow.

\begin{lemma}
\label{lem:almost-closed-form}
Let $0\le p\le 1-1/q$ and $0\le\gamma\le\min(2p,1)$.
Let $C_q\coloneqq\frac{(q-2)^2}{4(q-1)}$,
$\beta(\zeta)\coloneqq\frac{1}{1+C_q(\frac{1-\zeta}{\zeta})^2}$.
Then,
        \begin{align*}
            M^{(q)}(\gamma,p)=\gamma\tilde{H}_q(\zeta)+(1-\gamma)H_q(\beta(\zeta))
        \end{align*}
        for the unique $\zeta$ such that $\gamma\frac{\zeta}{2}+(1-\gamma)\beta(\zeta)=p-\frac{\gamma}{2}$.
\end{lemma}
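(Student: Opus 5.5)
The plan is to treat the definition of $M^{(q)}(\gamma,p)$ as a concave maximization problem with one linear constraint and to solve it with Lagrange/KKT conditions; I take $q\ge 3$ here, where $\Htilde_q$ is defined. The objective $\Phi(\zeta,\beta)\coloneqq\gamma\Htilde_q(\zeta)+(1-\gamma)H_q(\beta)$ is concave on $[0,1]^2$ by \Cref{clm:3} and the concavity of $H_q$. It is strictly concave in each variable that carries a positive weight. The feasible set $D(\gamma,p)$ is a convex polytope, so a KKT point is a global maximizer.

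\emph{Step 1: the constraint is active.} The unconstrained maximizer of $\Phi$ is $(\zeta,\beta)=(1-2/q,1-1/q)$. Substituting it into the constraint gives $\gamma\frac{1-2/q}{2}+(1-\gamma)(1-1/q)=(1-1/q)-\frac{\gamma}{2}$. This is $\ge p-\frac{\gamma}{2}$, with equality only when $p=1-1/q$. Hence, by \Cref{clm:3} and the monotonicity of $H_q$, when $p<1-1/q$ the maximizer lies on the line $\gamma\frac{\zeta}{2}+(1-\gamma)\beta=p-\frac{\gamma}{2}$. Indeed, if the constraint were slack, we could move $(\zeta,\beta)$ towards $(1-2/q,1-1/q)$ and strictly increase $\Phi$. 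When $p=1-1/q$ the unconstrained maximizer is itself feasible and lies on that line, and I will check that $\beta(1-2/q)=\frac{1}{1+C_q\cdot 4/(q-2)^2}=\frac{q-1}{q}$, so the formula still holds.

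\emph{Step 2: interior stationarity.} On the active line I will rule out the boundary values of $\zeta$ and $\beta$. The one-sided derivatives satisfy $\Htilde_q'(0^+)=H_q'(0^+)=+\infty$ and $\Htilde_q'(1^-)=H_q'(1^-)=-\infty$. So moving along the line away from a boundary point strictly increases $\Phi$, except in degenerate cases, such as when the right-hand side $p-\gamma/2$ forces $\zeta$ or $\beta$ to an endpoint. I will handle those directly, and the formula is consistent with them since $\beta(0)=0$ and $\beta(1)=1$. In the interior, Lagrange multipliers give $\gamma\Htilde_q'(\zeta)=\eta\gamma/2$ and $(1-\gamma)H_q'(\beta)=\eta(1-\gamma)$, that is, $H_q'(\beta)=2\Htilde_q'(\zeta)$. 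I will use $\Htilde_q'(\zeta)=\log_q\frac{(q-2)(1-\zeta)}{2\zeta}$ and $H_q'(\beta)=\log_q\frac{(q-1)(1-\beta)}{\beta}$. The condition then becomes $\frac{1-\beta}{\beta}=\frac{(q-2)^2}{4(q-1)}\left(\frac{1-\zeta}{\zeta}\right)^2$, which is exactly $\beta=\beta(\zeta)$.

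\emph{Step 3: existence and uniqueness of $\zeta$.} Let $g(\zeta)\coloneqq\gamma\frac{\zeta}{2}+(1-\gamma)\beta(\zeta)$. This function is continuous and strictly increasing on $[0,1]$, because $\beta(\zeta)$ is strictly increasing. (If $\gamma=1$ then $g=\zeta/2$, and if $\gamma=0$ then $g=\beta$; both are still strictly increasing.) Its range is $[0,1-\gamma/2]$. The target $p-\gamma/2$ lies in this range because $\gamma\le 2p$ and $p\le 1-1/q\le 1$. So there is a unique solution $\zeta$. Because the maximizer is unique in the weighted variables, this $\zeta$ gives the value $M^{(q)}(\gamma,p)$. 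In the degenerate cases $\gamma\in\{0,1\}$ the unweighted variable does not affect the value, so the formula still holds.

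I expect the main obstacle to be Step 2: making the boundary and degenerate cases rigorous without a clean KKT argument. This covers $\gamma\in\{0,1\}$, $p=\gamma/2$ (which forces $\zeta=\beta=0$), and $p=1-1/q$. For these I would fall back on a direct one-dimensional argument: parametrize the active line by $\zeta$ and show that the resulting function of $\zeta$ is concave, with derivative $\gamma\bigl(\Htilde_q'(\zeta)-\tfrac12H_q'(\beta)\bigr)$, which vanishes exactly at the $\zeta$ found above.
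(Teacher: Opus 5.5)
Your proposal is correct and follows the paper's proof. You show the constraint is active by moving toward the unconstrained maximizer $(1-2/q,1-1/q)$, solve the Lagrange conditions $H_q'(\beta)=2\Htilde_q'(\zeta)$ to get $\beta=\beta(\zeta)$, obtain uniqueness from the strict monotonicity of $\zeta\mapsto\gamma\zeta/2+(1-\gamma)\beta(\zeta)$, and exclude boundary points using the infinite one-sided derivatives. Your explicit use of concavity to certify the stationary point as the global maximizer, and your separate checks of the degenerate cases $\gamma\in\{0,1\}$, $p=\gamma/2$, and $p=1-1/q$, make rigorous steps the paper leaves implicit. You also correctly note that the range of $g$ is $[0,1-\gamma/2]$, whereas the paper states it as $[0,1]$.
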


Since $\beta(\zeta)$ is a strictly increasing
function of $\zeta$, it is easy to see that the
value of $\zeta$ is unique. It also means that
$\zeta$, and consequently $M^{(q)}(\gamma,p)$
can be efficiently computed via binary search.
(In principle one can also determine the value
of $\zeta$ by reducing the formula
for $\beta(\zeta)$ to a polynomial equation
of degree 3. However, this seems more cumbersome.)

\begin{proof}
Let $D\coloneqq D(\gamma,p)$
and $T(\zeta,\beta)\coloneqq
\gamma\Htilde_q(\zeta)+(1-\gamma)H_q(\beta)$.
Consider a pair of values $(\zeta,\beta)\in D$
that realizes the supremum in the definition
of $M^{(q)}(\gamma,p)$.
First, let us argue that we can assume that
$\gamma\zeta/2+(1-\gamma)\beta=p-\gamma/2$.
Indeed, note that
$\gamma(1-2/q)/2+(1-\gamma)(1-1/q)=1-1/q-\gamma/2
\ge p-\gamma/2$. Therefore, if
$\gamma\zeta/2+(1-\gamma)\beta<p-\gamma/2$,
then either $\zeta<1-2/q$ or $\beta<1-1/q$.
If $\zeta<1-2/q$, then for some small $\eps>0$
it holds $(\zeta+\eps,\beta)\in D$
and $T(\zeta+\eps,\beta)\ge T(\zeta,\beta)$.
A symmetric argument applies if $\beta<1-1/q$.

To sum up, it holds that
\begin{align}
\label{eq:M_achive_at_sphere}
    M^{(q)}(\gamma,p)=\sup_{\substack{0\le \zeta,\beta\le 1\\
    \gamma\zeta/2+(1-\gamma)\beta=p-\gamma/2}} T(\zeta,\beta)\;.
\end{align}
Let us apply the Lagrange multiplier method 
to $T(\zeta,\beta)$
with the constraint
$\gamma\zeta/2+(1-\gamma)\beta=p-\gamma/2$.
The Lagrangian is 
    \begin{align*}
        \mathcal{L}(\zeta,\beta,\lambda')= \gamma\tilde{H}_q(\zeta)+(1-\gamma)H_q(\beta)+\lambda'\Big(p-\frac{\gamma}{2}-\gamma\frac{\zeta}{2}-(1-\gamma)\beta\Big),
    \end{align*}
and we have 
    \begin{align*}
        \nabla \mathcal{L}=0\iff\begin{cases}
            \log_q(\frac{1-\zeta}{\zeta})+\log_q(\frac{q-2}{2})=\lambda'/2 &\\ \log_q(\frac{1-\beta}{\beta})+\log_q(q-1)=\lambda' &\\ \gamma\frac{\zeta}{2}+(1-\gamma)\beta=p-\frac{\gamma}{2}\;.
        \end{cases}
    \end{align*}
    By substituting $\lambda'$ from the second equation into the first equation, we get $\log\left((\frac{1-\zeta}{\zeta})^2\times\frac{\beta}{1-\beta}\right)=\log(\frac{1}{C_q})$ i.e., $\beta=\frac{1}{1+C_q(\frac{1-\zeta}{\zeta})^2}$.
    
    Since the function
    $\zeta\mapsto \gamma\zeta/2+(1-\gamma)\beta(\zeta)$
    is strictly increasing from 0 to $1$ for
    $0\le\zeta\le 1$, indeed there is unique
    $\zeta$ satisfying 
    $\gamma\zeta/2+(1-\gamma)\beta(\zeta)=p-\gamma/2$. By the Lagrange multiplier
    calculation above, this point must achieve
    the supremum of $T$. (The only other possibilities
    have $\zeta\in\{0,1\}$ or
    $\beta\in\{0,1\}$, but these
    are excluded by the fact that
    $\lim_{\zeta\to 0^+}\Htilde'_q(\zeta)
    =\lim_{\beta\to 0^+}H'_q(\beta)=\infty$,
    as well as 
    $\lim_{\zeta\to 1^-}\Htilde'_q(\zeta)
    =\lim_{\beta\to 1^-}H'_q(\beta)=-\infty$.)
\end{proof}

\subsection{Proof of \Cref{lem:qary_balls_intersection}}

If $w>2t$, then clearly $\mu_q(n,t,w)=0$,
so we focuse on the case $w\le 2t$.
Below we proceed in two cases,
binary and $q\ge 3$.

\paragraph{Binary case $q=2$.}
   Let $B\coloneqq
   B(0^n,t)\cap B(1^w0^{n-w},t)$, so $\mu_2(n,t,w)=|B|$. Assume w.l.o.g.~that
   $\Sigma=\mathbb{F}_2$.
   For a given $y\in\mathbb{F}_2^n$, let $a(y):=|\{1\le j\le w: y_j=1\}|$, $b(y):=|\{w<j\le n: y_j=1\}|$. Then,
   \begin{align*}
       y\in B \iff \begin{cases}a(y)+b(y)\le t \\ w-a(y)+b(y)\le t\;.\end{cases}
   \end{align*}
   So, 
   \begin{align*}    \mu_2(n,t,w)=\sum_{a=0}^{w}\sum_{b=0}^{n-w} \mathds{1}[a+b\le t]\mathds{1}[w-a+b\le t]\binom{w}{a} \binom{n - w}{b}.
   \end{align*}
   
   Recall that for every $0\le\zeta\le \gamma\le1\ \text{we have}\ \binom{\gamma n}{\zeta n} \le2^{\gamma n h(\frac{\zeta}{\gamma})}$. Then, substituting $w=\gamma n, a=\zeta \gamma n, b=\beta(1-\gamma) n$, 
   as the sum has at most $n^2$ nonzero terms, we have 
   \begin{align*}
       \mu_2(n,t,w)&=\underset{\substack{0\le a\le \gamma n\\0\le b\le (1-\gamma) n}}{\sum} 
       \mathds{1}[a+b\le pn]\mathds{1}[\gamma n-a+b\le pn]
       \binom{\gamma n}{\zeta \gamma n} \binom{(1-\gamma)n}{\beta(1-\gamma) n}
       \\&\le n^2
       \sup_{(\zeta,\beta)\in D} \exp_2\left(n[\gamma h(\zeta)+(1-\gamma)h(\beta)]\right)\;,
   \end{align*}
where $D\coloneqq\Big\{(\zeta,\beta)\in [0,1]^2:\gamma\zeta + (1-\gamma)\beta\le p,\ \gamma(1-\zeta) + (1-\gamma)\beta\le p\Big\}$. To conclude the proof,
it is sufficient to show that

\begin{align*}
\sup_{(\zeta,\beta)\in D}
\gamma h(\zeta)+(1-\gamma)h(\beta)
=\gamma+(1-\gamma)h\left(
\frac{p-\gamma/2}{1-\gamma}
\right)
=M(\gamma,p)\;.
\end{align*}
To see that, first note
that 
$\left(1/2,\frac{p-\gamma/2}{1-\gamma}\right)\in D$,
therefore $\mathrm{LHS}\ge\mathrm{RHS}$.
On the other hand, if $(\zeta,\beta)\in D$,
then $(1/2,\beta)\in D$, and if
$(1/2,\beta)\in D$, then
$\beta\le\frac{p-\gamma/2}{1-\gamma}
\le 1/2$. So, indeed
\begin{align*}
\gamma h(\zeta)
+(1-\gamma)h(\beta)
&\le \gamma+(1-\gamma)h\left(\frac{p-\gamma/2}{1-\gamma}\right)=M(\gamma,p)\;.
\pushQED{\qed}\qedhere\popQED
\end{align*}

\paragraph{Large alphabets $q\ge 3$.}
As in the binary case, assume $\Sigma=\mathbb{Z}_q$,
and let $B\coloneqq B(0^n,t)\cap B(1^{w}0^{n-w},t)$, so $\mu_q(n,t,w)=|B|$. For a given $y\in\mathbb{F}_q^n$, let $a(y)\coloneqq |\{0\le j\le w: y_j=0\}|$, $c(y)\coloneqq|\{0\le j\le w: y_j=1\}|$ and $b(y)\coloneqq|\{w<j\le n: y_j=0\}|$.
   By definition,
\begin{align*}
       y\in B \iff \begin{cases}n-a(y)-b(y)\le t \\ n-c(y)-b(y)\le t\end{cases}\ ,
   \end{align*}
so
\begin{align*}
   \mu_q(n,t,w)&=\sum_{a=0}^{w}\sum_{c=0}^{w-a}
    {\sum_{b=0}^{n-w}} \Bigg(
    \mathds{1}\left[n-a-b\le t\right]
    \mathds{1}\left[n-c-b\le t\right] \cdot \\
    &\qquad\qquad \cdot \binom{w}{a}\binom{w-a}{c}(q-2)^{w-a-c} \binom{n - w}{b}(q-1)^{n-w-b}\Bigg)
    \\&=\sum_{s=0}^{w}\sum_{a=0}^{s}
    {\sum_{b=0}^{n-w}} 
    \Bigg(
    \mathds{1}\left[\max(n-a-b,n-s+a-b)\le t\right]
    \cdot \\
    &\qquad\qquad \cdot
    \binom{w}{s}\binom{s}{a}(q-2)^{w-s}
    \binom{n - w}{b}(q-1)^{n-w-b}\Bigg)
    \\&\le\sum_{s=0}^{w}\sum_{b=0}^{n-w}
    \mathds{1}\left[n-s/2-b\le t\right]
    2^s\binom{w}{s}(q-2)^{w-s}
    \binom{n - w}{b}(q-1)^{n-w-b}\;.
  \end{align*}
In the last step we used the fact that
$n-s/2-b\le\max(n-a-b,n-s+a-b)$.
Substituting $b=\beta(1-\gamma)n$, and $s=2\zeta\gamma n$, we get 
\begin{align*}
    \mu_q(n,t,w)&\le\sum_{s=0}^{w}
    {\sum_{b=0}^{n-w}}
    \mathds{1}\left[n-s/2-b\le t\right] 2^{2\zeta\gamma n}
    \binom{\gamma n}{2\zeta\gamma n}
    (q-2)^{(1-2\zeta)\gamma n} \binom{(1-\gamma)n}{\beta(1-\gamma)n}(q-1)^{(1-\beta)(1-\gamma)n}
    \\&\le n^2\exp_2\left(n
    \underset{\substack{(\zeta,\beta)\in[0,1/2]\times[0,1]\\1- \zeta\gamma - \beta(1-\gamma) \le p}}{\sup}\gamma \left[2\zeta+h(2\zeta)+(1-2\zeta)\log(q-2)\right]+(1-\gamma)\left[h(\beta)+(1-\beta)\log(q-1)\right]
    \right)\;.
\end{align*}
We have $2\zeta+h(2\zeta)+(1-2\zeta)\log(q-2)=-2\zeta\log(\zeta)-(1-2\zeta)\log(\frac{1-2\zeta}{q-2})=\log(q)\tilde{H}_q(1-2\zeta)$, 
and $h(\beta)+(1-\beta)\log(q-1)=\log(q)H_q(1-\beta)$. Since
\begin{align*}
\underset{\substack{(\zeta,\beta)\in[0,1/2]\times[0,1]\\1- \zeta\gamma - \beta(1-\gamma) \le p}}{\sup}\gamma\tilde{H}_q(1-2\zeta)+(1-\gamma)H_q(1-\beta)
&=\underset{\substack{(\zeta,\beta)\in[0,1]^2\\1-(\frac{1-\zeta}{2})\gamma - (1-\beta)(1-\gamma) \le p}}{\sup}\gamma\tilde{H}_q(\zeta)+(1-\gamma)H_q(\beta)\\
&=\sup_{(\zeta,\beta)\in D(\gamma,p)}
\gamma\tilde{H}_q(\zeta)+(1-\gamma)H_q(\beta)
=M^{(q)}(\gamma,p)\;,
\end{align*}
indeed it holds
\begin{align*}
    \mu_q(n,t,w)&
    \le n^2 \exp_2{\left(n \log q
    \sup_{(\zeta,\beta)\in D(\gamma,p)}
\gamma\tilde{H}_q(\zeta)+(1-\gamma)H_q(\beta)
    \right)}\\&=
    n^2\exp_q\left(nM^{(q)}(\gamma,p)\right)\;.
    \pushQED{\qed}\qedhere\popQED
\end{align*}

\subsection{Proof of \Cref{lem:nu_t lower bound_general}}

The structure of the proof is similar
as in \Cref{lem:qary_balls_intersection}.
For  $w > 2t$, the result is obvious.
For $w\le 2t$, we consider $q=2$
and $q\ge 3$ separately.

\paragraph{Binary case $q=2$.}
For any $y\in\bF^n_q$, let $a(y):=|\{1\le i\le w : y_i=0\}|$ and $b(y):=|\{w<i\le n: y_i= 1\}|$. We have $\Delta(0^n,y)=w-a(y)+b(y)$ and $\Delta(1^w0^{n-w})=a(y)+b(y)$,
thus $y\in\bF^n_q$ satisfies $\Delta(0^n,y)=t$ and $\Delta(1^w0^{n-w},y)=t$
if and only if
\begin{align*}
    w-a(y)+b(y)=a(y)+b(y)=t\;.
\end{align*}
These equations have the unique solution
$a(y)=w/2$ and $b(y)=t-w/2$.
So, for odd $w$ there is no integer solution
and $\nu_2(n,t,w)=0$. On the other hand,
for even $w$ we can check
that $0\le w/2\le w$ and
$0\le t-w/2\le (n-w)/2\le n-w$,
and therefore
\begin{align*}
\nu_2(n,t,w)&=
\binom{w}{w/2}\binom{n-w}{t-w/2}
=\binom{\gamma n}{\frac{\gamma/2}n}
\binom{(1-\gamma)n}{(p-\gamma/2)n}
\ge\frac{1}{4n}
\exp_2\left(n\left[\gamma+(1-\gamma)
h\left(\frac{p-\gamma/2}{1-\gamma}\right)\right]\right)\\
&=\frac{1}{4n}\exp_2\left(nM(\gamma,p)\right)\;.
\end{align*}
where in the last step we used a bound on binomial
coefficients $\binom{n}{\gamma n}\ge 
2^{h(\gamma)n}/(2\sqrt{n})$, which is valid
for all $n\ge 1$ and $0\le\gamma n\le n$.
(Also in the special case $\gamma=1,p=1/2$
and even $n$ it is easy to check
that $\nu_2(n,n/2,n)=\binom{n}{n/2}\ge\frac{1}{2\sqrt{n}}
2^n=\frac{1}{2\sqrt{n}}\exp_2(nM(1,1/2))$.)
\qed

\paragraph{Larger alphabets $q\ge 3$.}
Recall that we write $w=\gamma n$ and
$t=p n$.
  Let $S\coloneqq S(0^n,t)\cap S(1^w 0^{n-w},t)$,
  so that $\nu_q(n,t,w)=|S|$.
  For $y\in \Sigma^n$, let $a(y):=|\{1\le i\le w : y_i=0\}|$, $c(y):=|\{ 1\le i\le w : y_i= 1\}|$ and $b(y):=|\{w<i\le n: y_i= 0\}|$,
  as well as $s(y)\coloneqq a(y)+c(y)$.
  Then $y\in S$ if and only if
\begin{align*}
        n-a(y)-b(y)=n-c(y)-b(y)=t\;,
\end{align*}
  which implies $a(y)=c(y)$.
  Let
  \begin{align*}
  D_0\coloneqq 
  \left\{(s,b): 0\le s\le w, 0\le b\le n-w,
  s\text{ even}, n-s/2-b=t\right\}\;,
  \end{align*}
  and let us write $s=(1-\zeta)\gamma n$ and
  $b=(1-\beta)(1-\gamma)n$.
  Let $T(\gamma,\zeta,\beta)\coloneqq\gamma \Htilde_q(\zeta)+(1-\gamma)H_q(\beta)$.
  We will use the binomial coefficient bound
  $\binom{n}{\gamma n}\ge 
  \frac{\exp_2(h(\gamma)n)}{2\sqrt{n}}$. Accordingly,
  \begin{align}
  \nu_q(n,t,w)&=\sum_{(s,b)\in D_0} \binom{w}{s}\binom{s}{s/2}
  (q-2)^{w-s}\binom{n-w}{b}(q-1)^{n-w-b}\nonumber\\
  &=\sum_{(s,b)\in D_0} \binom{\gamma n}{\zeta\gamma n}
  \binom{(1-\zeta)\gamma n}{(1/2-\zeta/2)\gamma n}
  (q-2)^{\gamma\zeta n}\binom{(1-\gamma)n}{\beta(1-\gamma)n}
  (q-1)^{(1-\gamma)\beta n}\nonumber\\
  &\ge \frac{1}{8n^{3/2}}\sum_{(s,b)\in D_0}
  \exp_2\left(n\left[
  \gamma h(\zeta)+\gamma(1-\zeta)+
  \gamma \zeta\log (q-2)
  +(1-\gamma)h(\beta)+(1-\gamma)\beta\log(q-1)
  \right]\right)\nonumber\\
  &= \frac{1}{8n^{3/2}}\sum_{(s,b)\in D_0}
  \exp_q(nT(\gamma,\zeta,\beta))
  \ge \frac{1}{8n^{3/2}}\exp_q\left(n
  \max_{(s,b)\in D_0}T(\gamma,\zeta,\beta)
  \right)\;.
  \label{eq:27}
  \end{align}
  Let 
  \begin{align*}
  D&\coloneqq\left\{(\zeta,\beta):0\le\zeta,\beta\le 1,\
  \gamma\frac{1+\zeta}{2}+(1-\gamma)\beta=p\right\}\;.
  \end{align*}
  From \Cref{lem:almost-closed-form} (see also~\eqref{eq:M_achive_at_sphere}) it follows that
  \begin{align*}
  M^{(q)}(\gamma,p)=\sup_{(\zeta,\beta)\in D}T(\gamma,\zeta,\beta)\;,
  \end{align*}
  and in light of~\eqref{eq:27} it remains to show
  $|M^{(q)}(\gamma,p)-\max_{(s,b)\in D_0}T(\gamma,\zeta,\beta)|\le o(1)$,
  uniformly in $\gamma$ and $p$.
  To that end, let $(\zeta^*,\beta^*)\in D$ be the point
  given by \Cref{lem:almost-closed-form} such that
  $M^{(q)}(\gamma, p)=T(\gamma,\zeta^*,\beta^*)$.
  Let $s^*\coloneqq (1-\zeta^*)\gamma n$ and 
  $b^*\coloneqq (1-\beta^*)(1-\gamma)n$.
  Note that $0\le s^*\le w$, $0\le b^*\le n-w$,
  and that $b^*=n-t-s^*/2$.
  
  Now let $s_0\coloneqq 2\lfloor s^*/2 \rfloor$
  and $b_0\coloneqq n-t-s_0/2$ and write
  $s_0=(1-\zeta_0)\gamma n$ and $b_0=(1-\beta_0)(1-\gamma)n$.
  Let us argue that $(s_0,b_0)\in D_0$. Indeed, by construction 
  $s_0$ is an even integer and $n-s_0/2-b_0=t$.
  Furthermore, $0\le s_0\le s^*\le w$. Finally, observe that
  $b_0=n-t-s_0/2=n-t-\lfloor s^*/2\rfloor=\lceil n-t-s^*/2\rceil
  =\lceil b^*\rceil$.
  Since $n-w$ is integer, from $0\le b^*\le n-w$ it follows
  $0\le b^* \le b_0\le n-w$. Hence, $(s_0,b_0)\in D_0$.

  Our objective is to show that
  $|T(\gamma,\zeta^*,\beta^*)-
  \max_{(s,b)\in D_0} T(\gamma,\zeta,\beta)|$ goes to $0$,
  uniformly in $\gamma$ and $p$. It is easy to see that
  if $(s,b)\in D_0$, then the respective $(\zeta,\beta)\in D$,
  so $\max_{(s,b)\in D_0} T(\gamma,\zeta,\beta)\le T(\gamma,\zeta^*,\beta^*)$. Therefore it suffices to bound
  \begin{align}
  \label{eq:25}
  T(\gamma,\zeta^*,\beta^*)-\max_{(s,b)\in D_0} T(\gamma,\zeta,\beta)
  &\le T(\gamma,\zeta^*,\beta^*)-T(\gamma,\zeta_0,\beta_0)
  \le o(1)\;.
  \end{align}
  To that end, we will change variables one more time. Let
  \begin{align*}
  T'(\gamma,u,v)\coloneqq \gamma \Htilde_q\left(\frac{u}{\gamma}\right)
  +(1-\gamma)H_q\left(\frac{v}{1-\gamma}\right)\;,
  \end{align*}
  where we extend $T'$ for $\gamma\in\{0,1\}$ assuming that
  $0\cdot \Htilde_q(u/0)=0\cdot H_q(v/0)=0$. It is easy
  to check that the function $T'$ is continuous on the compact set
  $0\le \gamma,u,v\le 1$, and therefore uniformly continuous.
  Our objective from~\eqref{eq:25} reduces to showing
  \begin{align}
  \label{eq:26}
  \left| T'\big(\gamma,\gamma\zeta^*,(1-\gamma)\beta^*\big)
  -T'\big(\gamma,\gamma\zeta_0,(1-\gamma)\beta_0\big)\right|\le o(1)\;.
  \end{align}
  However, we have $|\gamma \zeta^*n-\gamma \zeta_0 n|=|s_0-s^*|\le 2$,
  hence $|\gamma \zeta^*-\gamma \zeta_0|\le 2/n$, which goes to
  0 uniformly. Similarly, it holds
  $|(1-\gamma)\beta^* n-(1-\gamma)\beta_0 n|=|b-b^*|\le 1$,
  hence $|(1-\gamma)\beta^*-(1-\beta)\beta_0|\le 1/n$.
  Therefore, \eqref{eq:26} holds by the uniform continuity of $T'$.
\qed

\subsection{Proof of \Cref{lem:M_q_continuity}}
In the case $q=2$ the statement is obvious
from the formula~\eqref{eq:11},
and it is easy to check that the function
is continuous with $M(1,1/2)=1$.
Hence, assume $q\ge 3$.
Let $0\le p\le1-1/q, 0\le \gamma\le\min(2p,1)$. From \Cref{lem:almost-closed-form}, there exits a unique $\zeta(\gamma,p)$ satisfying $\gamma\frac{\zeta(\gamma,p)}{2}+(1-\gamma)\beta(\zeta(\gamma,p))=p-\frac{\gamma}{2}$ such that 
\begin{align*}
            M^{(q)}(\gamma,p)=\Phi(\gamma,\zeta(\gamma,p))\;,
        \end{align*}
where $\Phi(\gamma,\zeta)\coloneqq\gamma\tilde{H}_q(\zeta)+(1-\gamma)H_q(\beta(\zeta))$. The function $\zeta\mapsto\beta(\zeta)$ is continuous, thus $\Phi$ is continuous as a sum and composition of continuous functions. So, it is sufficient to show that $\zeta(\gamma,p)$ depends continuously on $(\gamma,p)$. Let's consider the function $\Psi(\gamma,p,\zeta)= \gamma\zeta/2+(1-\gamma)\beta(\zeta)-p+\gamma/2$. $\Psi$ is continuous with respect to all its variables and $\zeta(\gamma,p)$ is the unique solution of the equation $\Psi(\gamma,p,\zeta)=0$. Consider a sequence $\{(\gamma_n,p_n)\}_n$ that converges to $(\gamma,p)$. Let $\zeta_n:=\zeta(\gamma_n,p_n)$, we want to show that $\zeta_n\to\zeta(\gamma,p)$. Let $\{\zeta_{n_k}\}_k$ be a subsequence of $\{\zeta_n\}_n$ that converges to some $\zeta_*$, since $\Psi$ is continuous, we have $\lim_k \Psi(\gamma_{n_k},p_{n_k},\zeta_{n_k})= \Psi(\gamma,p,\zeta_*)$. Moreover, by definition  of $\zeta_{n_k}$, we have $\Psi(\gamma_{n_k},p_{n_k},\zeta_{n_k})=0$, so $\Psi(\gamma,p,\zeta_*)=0$, thus $\zeta_*=\zeta(\gamma,p)$, hence $\zeta_{n_k}\to\zeta(\gamma,p)$.
Since this holds for every convergent
subsequence, it follows
$\zeta_n\to\zeta(\gamma,p)$.

Furthermore, the domain $\{(\gamma,p): 0\le p\le1-1/q, 0\le \gamma\le\min(2p,1)\}$ is compact, therefore $M^{(q)}$ is uniformly continuous.\qed

\subsection{Proof of \Cref{lem:fq-decreasing}}
\paragraph{Binary case $q=2$.}
It is sufficient to show that $\frac{\partial F(\gamma,p)}{\partial p}< 0$
        for $\gamma/2<p<1/2$.
        By direct calculation, $\frac{\partial F(\gamma,p)}{\partial p}=h'(p)-h'\left(\frac{p-\frac{\gamma}{2}}{1-\gamma}\right)$. Since $h''(p)<0$ for all $0<p<1/2$, i.e., $h'$ is decreasing, and for $p<1/2$ and
        $0<\gamma<1$ we have  $\frac{p-\frac{\gamma}{2}}{1-\gamma}< p$, thus $h'(p)<h'\left(\frac{p-\frac{\gamma}{2}}{1-\gamma}\right)$ and therefore $\frac{\partial F_{\lambda}(\gamma,p)}{\partial p}<0$.\qed

\paragraph{Larger alphabets $q\ge 3$.}
Recall that $F^{(q)}(\gamma,p)= H_q(p)-M^{(q)}(\gamma, p)$. As $F^{(q)}$ is continuous,
it suffices to show that $\frac{\partial F^{(q)}(\gamma,p)}{\partial p}< 0$ 
for $\gamma/2< p<1-1/q$,
i.e., $H_q'(p)< \frac{\partial M^{(q)}(\gamma,p)}{\partial p}$. From \Cref{lem:almost-closed-form}, there exists unique $\zeta:=\zeta(p)$ such that $M^{(q)}(\gamma,p)=\gamma\tilde{H}_q(\zeta)+(1-\gamma)H_q(\beta(\zeta))$ with $\gamma\frac{\zeta}{2}+(1-\gamma)\beta(\zeta)=p-\frac{\gamma}{2}$. Thus,
\begin{align*}
    \frac{\partial M^{(q)}(\gamma,p)}{\partial p}=\gamma\zeta'\tilde{H}_q'(\zeta)+(1-\gamma)\zeta'\beta'(\zeta)H_q'(\beta(\zeta))\;,
\end{align*}
where we write $\zeta'=\zeta'(p)$ for short
(note that $\zeta$ is differentiable by the implicit function theorem).
Applying the derivative with respect to $p$ to the constraint, we get $\gamma\frac{\zeta'}{2}+(1-\gamma)\zeta'\beta'(\zeta)=1$ i.e., $(1-\gamma)\zeta'\beta'(\zeta)=1-\gamma\frac{\zeta'}{2}$. Recall that $H_q'(\zeta)=\log_q\left(\frac{(q-1)(1-\zeta)}{\zeta}\right)$, $\tilde{H}_q'(\zeta)=\log_q\left(\frac{(q-2)(1-\zeta)}{2\zeta}\right)$ and $\frac{1-\beta(\zeta)}{\beta(\zeta)}=\frac{(q-2)^2}{4(q-1)}\left(\frac{1-\zeta}{\zeta}\right)^2$. Then, 
\begin{align*}
    \frac{\partial M^{(q)}(\gamma,p)}{\partial p}&=\gamma\zeta'\tilde{H}_q'(\zeta)+\left(1-\frac{\gamma\zeta'}{2}\right)H_q'(\beta(\zeta))
    \\&=\gamma\zeta'\log_q\left(\frac{(q-2)(1-\zeta)}{2\zeta}\right)+\left(1-\frac{\gamma\zeta'}{2}\right)\log_q\left(\frac{(q-1)(1-\beta(\zeta))}{\beta(\zeta)}\right)
    \\&=\gamma\zeta'\log_q\left(\frac{(q-2)(1-\zeta)}{2\zeta}\right)+\frac{2-\gamma\zeta'}{2}\log_q\left(\left(\frac{(q-2)(1-\zeta)}{2\zeta}\right)^2\right)
    \\&=2\log_q\left(\frac{(q-2)(1-\zeta)}{2\zeta}\right)\;.
\end{align*}
So, $\frac{\partial M^{(q)}(\gamma,p)}{\partial p}> H_q'(p)$ if and only if $\frac{(q-2)^2}{4}\left(\frac{1-\zeta}{\zeta}\right)^2>\frac{(q-1)(1-p)}{p}$ i.e., $\frac{1-\beta(\zeta)}{\beta(\zeta)}> \frac{1-p}{p}$, which is equivalent to $p> \beta(\zeta)$. Since  $p=\gamma\frac{1+\zeta}{2}+(1-\gamma)\beta(\zeta)$, 
using $\gamma>0$ it suffices to prove that $\beta(\zeta)<\frac{1+\zeta}{2}$, and we have, for
$0<\zeta<1$,
\begin{align*}
    \beta(\zeta)<\frac{1+\zeta}{2}\iff 
    0< (\zeta-1)(C_q\zeta^2+\zeta^2-C_q)
    \iff
    \zeta^2< \frac{C_q}{1+C_q} \iff \zeta<1-2/q\;.
\end{align*}
Since also $\beta(0)=0<1/2$, it is sufficient that
we show $\zeta<1-2/q$.
However, if $\zeta\ge1-2/q$, as the function $\zeta\mapsto\beta(\zeta)$ is increasing, then $\beta(\zeta)\ge\beta(1-2/q)=\frac{q-1}{q}$, thus $p\ge\gamma\frac{2-2/q}{2}+(1-\gamma)\frac{q-1}{q}=1-1/q$, which contradicts the assumption of $p< 1-1/q$.\qed

\subsection{Proof of \Cref{clm:M_gamma_concavity}}

\paragraph{Binary case $q=2$.}
        Recall that $M(\gamma,p)=\gamma+(1-\gamma)h\Big(\frac{p-\frac{\gamma}{2}}{1-\gamma}\Big)$. We can easily check that the first and second derivatives of $\gamma\mapsto M(\gamma,p)$ are, respectively
        \begin{align*}
            \frac{\partial M(\gamma,p)}{\partial\gamma}=1-h\Big(\frac{p-\frac{\gamma}{2}}{1-\gamma}\Big)
            +\frac{p-1/2}{(1-\gamma)}h'\Big(\frac{p-\frac{\gamma}{2}}{1-\gamma}\Big)
        \end{align*} and
        \begin{align*}
           \frac{\partial^2 M(\gamma,p)}{\partial\gamma^2}=\frac{(p-1/2)^2}{(1-\gamma)^3}h''\Big(\frac{p-\frac{\gamma}{2}}{1-\gamma}\Big)\;,
        \end{align*}
        where $h'$ and $h''$ are respectively the first and second derivative of the binary entropy function. Since $h''(x)\le 0$ for all $0<x<1$, then $\frac{\partial^2 M(\gamma,p)}{\partial^2\gamma}\le 0$, and function $\gamma\mapsto -M(\gamma,p)$ is convex.
        \qed

\paragraph{Larger alphabets $q\ge 3$.}
Let us prove that $-M^{(q)}$ is convex
for $\gamma\notin\{0,1\}$. Then, the convexity
for all $\gamma$ follows since $M^{(q)}$
is continuous.
Recall from \Cref{def:q_ary_functions} that
\begin{align*}
      - M^{(q)}(\gamma,p)= \underset{\substack{\zeta,\beta}}{\inf}-\gamma\tilde{H}_q(\zeta)-(1-\gamma)H_q(\beta)
\end{align*}
    with $(\zeta,\beta)\in[0,1]^2$ such that $\gamma\frac{\zeta}{2}+(1-\gamma)\beta\le p-\frac{\gamma}{2}$. Changing variables $u\coloneqq\gamma\zeta$ and $v\coloneqq(1-\gamma)\beta$, we continue 
    \begin{align*}
         -M^{(q)}(\gamma,p)= \underset{\substack{(u,v)\in [0,\gamma]\times[0,1-\gamma]\\ \frac{u}{2}+v\le p-\frac{\gamma}{2}}}{\inf}-\gamma\tilde{H}_q\left(\frac{u}{\gamma}\right)-(1-\gamma)H_q\left(\frac{v}{1-\gamma}\right).
    \end{align*}
Let $T(u,v,\gamma)\coloneqq-\gamma\tilde{H}_q(\frac{u}{\gamma})-(1-\gamma)H_q(\frac{v}{1-\gamma})$ and $D_T\coloneqq\{(u,v,\gamma): u\in[0,\gamma], v\in [0,1-\gamma], 0<\gamma<1\}$. 

Let us sketch the rest of the proof.
Recall that $-\Htilde_q(u)$ is a convex function.
The function $-\gamma\Htilde_q(u/\gamma)$ is known as the \emph{perspective}
function of $-\Htilde_q$ and it is convex (as a function
of two variables $u$ and $\gamma$) by a standard argument, see \cite[Section 3.2.6]{boyd2004convex}.
Similarly, the function $-(1-\gamma)H_q(v/(1-\gamma))$ is
convex and therefore $T(u,v,\gamma)$ is convex on $D_T$
as a sum of two convex functions.
Finally, $-M^{(q)}$ is convex as the minimum of $T$
(eliminating variables $u$ and $v$)
over a convex subset of $D_T$, see \cite[Section 3.2.5]{boyd2004convex}.

For convenience, let us give a self-contained proof
implementing the sketch above. First, let us argue that $T$
is convex on $D_T$.
Consider $(u_1,v_1,\gamma_1),(u_2,v_2,\gamma_2)\in D_T$
and $0\le\lambda\le 1$.
For $t=(t_1,t_2)$ let's denote $\bar{t}=\lambda t_1+(1-\lambda)t_2$. 
Then,
\begin{align*}
T\left(\bar{u},\bar{v},\bar{\gamma}\right)&=-\bar{\gamma}
\Htilde_q\left(\frac{\bar{u}}{\bar{\gamma}}\right)
-(1-\bar{\gamma})H_q\left(\frac{\bar{v}}{1-\bar{\gamma}}\right)
        \\&=-\bar{\gamma}\Htilde_q\left(\frac{\lambda\gamma_1}{\bar{\gamma}}\frac{u_1}{\gamma_1}+\frac{(1-\lambda)\gamma_2}{\bar{\gamma}}\frac{u_2}{\gamma_2}\right)-(1-\bar{\gamma})H_q\left(\frac{\lambda(1-\gamma_1)}{1-\bar{\gamma}}\frac{v_1}{1-\gamma_1}+\frac{(1-\lambda)(1-\gamma_2)}{1-\bar{\gamma}}\frac{v_2}{1-\gamma_2}\right)
        \\&\le-\lambda\gamma_1\Htilde_q\left(\frac{u_1}{\gamma_1}\right)-(1-\lambda)\gamma_2
\Htilde_q\left(\frac{u_2}{\gamma_2}\right)- \lambda(1-\gamma_1)H_q\left(\frac{v_1}{1-\gamma_1}\right)-(1-\lambda)(1-\gamma_2)H_q\left(\frac{v_2}{1-\gamma_2}\right)
\\&=\lambda T(u_1,v_1,\gamma_1)+ (1-\lambda)T(u_2,v_2,\gamma_2).
\end{align*}
Accordingly, $T$ is also convex on the convex subset
$D\coloneqq D_T\cap\{u/2+v\le p-\gamma/2\}$.
Furthermore, for every $\gamma$,
$D_\gamma\coloneqq\{(u,v):(u,v,\gamma)\in D\}$
is compact and convex,
so the infimum $
\inf_{u,v\in D_\gamma}T(u,v,\gamma)$
is achieved for some $u_\gamma,v_\gamma\in D_\gamma$.

Consider $\gamma_1,\gamma_2$ such that
$-M^{(q)}(\gamma_1,p)=T(u_1,v_1,\gamma_1)$
and $-M^{(q)}(\gamma_2,p)=T(u_2,v_2,\gamma_2)$.
Let $0\le\lambda\le 1$. Then,
 \begin{align*}
     -M^{(q)}(\bar{\gamma},p)&=\inf_{(u,v)\in D_{\bar{\gamma}}}
     T(u,v,\bar{\gamma})
     \le T(\bar{u},\bar{v},\bar{\gamma})
     \le \lambda T(u_1,v_1,\gamma_1)+(1-\lambda)T(u_2,v_2,\gamma_2)
     \\&=-\lambda M^{(q)}(\gamma_1,p)-(1-\lambda)M^{(q)}(\gamma_2,p)\;,
 \end{align*}
so indeed $-M^{(q)}$ is convex as a function of $\gamma$.\qed

\subsection{Proof of \Cref{lem:m-derivative-at-0}}

\paragraph{Binary case $q=2$.}
Let $0<p\le 1/2$, we have $\frac{\partial M(\gamma,p)}{\partial\gamma}=1-h\Big(\frac{p-\frac{\gamma}{2}}{1-\gamma}\Big)+\frac{p-1/2}{(1-\gamma)}h'\Big(\frac{p-\frac{\gamma}{2}}{1-\gamma}\Big)$, hence
\begin{align*}
\frac{\partial M(\gamma,p)}{\partial\gamma}\bigg\rvert_{\gamma=0}&=1-h(p)+(p-1/2)h'(p)
    \\&=1+p\log p+(1-p)\log(1-p)+(p-1/2)\log\left(\frac{1-p}{p}\right)
    \\&=\log\left(2\sqrt{p(1-p)}\right).
    \pushQED{\qed}\qedhere\popQED
\end{align*}

\paragraph{Larger alphabets $q\ge 3$.}
From \Cref{lem:almost-closed-form}, there exists a unique $\zeta=\zeta(\gamma)$ satisfying $\gamma\frac{\zeta}{2}+(1-\gamma)\beta(\zeta)=p-\frac{\gamma}{2}$ (with $\beta(\zeta)=\frac{1}{1+C_q(\frac{1-\zeta}{\zeta})^2}$ and $C_q=\frac{(q-2)^2}{4(q-1)}$) such that $M^{(q)}(\gamma,p)=\gamma\tilde{H}_q(\zeta)+(1-\gamma)H_q(\beta(\zeta))$. Thus
\begin{align*}
    \frac{\partial  M^{(q)}(\gamma,p)}{\partial \gamma}=\tilde{H}_q(\zeta)+\gamma\zeta'\tilde{H}_q'(\zeta)-H_q(\beta(\zeta))+(1-\gamma)\zeta'\beta'(\zeta)H_q'(\beta(\zeta))\;,
\end{align*}
at $\gamma=0$, $\zeta_0=\zeta(0)$ satisfies $\beta(\zeta_0)=p$, so
\begin{align*}
    \frac{\partial  M^{(q)}(\gamma,p)}{\partial \gamma}\bigg\rvert_{\gamma=0}=\tilde{H}_q(\zeta_0)-H_q(p)+\zeta'(0)\beta'(\zeta_0)H_q'(p)\;.
\end{align*}
The derivative $\zeta'$ exists by the
implicit function theorem.

At the same time, applying the derivative with respect to $\gamma$ to the constraint, we get $\frac{\zeta}{2}+\gamma\frac{\zeta'}{2}-\beta(\zeta)+(1-\gamma)\zeta'\beta'(\zeta)=-\frac{1}{2}$, thus at $\gamma=0$ we have $\zeta'(0)\beta'(\zeta_0)=p-\frac{1+\zeta_0}{2}$, so
\begin{align}
\label{eq:Mq_prime_0}
    \frac{\partial  M^{(q)}(\gamma,p)}{\partial \gamma}\bigg\rvert_{\gamma=0}=\tilde{H}_q(\zeta_0)-H_q(p)+\left(p-\frac{1+\zeta_0}{2}\right)H_q'(p)\;.
\end{align}
 The fact that $\beta(\zeta_0)=p$ gives us that $\frac{1-\zeta_0}{\zeta_0}=\sqrt{\frac{1-p}{p C_q}}$ i.e., $\zeta_0=\frac{1}{1+\frac{2}{q-2}\sqrt{\frac{(q-1)(1-p)}{p}}}$. Recall also that $H_q'(p)=\log_q\left(\frac{(q-1)(1-p)}{p}\right)$. Let us rewrite $\tilde{H}_q(\zeta_0)$ as 
 \begin{align*}
     \tilde{H}_q(\zeta_0)&=-(1-\zeta_0)\log_q\left(\frac{1-\zeta_0}{2}\right)-\zeta_0\log_q\left(\frac{\zeta_0}{q-2}\right)
     \\&=-\log_q\left(\frac{1-\zeta_0}{2}\right)+\zeta_0\log_q\left(\frac{q-2}{2}\sqrt{\frac{1-p}{p C_q}}\right)
     \\&=-\log_q\left(\frac{\sqrt{\frac{(q-1)(1-p)}{p}}}{q-2+2\sqrt{\frac{(q-1)(1-p)}{p}}}\right)+\zeta_0\log_q\left(\sqrt{\frac{(q-1)(1-p)}{p}}\right)
     \\&=-\frac{(1-\zeta_0)}{2}H_q'(p)+\log_q\left(q-2+2\sqrt{\frac{(q-1)(1-p)}{p}}\right)\;.
 \end{align*}
 Substituting the above into \Cref{eq:Mq_prime_0}, we get
 \begin{align*}
     \frac{\partial  M^{(q)}(\gamma,p)}{\partial \gamma}\bigg\rvert_{\gamma=0}&=(p-1)H_q'(p)-H_q(p)+\log_q\left(q-2+2\sqrt{\frac{(q-1)(1-p)}{p}}\right)
     \\&=\log_q\left(\frac{p}{q-1}\right)+\log_q\left(q-2+2\sqrt{\frac{(q-1)(1-p)}{p}}\right)
     \\&=\log_q\left(\frac{q-2}{q-1}p+2\sqrt{\frac{p(1-p)}{q-1}}\right)\;.
     \pushQED{\qed}\qedhere\popQED
 \end{align*}

\section{BAWGN channel}
\label{sec:bawgn}

We follow the exposition in \cite[Section 4.1]{RU08}.
In the $\BAWGN_{\sigma^2}$ channel, the input alphabet
is $\mathbb{F}_2$, and the output distribution for $b\in\{0,1\}$
is given by the standard Gaussian distribution $\mathcal{N}((-1)^b, \sigma^2)$. In other words, conditioned on a transmitted codeword
$c=(c_1,\ldots,c_n)$, letting 
$\overline{c}=((-1)^{c_1},\ldots,(-1)^{c_n})$, the noisy output
can be written as $Y=\overline{c}+Z$, where $Z$ is a vector
of $n$ i.i.d.~$\mathcal{N}(0,\sigma^2)$ Gaussians. In particular,
the density function of $Z$ is 
$f_Z(x)=\frac{1}{\left(\sqrt{2\pi}\sigma\right)^{n}}
\exp\left(-\frac{\|x\|^2}{2\sigma^2}\right)$.
Note that for two codewords $x,y\in\mathbb{F}_2^n$ with
Hamming distance $k$, the Euclidean distance of their embeddings
is $\|\overline{x}-\overline{y}\|_2=2\sqrt{k}$. 

In the following, let $B(\overline{x},R)
\coloneqq\{y:\|y-x\|_2\le R\}$ denote the Euclidean ball
of radius $R$ around $\overline{x}\in\mathbb{R}^n$. 
Recall that the Lebesgue volume of such a ball is given
by $V_n(R)=\frac{\pi^{n/2}}{\Gamma(n/2+1)}R^n$.
Furthermore, let $I_n(R,D)$ denote the Lebesgue volume of the
intersection $B(x,R)\cap B(y,R)$ for $\overline{x},\overline{y}$ such that
$\|\overline{x}-\overline{y}\|=D$.

We can now state our main result,
in a similar fashion as
in \Cref{Thm:1}.

\begin{definition}
    For $0<\lambda\le 1$, $\sigma^2>0$ and $0\le\gamma<\sigma^2$, let 
    \begin{align*}
        F^{\BAWGN}_{\lambda}(\gamma,\sigma^2)\coloneqq
        -\frac{1}{2}\log\left(1-\frac{\gamma}{\sigma^2}\right)
        +\gamma\log(2^{\lambda}-1)\;.
    \end{align*}
    Then, for $0<\lambda<1$ and $0\le\delta\le 1$, let
    \begin{align*}
        \sigma^2_*(\lambda,\delta)\coloneqq
        \inf\left\{\sigma^2>\delta: F^{\BAWGN}_{\lambda}(\delta,\sigma^2)\le 0\right\}.
    \end{align*}
\end{definition}
The value $\sigma^2_*(\lambda,\delta)$ is well defined, since
$F^{\BAWGN}_{\lambda}(\delta,\sigma^2)\le 0$ for sufficiently
large $\sigma^2$.

\begin{theorem}
\label{thm:3}
    Let $0<\lambda,\delta<1$, let $\{\cC_n\}_{n\ge 1}$ be a family of binary linear codes with $\delta(\cC_n)\ge \delta$, and such that $\underset{n\to \infty}{\lim} P_b(\cC_n,\BEC_\lambda)=0$. Then,
\begin{align*}
    \underset{n\to \infty}{\lim} P_B\left(\cC_n,\BAWGN_{\sigma^2}\right)=0,\ \text{for every}\ \sigma^2<\sigma_*^2(\lambda,\delta)\;.
\end{align*}
\end{theorem}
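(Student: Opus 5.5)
The plan mirrors the proof of \Cref{Thm:1}. The Poltyrev bound (\Cref{prop:poltyrev}) is replaced by the Gaussian sphere bound~\cite{HP94,sason2006performance}, and $M^{(q)}$ is replaced by the exponent of the Euclidean ball intersection volume $I_n(R,D)$. By linearity we may assume the all-zero codeword is sent, so $\overline{0}=(1,\ldots,1)$. Fix $R\coloneqq\sigma\sqrt{n(1+\eps)}$ for a small $\eps>0$. The sphere bound gives
\begin{align*}
P_B(\cC_n,\BAWGN_{\sigma^2})\le \Pr[\|Z\|>R]+\sum_{w\ge 1}A_w\Pr\big[Z\in B(0,R),\ \|Z-(\overline{c_w}-\overline{0})\|\le\|Z\|\big]\;,
\end{align*}
where $c_w$ is any codeword of weight $w$. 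The first term is $\exp(-\Omega(n))$ by chi-square concentration. For the pairwise terms, the density of $Z$ on $B(0,R)$ is at most $(\sqrt{2\pi}\sigma)^{-n}$. The relevant region lies in the half-space closer to $\overline{c_w}$, intersected with $B(0,R)$. I would therefore bound it either by the volume of a ball cap or by a Gaussian tail in the direction of $\overline{c_w}-\overline{0}$, which has length $2\sqrt{w}$.

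The cleaner route is the tail bound. Conditioned on $\|Z\|\approx\sigma\sqrt n$, the component of $Z$ along the unit vector toward $\overline{c_w}$ must exceed $\sqrt{w}$. Write $w=\gamma n$. The probability that a uniform point on a sphere of radius $r=\sigma\sqrt{n}$ has projection at least $\sqrt{\gamma n}$ is $\approx(1-\gamma/\sigma^2)^{n/2}$. This gives the exponent
\begin{align*}
\frac1n\log_2 P_w\le \tfrac12\log_2\left(1-\frac{\gamma}{\sigma^2}\right)+o(1)\;,
\end{align*}
which is the term $-F$ in $F^{\BAWGN}_\lambda$, and the probability is zero (up to the $\eps$ slack) when $\gamma\ge\sigma^2$. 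To make this rigorous I would integrate over the radius $\|Z\|\le R$, using $(1-\gamma/\rho^2)$ monotone in $\rho$, and take the worst case $\rho=R$. Letting $\eps\to0$ and using continuity in $\sigma^2$ gives a uniform $o(1)$.

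Combine this with the weight bound from the proof of \Cref{Thm:1}: $A_w=0$ for $w<\delta n$, and $\frac1n\log_2 A_{\gamma n}\le -\gamma\log_2(2^\lambda-1)+o(1)$ by \Cref{prop:weight_distribution_bound} and $H(X|Y)=o(n)$ (\Cref{cl:pbit-implies-entropy}). The sum is then at most $n\exp_2(-n\inf_{\delta\le\gamma<\sigma^2}F^{\BAWGN}_\lambda(\gamma,\sigma^2)+o(n))$. Weights $\gamma\ge\sigma^2(1+\eps)$ contribute nothing, and weights near $\sigma^2$ contribute $\to -\infty$ exponents.

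It remains to show that the infimum is positive. The map $\gamma\mapsto F^{\BAWGN}_\lambda(\gamma,\sigma^2)$ is convex: $-\frac12\log(1-\gamma/\sigma^2)$ is convex and the other term is linear. It also vanishes at $\gamma=0$. For $\sigma^2<\sigma^2_*$, monotonicity of $F^{\BAWGN}_\lambda$ in $\sigma^2$ (it is decreasing in $\sigma^2$) gives $F^{\BAWGN}_\lambda(\delta,\sigma^2)>0$. Convexity then gives positivity for all $\gamma\ge\delta$, bounded away from zero on the compact range; this is the same argument as in \Cref{Thm:1}. If $\sigma^2\le\delta$, the projection bound makes all terms vanish directly. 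The main obstacle is the rigorous exponent for the cap probability uniformly in $\gamma$ near $\sigma^2$ and with the $\eps$ radius slack. I would handle it by the explicit bound $\Pr[\langle Z,u\rangle\ge\sqrt{w},\|Z\|\le R]\le \Pr[\|Z_{\perp}\|^2\le R^2-w]$, where $Z_\perp$ is $Z$ projected orthogonally to the unit vector $u$, so that $Z_\perp$ is a chi-square variable with $n-1$ degrees of freedom. A Chernoff bound on this lower tail yields exactly $\frac n2\log(1-\gamma/\sigma^2)$ up to $O(\eps n)$.
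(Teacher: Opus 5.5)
Your skeleton matches the paper's: the sphere bound, the weight bound $\frac1n\log_2 A_{\gamma n}\le-\gamma\log_2(2^\lambda-1)+o(1)$, and convexity of $\gamma\mapsto F^{\BAWGN}_\lambda(\gamma,\sigma^2)$ with value $0$ at $\gamma=0$. However, the step you single out as the rigorous estimate of the pairwise term is wrong, and the error is not lower order. The inclusion $\{\langle Z,u\rangle\ge\sqrt w,\ \|Z\|\le R\}\subseteq\{\|Z_\perp\|^2\le R^2-w\}$ throws away the constraint on $\langle Z,u\rangle$. The lower tail of $\|Z_\perp\|^2/\sigma^2\sim\chi^2_{n-1}$ at level $(n-1)x$ is $\exp\bigl(\frac{n-1}{2}(\ln x+1-x)\bigr)$ up to polynomial factors. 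This is the exact large-deviation rate, so no Chernoff refinement helps. At $x\approx 1-\gamma/\sigma^2$ it gives the exponent $\frac n2\bigl[\ln(1-\gamma/\sigma^2)+\gamma/\sigma^2\bigr]$ (natural logarithms), not $\frac n2\ln(1-\gamma/\sigma^2)$. You lose a factor $e^{\gamma n/(2\sigma^2)}$, an exponent that is not $O(\eps n)$.

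With this bound, your argument only yields vanishing error when $F^{\BAWGN}_\lambda(\delta,\sigma^2)>\frac{\delta}{2\sigma^2\ln 2}$. The function $\gamma\mapsto F^{\BAWGN}_\lambda(\gamma,\sigma^2)-\frac{\gamma}{2\sigma^2\ln2}$ has slope $\log_2(2^\lambda-1)<0$ at $\gamma=0$ for every $\sigma^2$. So the threshold you would get is strictly below $\sigma^2_*(\lambda,\delta)$, and it tends to $0$ as $\delta\to0$ rather than to $-1/(2\ln(2^\lambda-1))$.

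Your alternative ``volume of a ball cap'' route also fails as written. Bounding the density by $(\sqrt{2\pi}\sigma)^{-n}$ on all of $B(0,R)$ and multiplying by a lens volume $\approx(2\pi e)^{n/2}(\sigma^2-\gamma)^{n/2}$ loses a factor $e^{n/2}$. The paper avoids this by intersecting with the thin shell $\mathcal R(s)$, where the density is at most $e^{-(1-s)n/2}(\sqrt{2\pi}\sigma)^{-n}$, and then bounding $I_n$ explicitly.

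The repair on your route is short. The projection $\langle Z,u\rangle\sim\mathcal N(0,\sigma^2)$ is independent of $Z_\perp$, and the event is contained in $\{\langle Z,u\rangle\ge\sqrt w\}\cap\{\|Z_\perp\|^2\le R^2-w\}$. So its probability is at most $e^{-w/(2\sigma^2)}\Pr[\|Z_\perp\|^2\le R^2-w]$. The Gaussian tail cancels the extra $+\gamma/\sigma^2$. Up to lower-order factors this leaves $\exp\bigl(\frac n2[\ln(1+\eps-\gamma/\sigma^2)-\eps]\bigr)\le(1-\gamma/\sigma'^2)^{n/2}$, with $\sigma'^2\coloneqq(1+\eps)\sigma^2$. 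Alternatively, you can make your first heuristic rigorous: write $Z=\|Z\|\Theta$ with $\Theta$ uniform on the sphere and independent of $\|Z\|$, and apply a standard cap-measure bound to $\Pr[\langle\Theta,u\rangle\ge\sqrt w/R]$.

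Then fix $\eps$ small enough that $\sigma'^2<\sigma^2_*(\lambda,\delta)$. This gives $F^{\BAWGN}_\lambda(\delta,\sigma'^2)>0$, your convexity argument goes through, and $\Pr[\|Z\|>R]=e^{-\Omega(\eps^2n)}$.

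Once repaired, this is a legitimate and somewhat cleaner alternative to the paper's proof. The projection estimate replaces the shell-density bound together with the ball-intersection lemma. The fixed slack $\eps$ with $\sigma'^2<\sigma^2_*$ replaces the paper's choice $s=n^{-1/4}$ and its two-case argument for uniform positivity.
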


In the case of symmetric channels, we were using the Poltyrev bound
in order to bound the block error probability.
For the binary Gaussian channel, in~\cite{poltyrev2002bounds} 
Poltyrev proposed what is known as the tangential sphere bound. For our purposes, it seems
sufficient to use a simpler sphere bound, see
\cite{HP94, sason2006performance}. This bound is combined
with an estimate of $I_n(\alpha\sqrt{n},\beta\sqrt{n})$,
for constant $\alpha,\beta>0$.

\begin{lemma}
\label{lem:sphere_bound}
Let $\mathcal{C}\subseteq\mathbb{F}_2^n$ be a linear code with
weight distribution
$(A_0,A_1,\ldots,A_n)$,
$0\le s\le 4$
and $\sigma^2>0$. Then,
\begin{align*}
P_{B}(\mathcal{C},\BAWGN_{\sigma^2})
&\le
\frac{\exp(-n/2+sn/2)}
{\left(\sqrt{2\pi}\sigma\right)^n}
\sum_{1\le w\le n}A_w
I_n\left((1+s)\sigma\sqrt{n},2\sqrt{w}\right)
+2\exp\left(-\frac{s^2n}{16}\right)\;.
\end{align*}
\end{lemma}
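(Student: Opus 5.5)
We prove \Cref{lem:sphere_bound} with the standard sphere bound argument, as in~\cite{HP94,sason2006performance}. By linearity and the symmetry of the $\BAWGN$ channel, we may assume the all-zero codeword was sent, so that $Y=\mathbf{1}+Z$ with $\mathbf{1}\coloneqq\overline{0^n}$. Set $R\coloneqq(1+s)\sigma\sqrt{n}$. The MAP decoder outputs a codeword nearest to $Y$ in Euclidean distance. An error therefore requires either $\|Z\|>R$, or $\|Z\|\le R$ together with some codeword $c\ne 0$ satisfying $\|Y-\overline{c}\|\le\|Y-\mathbf{1}\|\le R$. Hence
\begin{align*}
P_B\le \Pr\left[\|Z\|>R\right]+\sum_{c\in\cC\setminus\{0\}}\Pr\left[Y\in B(\mathbf{1},R)\cap B(\overline{c},R)\right]\;.
\end{align*}

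For each term in the union sum, we bound the density of $Y$ on $B(\mathbf{1},R)$. There $\|Y-\mathbf{1}\|\le R$, and we bound $f_Z(Y-\mathbf{1})$ from above by $\frac{1}{(\sqrt{2\pi}\sigma)^n}\exp(-\|Z\|^2/(2\sigma^2))$. To obtain the stated prefactor $\exp(-n/2+sn/2)$, we intersect additionally with the event $\|Z\|^2\ge(1-s)\sigma^2 n$, and put the complementary lower-tail event into the error term. On this event the density is at most $\frac{\exp(-(1-s)n/2)}{(\sqrt{2\pi}\sigma)^n}$. Multiplying by the volume $I_n(R,2\sqrt{w})$ of the intersection, since $\|\mathbf{1}-\overline{c}\|=2\sqrt{\wt(c)}$, and grouping the codewords by weight gives the sum $\sum_w A_w I_n(\cdot)$ with the claimed prefactor. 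Strictly, we bound the probability of $Y$ lying in the intersection and in the shell $\{(1-s)\sigma^2n\le\|Z\|^2\le R^2\}$, which is at most the maximal density on the shell times the intersection volume.

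It remains to control the two tail events $\|Z\|^2>(1+s)^2\sigma^2n$ and $\|Z\|^2<(1-s)\sigma^2 n$. Here $\|Z\|^2/\sigma^2$ is $\chi^2_n$. Since $(1+s)^2\ge 1+s$, both tails are implied by $|\|Z\|^2/\sigma^2-n|\ge sn$. The Laurent--Massart inequalities give $\Pr[\chi^2_n-n\ge 2\sqrt{nx}+2x]\le e^{-x}$ and $\Pr[n-\chi^2_n\ge 2\sqrt{nx}]\le e^{-x}$. Taking $x=s^2n/16$ makes $2\sqrt{nx}+2x=sn/2+s^2n/8\le sn$ for $s\le 4$, which is where the restriction $0\le s\le 4$ enters. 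Each tail is then at most $\exp(-s^2n/16)$, giving the additive term $2\exp(-s^2n/16)$.

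The main obstacle is getting the constants to match exactly. This concerns the split between the lower shell cutoff, used for the density prefactor, and the upper radius, used for the intersection volume. If the $(1-s)$ lower cutoff proves awkward, a fallback is to use the cruder density bound $\frac{1}{(\sqrt{2\pi}\sigma)^n}$ together with the tail bound alone. The statement as given, however, suggests using the shell as above.
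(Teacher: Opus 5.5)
Your proof is correct and follows essentially the same route as the paper. You restrict to a shell event for $\|Z\|^2$, bound the Gaussian density there by $\exp(-(1-s)n/2)/(\sqrt{2\pi}\sigma)^n$, union-bound over nonzero codewords using the intersection volume $I_n((1+s)\sigma\sqrt{n},2\sqrt{w})$, and control both tails with Laurent--Massart at $x=s^2n/16$. The differences are only cosmetic. The paper uses the narrower shell $(1-s)\sigma^2n\le\|Z\|^2\le(1+s)\sigma^2n$ and writes the union bound as Markov's inequality on the number of competing codewords, so that it also covers list decoding. Your concern about matching constants is unfounded: the shell argument as you wrote it already gives exactly the stated bound, and the fallback is not needed.
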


\begin{lemma}
\label{lem: Gaussian_ball_intersection}
Let $\zeta,\gamma\ge 0$.
If $\gamma\ge \zeta$, then
$I_n(\zeta\sqrt{n},2\gamma\sqrt{n})=0$. Otherwise,
\begin{align*}
\frac{1}{80n^2}\sqrt{\frac{\zeta-\gamma}{\zeta+\gamma}}
\left(2e\pi\right)^{n/2}(\zeta^2-\gamma^2)^{n/2}
\le I_n(\zeta\sqrt{n},2\gamma\sqrt{n})
\le\frac{\sqrt{2e}}{\pi}\sqrt{\frac{\zeta-\gamma}{\zeta+\gamma}}\left(2e\pi\right)^{n/2}(\zeta^2-\gamma^2)^{n/2}
\end{align*}
\end{lemma}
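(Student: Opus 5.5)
The goal is to estimate the Lebesgue volume of the lens $B(\overline{x},\zeta\sqrt{n})\cap B(\overline{y},\zeta\sqrt{n})$ with centers at distance $2\gamma\sqrt{n}$. The plan is to reduce it to a one-dimensional integral over slices perpendicular to the segment joining the centers.

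The case $\gamma\ge\zeta$ is immediate. The balls have radius $\zeta\sqrt{n}$ and their centers are at distance $2\gamma\sqrt{n}\ge 2\zeta\sqrt{n}$, so the intersection is at most a single point and has measure zero.

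For $\gamma<\zeta$, place the centers at $\pm\gamma\sqrt{n}\,e_1$ and scale out $\sqrt{n}$. This gives
\begin{align*}
I_n(\zeta\sqrt{n},2\gamma\sqrt{n}) = 2 n^{n/2}\int_0^{\sqrt{\zeta^2-\gamma^2}} V_{n-1}\Big(\sqrt{\zeta^2-(x+\gamma)^2}\Big)\,dx,
\end{align*}
where the slice at height $x\ge 0$ is cut by the farther ball, and the lens is symmetric about $x=0$. The slice volume is maximized at $x=0$, where its radius is $r_0=\sqrt{\zeta^2-\gamma^2}$.

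For the upper bound, we use $\zeta^2-(x+\gamma)^2\le r_0^2-2\gamma x$, so the integrand is at most
\begin{align*}
V_{n-1}(r_0)\,(1-2\gamma x/r_0^2)^{(n-1)/2}\le V_{n-1}(r_0)\,e^{-(n-1)\gamma x/r_0^2}.
\end{align*}
Integrating, together with the crude cap on the length of the $x$-range, should produce the prefactor $\sqrt{(\zeta-\gamma)/(\zeta+\gamma)}$. The algebra is $r_0^2/\gamma$ versus the chord length, and $r_0/(\zeta+\gamma)=\sqrt{(\zeta-\gamma)/(\zeta+\gamma)}$. This is most naturally obtained by parametrizing the slice radius $r$ directly and using $dx = r\,dr/(x+\gamma)$ with $x+\gamma\ge\gamma$. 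On the other side the bound is simply the length $\zeta-\gamma$ times $V_{n-1}(r_0)$.

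For the lower bound, restrict to $0\le x\le \epsilon$ with $\epsilon\approx r_0^2/((\zeta+\gamma)n)$. On this range $(x+\gamma)^2\le\gamma^2+2x(\zeta)+\dots$, so the slice radius stays at least $r_0\sqrt{1-c/n}$, and $(1-c/n)^{(n-1)/2}$ is bounded below by a constant. This gives $I_n\gtrsim \epsilon\, n^{n/2}V_{n-1}(r_0)$, which is where the polynomial factors like $1/(80n^2)$ come from.

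Finally, convert $n^{n/2}V_{n-1}(r_0)=n^{n/2}\pi^{(n-1)/2}r_0^{n-1}/\Gamma((n+1)/2)$ via Stirling bounds on $\Gamma((n+1)/2)$, valid for all $n$, into $(2e\pi)^{n/2}r_0^n$ times polynomial factors. The remaining power of $r_0$ combines with $\epsilon$ or with the integral to give the factor $\sqrt{(\zeta-\gamma)/(\zeta+\gamma)}$.

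The main obstacle is the bookkeeping of the explicit constants, $\sqrt{2e}/\pi$ and $1/(80n^2)$, through the Stirling bounds and the choice of $\epsilon$. I would fix the Stirling estimate $\sqrt{2\pi}\,m^{m+1/2}e^{-m}\le\Gamma(m+1)\le e\, m^{m+1/2}e^{-m}$ first and then check that the constants close.
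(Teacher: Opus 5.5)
This is essentially the paper's proof. You slice perpendicular to the line of centers, bound the integrand by the central slice times the half-length $\zeta-\gamma$ for the upper bound, keep only a thin slab next to the bisecting hyperplane for the lower bound, and finish with Stirling. The paper merely passes to $I_{n+1}$ so that slices are $n$-dimensional, and its slab has width $(\zeta-\gamma)/(2n^2)$, so your width $\approx(\zeta-\gamma)/n$ is if anything tighter.

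A few small fixes:
\begin{itemize}
\item The $x$-range of your slice integral is $[0,\zeta-\gamma]$, not $[0,\sqrt{\zeta^2-\gamma^2}]$.
\item The exponential-decay detour in the upper bound is unnecessary, since length times maximum already gives $(\zeta-\gamma)r_0^{n-1}=\sqrt{(\zeta-\gamma)/(\zeta+\gamma)}\,r_0^{n}$.
\item Your Stirling upper estimate requires $m\ge 1$, so $n\le 2$ should be checked by hand.
\end{itemize}
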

For completeness, \Cref{lem:sphere_bound} 
and \Cref{lem: Gaussian_ball_intersection} are proved
in \Cref{app:sphere_bound_proof}.

\begin{corollary}
\label{cor:Gaussian_error_prob}
    Let $\sigma^2> 0$, $0\le s\le 4$ and $t=(1+s)^2\sigma^2n$. Let us write $w=\gamma n$.
    For a binary linear code $\cC$ with $\delta(\cC)\ge\delta$ it holds
    \begin{align*}
        P_B(\cC, \BAWGN_{\sigma^2})\le \frac{\sqrt{2e}\exp(\frac{sn}{2})}{\pi}\sum_{\delta n\le w\le \min(n,t)}A_{w}\left((1+s)^2-\frac{\gamma}{\sigma^2}\right)^{\frac{n}{2}}+2\exp(-s^2n/16) \;.
    \end{align*}
\end{corollary}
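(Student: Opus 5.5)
The plan is to plug \Cref{lem: Gaussian_ball_intersection} into \Cref{lem:sphere_bound}, after first using the minimum distance to discard the small-weight terms.

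\emph{Range of weights.} Since $\cC$ is linear with $\delta(\cC)\ge\delta$, we have $A_w=0$ for $1\le w<\delta n$. So the sum in \Cref{lem:sphere_bound} runs only over $\delta n\le w\le n$.

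\emph{Choice of parameters.} Apply \Cref{lem:sphere_bound} with the given $s\in[0,4]$. The two balls have radius $R=(1+s)\sigma\sqrt n$, and their centers are at distance $D=2\sqrt w=2\sqrt{\gamma}\sqrt n$. In the notation of \Cref{lem: Gaussian_ball_intersection} this means taking
\begin{align*}
\zeta=(1+s)\sigma, \qquad \text{with } \sqrt{\gamma} \text{ in the role of the lemma's } \gamma.
\end{align*}

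\emph{Vanishing terms.} If $\sqrt\gamma\ge(1+s)\sigma$, i.e.\ $w\ge(1+s)^2\sigma^2 n=t$, the intersection volume is $0$. Hence only $w\le\min(n,t)$ survive. The boundary case $w=t$ contributes a zero factor anyway, so including it in the sum is harmless.

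\emph{Bounding the surviving terms.} For $w<t$, the upper bound of \Cref{lem: Gaussian_ball_intersection} gives
\begin{align*}
I_n\left((1+s)\sigma\sqrt n,2\sqrt{w}\right)\le\frac{\sqrt{2e}}{\pi}\left(2e\pi\right)^{n/2}\left((1+s)^2\sigma^2-\gamma\right)^{n/2}.
\end{align*}
Here the square-root prefactor $\sqrt{(\zeta-\sqrt\gamma)/(\zeta+\sqrt\gamma)}$ has been dropped, since it is at most $1$.

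\emph{Multiplying by the prefactor.} Multiply this by $\exp(-n/2+sn/2)(\sqrt{2\pi}\sigma)^{-n}$. The factor $(2e\pi)^{n/2}$ cancels with $e^{-n/2}(2\pi)^{-n/2}$. The factor $\sigma^{-n}$ combines with $((1+s)^2\sigma^2-\gamma)^{n/2}$ to give $((1+s)^2-\gamma/\sigma^2)^{n/2}$. What remains is exactly the prefactor $\frac{\sqrt{2e}}{\pi}\exp(sn/2)$.

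\emph{Conclusion.} Adding back the tail term $2\exp(-s^2n/16)$ from \Cref{lem:sphere_bound} yields the claimed bound. There is no real obstacle here. The only care points are the change of variables between the two lemmas (matching $D=2\sqrt w$ with $2\gamma'\sqrt n$, so $\gamma'=\sqrt{w/n}$) and checking the constants cancel.
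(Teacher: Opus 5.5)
Your proof is correct and follows the paper's route: plug the ball-intersection estimate of \Cref{lem: Gaussian_ball_intersection} into the sphere bound of \Cref{lem:sphere_bound}, with $\zeta=(1+s)\sigma$ and the lemma's distance parameter equal to $\sqrt{w/n}$. The minimum distance removes the terms with $w<\delta n$, and the vanishing of the intersection removes those with $w\ge t$. The paper states this as a one-line implication, and your write-up supplies the change of variables and the cancellation of constants that it leaves implicit.
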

\begin{proof}
    The proof is a straightforward implication of \Cref{lem:sphere_bound} and \Cref{lem: Gaussian_ball_intersection} with $\zeta=(1+s)\sigma$ and the assumption $\delta(\cC)\ge\delta$.
\end{proof}

\begin{lemma}
    Let $0 <\lambda\le 1$.
    \begin{enumerate}[1)]
        \item For $\sigma>0$, the function $\gamma\mapsto F^{\BAWGN}_{\lambda}(\gamma,\sigma^2)$ is strictly convex for $0\le \gamma<\sigma^2$.
        \item For $\gamma>0$, the function $\sigma\mapsto F^{\BAWGN}_{\lambda}(\gamma,\sigma^2)$ 
        is strictly decreasing for $\sigma^2>\gamma$.
    \end{enumerate}
\end{lemma}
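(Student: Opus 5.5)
Both parts reduce to elementary one-variable calculus on the explicit formula
\[
F^{\BAWGN}_{\lambda}(\gamma,\sigma^2)=-\tfrac{1}{2}\log\left(1-\tfrac{\gamma}{\sigma^2}\right)+\gamma\log(2^{\lambda}-1),
\]
which is smooth on the open region $0\le\gamma<\sigma^2$, $\sigma>0$. We compute the relevant derivatives directly and check signs. We write $\log=\log_2$, so $\frac{d}{dx}\log x=\frac{1}{x\ln 2}$; positive constant factors like this do not affect signs.

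\textbf{Part 1 (convexity in $\gamma$).} Fix $\sigma>0$. The term $\gamma\log(2^\lambda-1)$ is linear in $\gamma$, so it does not affect convexity. Here $2^\lambda-1\in(0,1]$ for $0<\lambda\le1$, so the logarithm is finite. It remains to show that $g(\gamma)\coloneqq-\frac12\log(1-\gamma/\sigma^2)$ is strictly convex on $[0,\sigma^2)$. Differentiating,
\[
g'(\gamma)=\frac{1}{2\ln 2}\cdot\frac{1}{\sigma^2-\gamma},\qquad g''(\gamma)=\frac{1}{2\ln 2}\cdot\frac{1}{(\sigma^2-\gamma)^2}>0 .
\]
A function with strictly positive second derivative on an interval is strictly convex there, including at the closed endpoint $\gamma=0$, since continuity extends strict convexity to it. Hence $F^{\BAWGN}_\lambda(\cdot,\sigma^2)$ is strictly convex on $[0,\sigma^2)$.

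\textbf{Part 2 (monotonicity in $\sigma$).} Fix $\gamma>0$ and vary $\sigma$ over the range $\sigma^2>\gamma$. The linear term no longer depends on the variable, so only $-\frac12\log(1-\gamma/\sigma^2)$ matters. As $\sigma$ increases with $\sigma>0$, the ratio $\gamma/\sigma^2$ strictly decreases. Since $\gamma>0$, the quantity $1-\gamma/\sigma^2\in(0,1)$ therefore strictly increases. Because $\log$ is strictly increasing, $-\frac12\log(1-\gamma/\sigma^2)$ strictly decreases. Equivalently,
\[
\partial_{\sigma^2}F^{\BAWGN}_\lambda=-\frac{\gamma}{2\ln 2\,\sigma^2(\sigma^2-\gamma)}<0 .
\]

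There is no real obstacle. The only point needing care is to use $\gamma>0$ in part 2, since at $\gamma=0$ the function is constant in $\sigma$. In part 1 the domain condition $\gamma<\sigma^2$ is exactly what keeps the logarithm finite. These two properties are then used, as in the proof of \Cref{Thm:1}, for the convexity argument with $F(0)=0$ and for the monotonicity argument in the definition of $\sigma_*^2$.
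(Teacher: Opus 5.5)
Your proposal is correct and follows the paper's proof. For part 1 the paper also computes $\partial^2_\gamma F^{\BAWGN}_\lambda=\frac{1}{2(\ln 2)(\sigma^2-\gamma)^2}>0$, and for part 2 it simply notes that $1-\gamma/\sigma^2$ is strictly increasing in $\sigma^2$. One small slip: these properties are used in the proof of \Cref{thm:3} and the subsequent lemma on $\sigma_*^2$, not in \Cref{Thm:1}.
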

\begin{proof}
    \begin{enumerate}[1)]
        \item It is easy to see that for $0\le\gamma<\sigma^2$ \begin{align*}
            \frac{\partial^2 F^{\BAWGN}_{\lambda}(\gamma,\sigma^2)}{\partial\gamma^2}=\frac{1}{2(\ln 2)(\sigma^2-\gamma)^2}>0\;. 
        \end{align*}
        \item This is clear,
        since $1-\frac{\gamma}{\sigma^2}$ is strictly
        increasing in $\sigma^2$ for
        $\sigma^2>\gamma$.\qed
    \end{enumerate}
    \renewcommand{\qedsymbol}{}
\end{proof}

\begin{proof}[Proof of Theorem~\ref{thm:3}] Applying \Cref{cor:Gaussian_error_prob} with $s\coloneqq n^{-1/4}$
and writing $w=\gamma n$,
\begin{align*}
        P_B(\cC_n, \BAWGN_{\sigma^2})&\le \frac{\sqrt{2e}\exp(\frac{n^{3/4}}{2})}{\pi}\sum_{w\in D}A_{\gamma n}\left((1+n^{-1/4})^2-\frac{\gamma}{\sigma^2}\right)^{\frac{n}{2}}+2\exp(-n^{1/2}/16) 
        \\&\le \exp(o(n))\cdot
        \underset{w\in D}{\sup} A_{\gamma n} \left((1+n^{-1/4})^2-\frac{\gamma}{\sigma^2}\right)^{\frac{n}{2}}+o(1)\;,
\end{align*}
where $D\coloneqq
\{w:\delta n\le w\le n,
w<(1+n^{-1/4})^2\sigma^2)\}$.
(Note that for $w=(1+n^{-1/4})\sigma^2$,
the relevant term in the sum is
zero, so we can omit it.)
If $\sigma^2<\delta$, then $D=\emptyset$ for large enough $n$
and we are already done. Therefore, assume $\delta\le\sigma^2$,
which implies $D\ne\emptyset$ for large $n$.
Then,
\begin{align*}
   A_{\gamma n} \left((1+n^{-1/4})^2-\frac{\gamma}{\sigma^2}\right)^{\frac{n}{2}}=\exp_2\left[n\left(\frac{\log(A_{\gamma n})}{n}+\frac{1}{2}\log\left(1-\frac{\gamma}{\sigma^2}+2n^{-1/4}+n^{-1/2}\right)\right)\right].
\end{align*}
Since $\underset{n\to\infty}{\lim}P_b(\cC_n,\BEC_{\lambda})=0$, 
it follows $H(X|Y)=o(n)$  
for $X$ uniform on $\cC_n$
and $Y$ the output of $X$
after transmitting through the BEC
(see~\Cref{cl:pbit-implies-entropy}).
By \Cref{prop:weight_distribution_bound},
   $\frac{\log(A_{\gamma n})}{n}\le-\gamma\log(2^{\lambda}-1)+o(1)$.
   Therefore, 
   \begin{align*}
       P_B(\cC_n, \BAWGN_{\sigma^2})&\le \exp\left(o(n)-n
       \inf_{\substack{\delta\le\gamma\le 1\\ \gamma<(1+s)^2\sigma^2}}\left(
       -\frac{1}{2}\log\left(1-\gamma/\sigma^2+3n^{-1/4}\right)
       +\gamma \log(2^{\lambda}-1)\right)\right)+o(1)\;.
   \end{align*}
Let $\theta\coloneqq 2^\lambda-1$ and $R_n(\gamma)\coloneqq -\frac{1}{2}\log\left(1-\gamma/\sigma^2+3n^{-1/4}\right)$. It remains to show that
$R_n(\gamma)+\gamma\log\theta$ is positive and bounded away from zero, uniformly in $\gamma$. 

To that end, consider two cases.
If $R_n(\gamma)\ge -2\log \theta$, then
\begin{align*}
    R_n(\gamma)+\gamma\log\theta \ge-(2-\gamma)\log\theta\ge-\log\theta>0\;.
\end{align*}
On the other hand, if $R_n(\gamma)< -2\log \theta$, then, $1-\gamma/\sigma^2>\theta^4-3n^{-1/4}$, thus for $n$ large enough, we have 
$1-\gamma/\sigma^2>\frac{\theta^4}{2}$. Also, since the logarithmic function is concave, we have 
\begin{align*}
    R_n(\gamma)&\ge -\frac{1}{2}\log\left(1-\gamma/\sigma^2\right)-\frac{1}{(2\ln2)(1-\frac{\gamma}{\sigma^2})}\cdot3n^{-1/4}\\&\ge -\frac{1}{2}\log\left(1-\gamma/\sigma^2\right)-\frac{3n^{-1/4}}{(\ln 2)\theta^4}.
\end{align*}
Note that in this case $1-\gamma/\sigma^2>\frac{\theta^4}{2}$
implies $\delta\le\gamma<\left(1-\frac{\theta^4}{2}\right)\sigma^2
\le\sigma^2$. In particular,
$F_\lambda^{\BAWGN}(\delta,\sigma^2)$ and
$F_\lambda^{\BAWGN}(\gamma,\sigma^2)$ are well defined.
Then,
\begin{align*}
     R_n(\gamma)+\gamma\log\theta&\ge F^{\BAWGN}_{\lambda}(\gamma,\sigma^2)+o(1)
     \ge F^{\BAWGN}_\lambda(\delta,\sigma^2)+o(1)>0\;, 
\end{align*}
which is also uniformly bounded
away from 0 for large $n$.
$F^{\BAWGN}_{\lambda}(\delta,\sigma^2)\le
F^{\BAWGN}_\lambda(\gamma,\sigma^2)$ follows
since 
$F_\lambda^{\BAWGN}(0,\sigma^2)=0$
and due to the convexity of $\gamma\mapsto F_\lambda^{\BAWGN}(\gamma,\sigma^2)$.
$F^{\BAWGN}_{\lambda}(\delta,\sigma^2)>0$ is implied by the
assumption $\sigma^2<\sigma_*^2(\lambda,\delta)$.
All in all, for $n$ large enough, we have
\begin{align*}
       P_B(\cC_n, \BAWGN_{\sigma^2})&\le o(1)\;,
   \end{align*}
which concludes the proof.
\end{proof}

To conclude, let us show that taking $\sigma_*^2$ in the limit $\delta\to 0$ 
results in the bound given in~\cite{hkazla2021codes} (see \Cref{thm:original}).

\begin{lemma}
      Let $0<\lambda<1$. Then the function $\delta\mapsto\sigma^2_*(\lambda,\delta)$ is 
      strictly increasing for $0<\delta\le 1$. Furthermore,
    \begin{align*}
        \underset{\delta\to 0^+}{\lim}\sigma^2_*(\lambda,\delta)=-\frac{1}{2\ln(2^\lambda-1)}\;.
    \end{align*}
\end{lemma}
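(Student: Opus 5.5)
The plan is to mirror the proof of \Cref{lem:delta_limit}, replacing $W(\gamma,p)$ by $F_\lambda^{\BAWGN}(\gamma,\sigma^2)$. Two facts drive everything. First, $F_\lambda^{\BAWGN}(0,\sigma^2)=0$. Second, by the preceding lemma, $\gamma\mapsto F_\lambda^{\BAWGN}(\gamma,\sigma^2)$ is strictly convex and $\sigma^2\mapsto F_\lambda^{\BAWGN}(\gamma,\sigma^2)$ is strictly decreasing. Write $\theta\coloneqq 2^\lambda-1\in(0,1)$, so that $\log\theta<0$.

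\textbf{Monotonicity.} Take $0<\delta_0<\delta_1\le 1$ and let $s_1\coloneqq\sigma_*^2(\lambda,\delta_1)>\delta_1>\delta_0$. The function $\sigma^2\mapsto F_\lambda^{\BAWGN}(\delta_1,\sigma^2)$ is continuous on $(\delta_1,\infty)$ and tends to $+\infty$ as $\sigma^2\to\delta_1^+$. Since it is strictly decreasing, it follows that $F_\lambda^{\BAWGN}(\delta_1,s_1)=0$.
\begin{itemize}
\item Strict convexity in $\gamma$ together with $F(0)=0$ makes $\gamma\mapsto F(\gamma)/\gamma$ strictly increasing.
\item Hence $F_\lambda^{\BAWGN}(\delta_0,s_1)<\frac{\delta_0}{\delta_1}F_\lambda^{\BAWGN}(\delta_1,s_1)=0$.
\item By continuity in $\sigma^2$ there is $\sigma^2<s_1$ with $\sigma^2>\delta_0$ and $F_\lambda^{\BAWGN}(\delta_0,\sigma^2)\le 0$.
\end{itemize}
This gives $\sigma_*^2(\lambda,\delta_0)<s_1$, which is strict monotonicity.

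\textbf{The limit.} Let $s_0\coloneqq -\frac{1}{2\ln\theta}$. The derivative at the origin is
\[
\partial_\gamma F_\lambda^{\BAWGN}(0,\sigma^2)=\frac{1}{2\ln 2\,\sigma^2}+\log\theta=\frac{1}{\ln 2}\Bigl(\frac{1}{2\sigma^2}+\ln\theta\Bigr).
\]
This derivative is positive exactly when $\sigma^2<s_0$, and negative when $\sigma^2>s_0$.

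For the upper bound, fix $\sigma^2>s_0$. The derivative at $0$ is negative and $F(0)=0$, so $F_\lambda^{\BAWGN}(\delta,\sigma^2)<0$ for all sufficiently small $\delta>0$. For such $\delta$ we also have $\delta<\sigma^2$, so $\sigma_*^2(\lambda,\delta)\le\sigma^2$. Letting $\sigma^2\downarrow s_0$ gives $\limsup_{\delta\to 0^+}\sigma_*^2\le s_0$.

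For the lower bound, fix $\sigma^2<s_0$ and let $c\coloneqq\partial_\gamma F(0,\sigma^2)>0$. Convexity gives $F_\lambda^{\BAWGN}(\delta,\sigma^2)\ge c\delta>0$ for every $\delta<\sigma^2$. For any $\sigma'^2\le\sigma^2$ with $\sigma'^2>\delta$, the monotonicity in $\sigma^2$ gives $F_\lambda^{\BAWGN}(\delta,\sigma'^2)>0$. So no admissible point lies below $\sigma^2$, and therefore $\sigma_*^2(\lambda,\delta)\ge\sigma^2$. Letting $\sigma^2\uparrow s_0$ gives $\liminf\ge s_0$. The limit exists by monotonicity in any case.

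\textbf{Main obstacle.} The delicate step is handling the infimum over the restricted domain $\sigma^2>\delta$, in particular checking that $\sigma_*^2(\lambda,\delta)$ is attained as a genuine root. The blow-up of $F$ as $\sigma^2\to\delta^+$ combined with strict decrease resolves this. The rest is the derivative computation at $\gamma=0$ and the convexity/secant argument.
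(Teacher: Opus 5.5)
Your proof is correct and follows essentially the same route as the paper. Both arguments use the same ingredients: $\sigma_*^2(\lambda,\delta)>\delta$ from the blow-up as $\sigma^2\to\delta^+$; strict convexity in $\gamma$ with $F_\lambda^{\BAWGN}(0,\sigma^2)=0$, plus continuity, for strict monotonicity; and the sign of $\partial_\gamma F_\lambda^{\BAWGN}(0,\sigma^2)$ relative to $-\frac{1}{2\ln(2^\lambda-1)}$, with the argument of \Cref{lem:delta_limit}, for the limit. The only cosmetic difference is that you first show $F_\lambda^{\BAWGN}(\delta_1,s_1)=0$ and then derive $F_\lambda^{\BAWGN}(\delta_0,s_1)<0$ directly, whereas the paper rules out $F_\lambda^{\BAWGN}(\delta_0,\sigma_1^2)\ge 0$ by contradiction.
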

\begin{proof}
Fix $0<\lambda<1$
and let us write $F(\gamma,\delta)\coloneqq 
F_\lambda^{\BAWGN}(\gamma,\delta)$.
First, let us show that $\sigma^2_*(\lambda,\delta)$
is increasing in $\delta$. 
Observe that
$\lim_{\sigma^2\to \delta^+}
F(\delta,\sigma^2)=\infty$
for $\delta>0$, which implies
$\sigma_*^2(\lambda,\delta)>\delta$.
Let $0<\delta_0<\delta_1\le 1$ and
$\sigma_0^2\coloneqq \sigma_*^2(\lambda,\delta_0)$,
$\sigma_1^2\coloneqq \sigma_*^2(\lambda,\delta_1)$.
Since $\sigma_1^2>\delta_1>\delta_0$, the values
of $F(\delta_0,\sigma_1^2)$
and $F(\delta_1,\sigma_1^2)$ are well defined.

Now we can proceed like in \Cref{lem:delta_limit}. 
If $F(\delta_0,\sigma_1^2)<0$, then by the definition
of $\sigma_0^2$ it follows $\sigma_0^2\le\sigma_1^2$.
In fact, by the continuity of $F$ it must be $\sigma_0^2<\sigma_1^2$. On the other hand,
if $F(\delta_0,\sigma_1^2)\ge 0$, then,
by $F(0,\sigma_1^2)=0$ and the strict convexity of $F$
we have $F(\delta_1,\sigma_1^2)>0$, which is a contradiction
by the definition of $\sigma_1^2$ and continuity of $F$.
So, $\sigma_0^2<\sigma_1^2$ and $\sigma_*^2$ is strictly
increasing in $\delta$.

Let us turn to the limit as $\delta\to 0^+$. Note that
 \begin{align*}
     \frac{\partial F(\gamma,\sigma^2)}{\partial\gamma}\bigg\rvert_{\gamma=0}=\log(2^{\lambda}-1)+\frac{1}{2(\ln 2)\sigma^2}\;,
 \end{align*}
and the rest of the proof follows like in \Cref{lem:delta_limit} 
using the critical value $\sigma^2_0:=-\frac{1}{2\ln(2^{\lambda}-1)}$.
\end{proof}

\section*{Acknowledgements}
We thank Venkatesan Guruswami for a helpful discussion.
The authors were partially
supported by the Alexander von Humboldt Foundation German research chair
funding and the associated DAAD projects No. 57610033
and 57761435.

\appendix
\section{Appendix}
\subsection{Effective version
of the bound from~\cite{rudra2010two}}
\label{app:effective-ru}

In the proof of Theorem~1, part~(a)
in~\cite{rudra2010two}, one can
see that for a linear code with $\delta(\cC)\ge \delta$
and weight of an error pattern $\rho\le \delta-\epsilon$,
the proportion of ``bad'' error patterns which might result in the failure
of MAP block decoding is bounded by
\begin{align*}
n\cdot \text{(\# of choices for $A$)}\cdot (q-1)^{-\eps n+1}\;,
\end{align*}
where $A$ is a certain set of size at least $(1-\delta+\eps) n$.
Therefore, the number of choices for $A$ can be upper bounded
by $\sum_{k\le (\delta-\eps)n}\binom{n}{k}\le \exp_2\left(nh(\min(1/2, \delta-\eps)\right)$. Accordingly, a code family of relative
distance $\delta$ has vanishing error probability whenever
\begin{align}
\label{eq:21}
h(\min(1/2,\delta-\eps))-\eps\log(q-1)<0\;.
\end{align}
As the left hand side of~\eqref{eq:21} is strictly decreasing
in $\eps$, one can compute the unique value $\eps_0(q,\delta)$
such that there is vanishing error probability for $p<p_0(q,\delta)=\delta-\eps_0(q,\delta)$.

We observed that
$p_0(q,\delta)<J_q(\delta)$ for $q\le 19$
and $0<\delta\le 1-1/q$ by plotting and visual inspection.
For an illustration, see \Cref{fig:effective-ru}
with a plot for $q=15$. 

It seems that this effective version is tight (in certain regime of $\delta$) for large $q$, as it matches the TVZ upper bound~\cite{tsfasman1982modular,ihara1981some} for moderate values of $\delta$. However, it becomes loose for large $\delta$. We illustrate it for $q=2^{20}$ in \Cref{fig:all bounds}. 

\begin{figure}[!ht]
    \centering
    \includegraphics[width=0.7\textwidth]{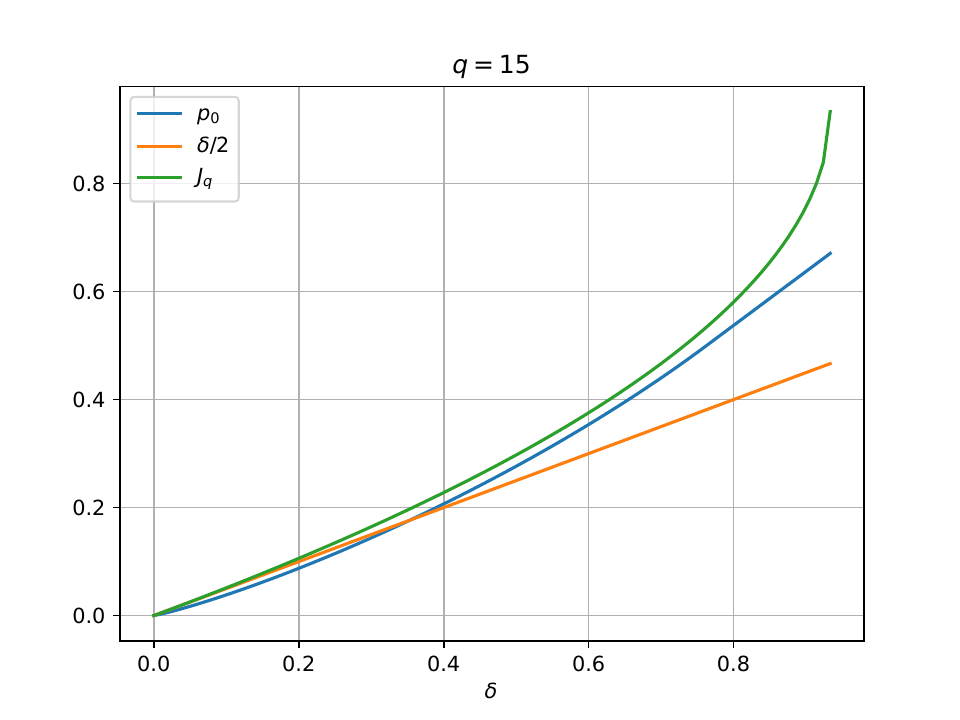}
    \caption{$p_0(15,\delta)$ as a function
    of $\delta$ and compared to $J_q(\delta)$
    and $\delta/2$. See \Cref{app:effective-ru}.}
    \label{fig:effective-ru}
\end{figure}

\begin{figure}[!ht]
    \centering
    \includegraphics[width=0.7\linewidth]{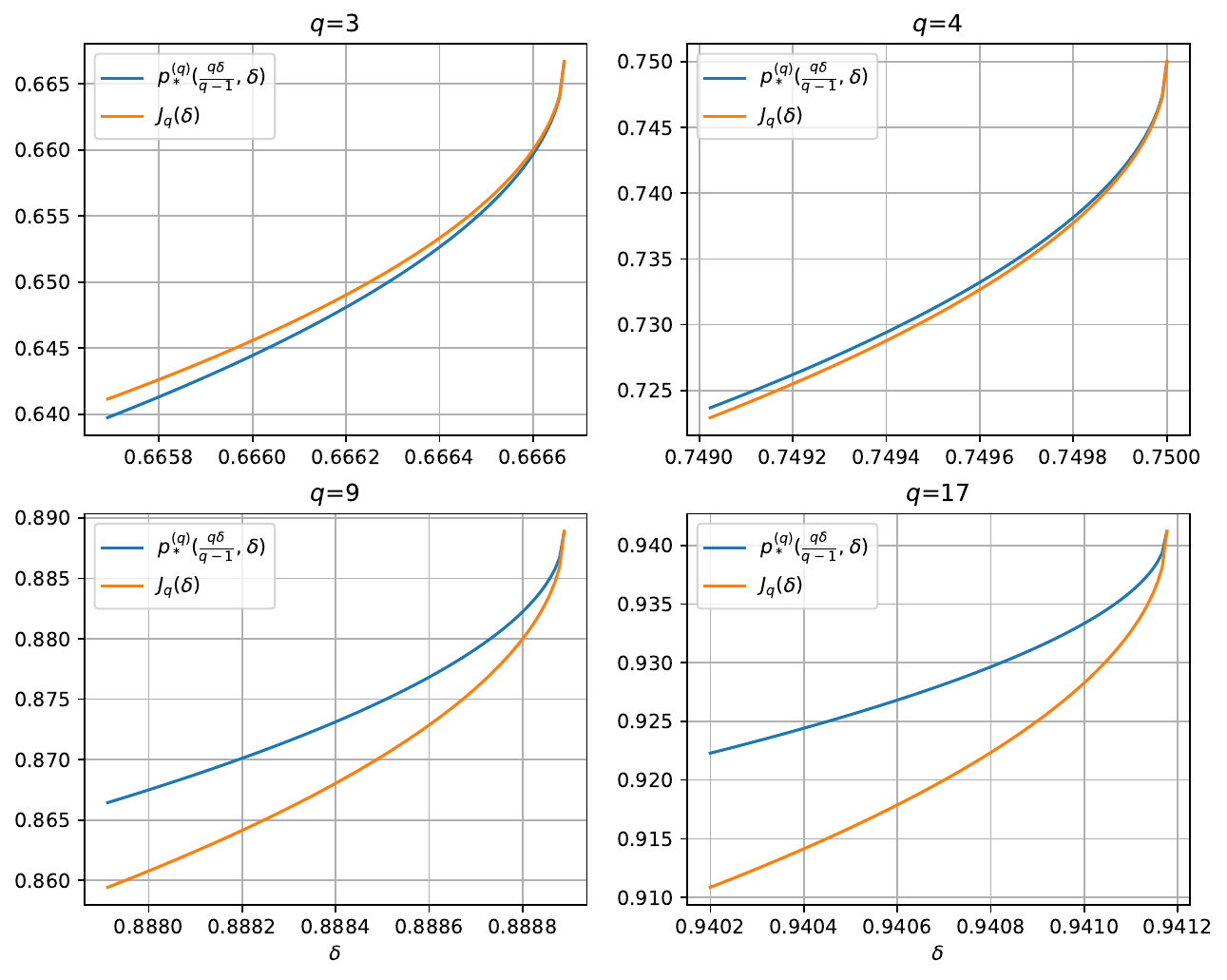}
    \caption{Our lower bound $p_*^{(q)}(\frac{q\delta}{q-1},\delta)$ from \Cref{thm:unconditional_bound} 
    plotted against the Johnson bound for
    $q\in\{3,4,9,17\}$ for $\delta\in[1-\frac{1}{q}-2^{-10}, 1-\frac{1}{q}]$. For $q\ge4$, our bound is larger than $J_q(\delta)$ for larger values of $\delta$.}
    \label{fig:pstar_vs_jq_large_delta}
\end{figure}

\begin{figure}[!ht]
    \centering
    \includegraphics[width=\linewidth]{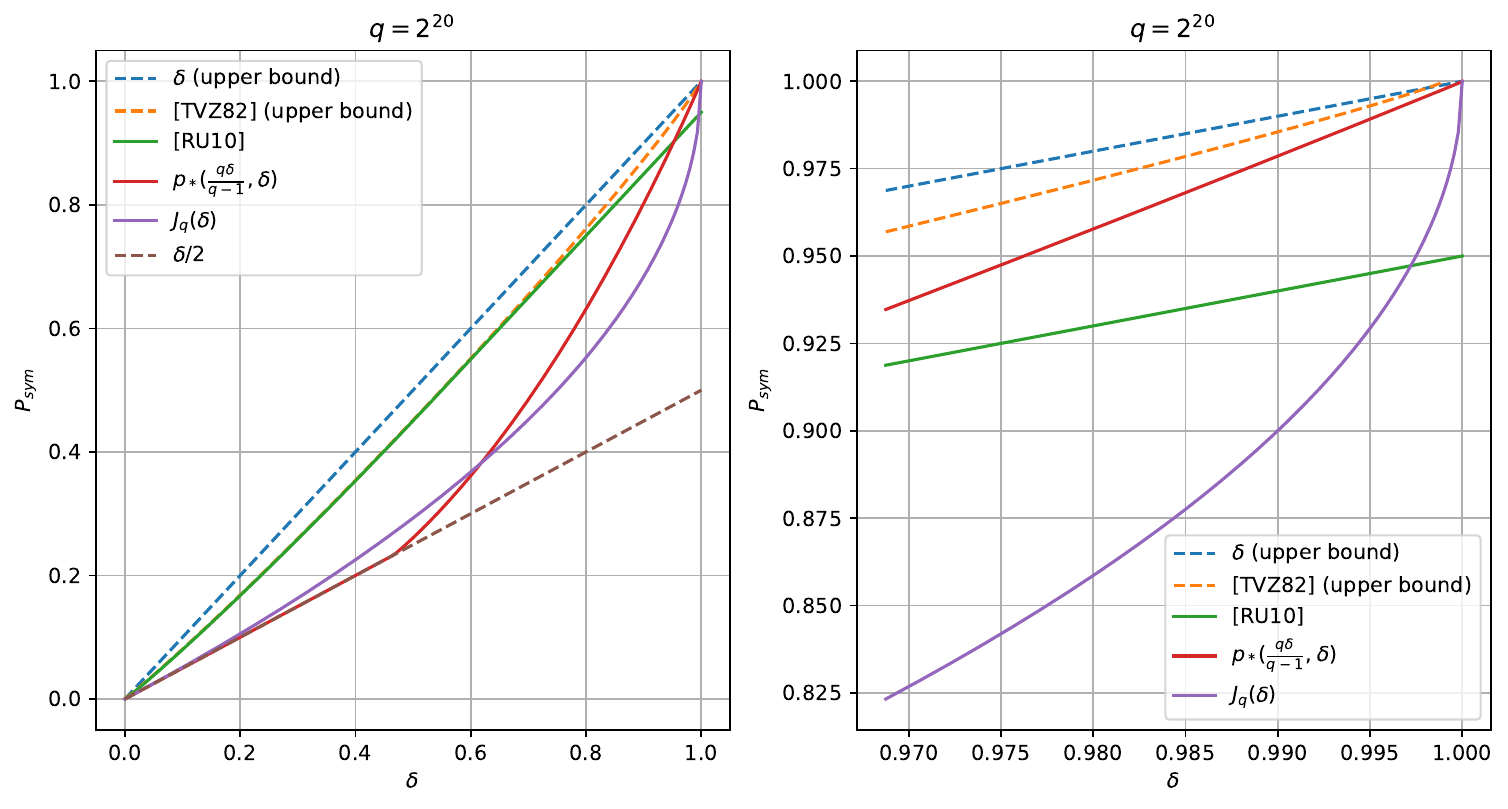}
    \caption{Upper and lower bounds for $\psym(q,\delta)$ discussed in the paper, for large $q=2^{20}$. Note that the bound $p^{(q)}_*(\frac{\delta q}{q-1},\delta)$ is only for linear codes.}
    \label{fig:all bounds}
\end{figure}

\subsection{Proof of \Cref{prop:poltyrev}}
\label{apx: A1}
    Assume w.l.o.g.~that $\Sigma=\mathbb{Z}_q$, so
    that we can use arithmetic in $\mathbb{Z}_q$.    
    Consider the MAP decoder for $\cC$ which,
    given $y$, outputs
    a codeword which is closest to $y$
    in the Hamming distance, breaking ties in an arbitrary way. 
    Let $t'\coloneqq p n-n^{\theta}$. 
    Consider the error vector $Z$ distributed according
    to $\Pr[Z=z]=(\frac{p}{q-1})^{\wt(z)}(1-p)^{n-\wt(z)}$.
    Assume that $x\in\cC$ is the original message,
    and note that $\Delta(x+Z,x)=\wt(Z)$.    
    By the union bound,
    \begin{align*}
    P_{B}(\cC,\qSC_p,x)
    &\le \Pr[\wt(Z)\notin (t',t)]
    +\Pr[\exists y\in\cC:1\le\Delta(x,y)\le w_0\text{ and }\Delta(x+Z,y)\le\wt(Z)]\\
    &\qquad+\Pr[\exists y\in\cC:\Delta(x,y)>w_0, \Delta(x+Z,y)\le t\text{ and }
    t'<\wt(Z)<t]\\
    &=:A_1+A_2+A_3\;,
    \end{align*}
    where $A_1,A_2,A_3$ denote the respective terms on the 
    right hand side. We will bound each of $A_1,A_2,A_3$
    with respective terms in~\eqref{eq:22} and~\eqref{eq:23}.
    First, by the Hoeffding bound,
    $A_1\le 2\exp\left(-2n^{2\theta-1}\right)$.

    For $A_2$ we will use the union bound.
    Consider $y\in\cC$ with $\Delta(x,y)=w$. By a standard
    argument, it holds
    \begin{align*}
        \Pr[\Delta(x+Z,y)\le\wt(Z)]\le Z(\qSC_p)^w\;.        
    \end{align*}
    Indeed, for $1\le i\le n$ such that $x_i\ne y_i$ define a random variable
    \begin{align*}
    \alpha_i&\coloneqq
    \begin{cases}
    \sqrt{\frac{p}{(q-1)(1-p)}}&\text{if $Z_i=0$,}\\
    \sqrt{\frac{(q-1)(1-p)}{p}}&\text{if $Z_i=y_i-x_i$,}\\
    1&\text{otherwise.}
    \end{cases}
    \end{align*}
    Since $\sqrt{\frac{p}{(q-1)(1-p)}}\le 1$, the event
    $\Delta(x+Z,y)\le \wt(Z)$ implies that
    $\prod_{i:x_i\ne y_i}\alpha_i\ge 1$. Furthermore, it 
    holds $\E\alpha_i=Z(\qSC_p)$. Hence,
    \begin{align*}
    \Pr[\Delta(x+Z,y)\le\wt(Z)]
    &\le \Pr\left[\prod_{i:x_i\ne y_i}\alpha_i\ge 1\right]
    \le \E\left[\prod_{i:x_i\ne y_i}\alpha_i\right]
    =Z(\qSC_p)^w\;.
    \end{align*}
    Applying the union bound, it follows
    \begin{align*}
        A_2&\le
        \sum_{w=1}^{w_0} A_w(x)
        Z(\qSC_p)^w\;.
    \end{align*}

It remains to estimate $A_3$. Let $F_t(z)$
be the number of codewords $y\in\cC$ with $\Delta(y,x)>w_0$
and $\Delta(x+z,y)\le t$.
Using $p\le 1-1/q$,
\begin{align*}
    A_3&=\sum_{z:t'<\wt(z)<t}\mathds{1}[F_{t}(z) \ge 1]
    (1-p)^{n-\wt(z)}\left(\frac{p}{q-1}\right)^{\wt(z)}
    \\&\le
    \left(\frac{(1-p)(q-1)}{p}\right)^{n^{\theta}}
    q^{-nH_q(p)}
    \sum_{z:t'<\wt(z)<t}\mathds{1}[F_{t}(z) \ge 1]\;.
\end{align*}
Let $S\coloneqq \sum_{z:t'<\wt(z)<t}\mathds{1}[F_t(z)\ge 1]$.
Comparing with~\eqref{eq:23}, we will be done if we show
$S\le\sum_{w>w_0}A_w(x)\mu_q(n,t,w)$.
Indeed,
\begin{align*}
    S&\le\sum_{z\in B(0,t)}\mathds{1}[F_t(z)\ge 1]
    \le\sum_{z\in B(0,t)}    
    F_t(z)
    \\&=
    \sum_{z\in B(0,t)}\sum_{y\in\cC:\Delta(x,y)>w_0}\mathds{1}[
    y\in B(x+z,t)]\\
    &=\sum_{y\in\cC:\Delta(x,y)>w_0}
    \left|B(0,t)\cap B(y-x,t)\right|
    =\sum_{w=w_0+1}^{n}A_w(x)\mu_q(n,t,w)\;.
    \pushQED{\qed}\qedhere\popQED
\end{align*}

\subsection{Proofs of \Cref{lem:sphere_bound}
and \Cref{lem: Gaussian_ball_intersection}}
\label{app:sphere_bound_proof}
As it requires only a small additional effort,
we extend \Cref{lem:sphere_bound}
to a slightly more general statement about list decoding.
Below, $P_{B,L}$ denotes the MAP block error
probability for list decoding for list of size $L$.

\begin{lemma}
Let $\mathcal{C}\subseteq\mathbb{F}_2^n$ be a linear code with
weight distribution
$(A_0,A_1,\ldots,A_n)$,
$L\ge 1$ an integer,
$0\le s\le 4$
and $\sigma^2>0$. Then,
\begin{align*}
P_{B,L}(\mathcal{C},\BAWGN_{\sigma^2})
&\le
\frac{1}{L}
\frac{\exp(-n/2+sn/2)}
{\left(\sqrt{2\pi}\sigma\right)^n}
\sum_{1\le w\le n}A_w
I_n\left((1+s)\sigma\sqrt{n},2\sqrt{w}\right)
+2\exp\left(-\frac{s^2n}{16}\right)\;.
\end{align*}
\end{lemma}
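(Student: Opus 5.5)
We follow the standard sphere-bound argument of~\cite{HP94,sason2006performance}, adapted to list decoding. By linearity and the symmetry of the $\BAWGN$ channel we may assume the all-zero codeword was sent, so $Y=\mathbf{1}+Z$ with $Z\sim\mathcal{N}(0,\sigma^2 I_n)$. The MAP list decoder of size $L$ outputs the $L$ codewords whose embeddings $\overline{c}$ are closest to $Y$ in Euclidean distance. It fails only if at least $L$ nonzero codewords $c$ satisfy $\|Y-\overline{c}\|\le\|Z\|$.

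Let $r\coloneqq(1+s)\sigma\sqrt{n}$. We split the failure probability according to whether $\|Z\|\le r$.

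For the event $\|Z\|>r$, we have $\|Z\|^2>(1+s)^2\sigma^2n\ge(1+s)\sigma^2 n$. A standard chi-square (Laurent--Massart type) tail bound gives $\Pr[\|Z\|^2\ge(1+s)\sigma^2n]\le\exp(-s^2n/16)$ for $0\le s\le 4$. The factor $2$ in the statement leaves room to spare here.

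For the event $\|Z\|\le r$, a nonzero codeword $c$ is ``bad'' if $\|Z+\mathbf{1}-\overline{c}\|\le\|Z\|\le r$. Every bad $c$ then satisfies $Z\in B(0,r)\cap B(\overline{c}-\mathbf{1},r)$, and $\|\overline{c}-\mathbf{1}\|=2\sqrt{\wt(c)}$. By Markov's inequality,
\begin{align*}
\Pr[\text{at least $L$ bad } c,\ \|Z\|\le r]\le\frac1L\sum_{c\ne0}\Pr\left[Z\in B(0,r)\cap B(\overline{c}-\mathbf{1},r)\right].
\end{align*}
Each summand is the integral of the density $f_Z(x)=(\sqrt{2\pi}\sigma)^{-n}\exp(-\|x\|^2/(2\sigma^2))$ over the lens. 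We bound the density crudely by its supremum on the lens, which gives $I_n(r,2\sqrt w)$ times that supremum. The supremum as written would be $(\sqrt{2\pi}\sigma)^{-n}$, which is too weak.

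The exponential factor $\exp(-n/2+sn/2)$ must therefore come from the lower shell. We intersect additionally with the event $\|Z\|^2\ge(1-s)\sigma^2 n$. The density there is at most $(\sqrt{2\pi}\sigma)^{-n}\exp(-(1-s)n/2)$. The complementary event $\|Z\|^2<(1-s)\sigma^2n$ has probability at most $\exp(-s^2n/4)\le\exp(-s^2 n/16)$ by the lower chi-square tail; for $s\ge1$ this event is empty. Together with the upper tail, this accounts for the $2\exp(-s^2n/16)$ term.

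Summing over $c$ grouped by weight yields exactly the stated bound with the factor $1/L$. Setting $L=1$ recovers \Cref{lem:sphere_bound}, since for $L=1$ the MAP decoder coincides with minimum Euclidean distance decoding.

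The main obstacle is obtaining the chi-square tails with the exact constant $1/16$ uniformly for $s\in[0,4]$. We would use $\Pr[\|Z\|^2/\sigma^2-n\ge 2\sqrt{nx}+2x]\le e^{-x}$ and check that $x=s^2n/16$ gives $2\sqrt{nx}+2x\le sn$ for $s\le4$; the lower tail is symmetric.

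The intersection volume estimate, \Cref{lem: Gaussian_ball_intersection}, is separate: the lens lies in a cylinder of radius $\sqrt{r^2-D^2/4}$ and contains a comparable region. Computing both bounds via one-dimensional integrals of ball cross-sections, with Stirling's formula applied to $V_{n-1}$, gives the stated bounds up to polynomial factors.
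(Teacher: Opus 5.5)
Your proposal is correct and follows essentially the same route as the paper. You restrict to the shell $(1-s)\sigma^2 n\le\|Z\|^2$, $\|Z\|\le(1+s)\sigma\sqrt{n}$, whose complement has probability at most $2\exp(-s^2n/16)$ by Laurent--Massart. You then apply Markov's inequality to the number of nonzero codewords within distance $(1+s)\sigma\sqrt{n}$ of $Y$, and bound the density on the shell by $(\sqrt{2\pi}\sigma)^{-n}\exp(-(1-s)n/2)$ times the lens volume $I_n((1+s)\sigma\sqrt{n},2\sqrt{w})$. The only differences are cosmetic: you cut the upper tail at $\|Z\|\le(1+s)\sigma\sqrt{n}$ instead of $\|Z\|^2\le(1+s)\sigma^2 n$, and your last paragraph belongs to the separate volume lemma.
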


\begin{proof}
Let $\mathcal{R}(s)$ denote the event
$(1-s)\sigma^2 n\le \|Z\|^2\le (1+s)\sigma^2n$,
and note that this event implies
$(1-s)\sigma\sqrt{n}\le\|Z\|\le(1+s)\sigma\sqrt{n}$.
Note that the distribution of 
$\frac{1}{\sigma^2}\|Z\|^2$ is chi-squared with
$n$ degrees of freedom. By standard tail bounds, the norm $\|Z\|$ must be close to $\sigma\sqrt{n}$
with high probability. For example, applying~\cite[Lemma~1]{laurent2000adaptive}, it follows that, for $0\le s\le 4$, it holds
$\Pr[\mathcal{R}(s)^c]=\Pr\left[\left|\frac{1}{\sigma^2}\|Z\|^2-n\right|>
s n\right]\le 2\exp\left(-\frac{s^2n}{16}\right)$.

Let us enumerate the code $\mathcal{C}=\{0^n,c_1,\ldots,c_M\}$.
Since $\mathcal{C}$ is linear, let us
assume w.l.o.g.~that the embedding of the zero codeword
$\overline{0^n}=1^n$ is selected for transmission.
Consider a decoder which maps 
the received noisy message $Y=1^n+Z$ to the list
of $L$ codewords with embeddings which are closest in the Euclidean distance.
The true codeword will be in that list unless
there exist at least $L$ other codewords with
embeddings closer to $Y$ than $1^n$. If we condition on the event
$\mathcal{R}(s)$, this can happen only if there exist at
least $L$ embeddings of nonzero codewords
such that their distance
to $Y$ is at most $(1+s)\sigma\sqrt{n}$. Accordingly,
let $W=\left|\left\{i>0: \|\overline{c_i}-Y\|\le(1+s)\sigma\sqrt{n}\right\}\right|$. From those considerations, we have
\begin{align*}
P_{B,L}(\mathcal{C},\BAWGN_{\sigma^2})
\le \Pr[W\ge L\text{ and }\mathcal{R}(s)] + \Pr[\mathcal{R}(s)^c]\;,
\end{align*}
and we already established
$\Pr[\mathcal{R}(s)^c]\le
2\exp(-s^2n/16)$. It remains
to analyze the first term:
\begin{align*}
\Pr[W\ge L\text{ and }\mathcal{R}(s)]
&\le \frac{1}{L}\E\left[W \cdot \mathds{1}(\mathcal{R}(s))\right]
=\frac{1}{L}\sum_{i=1}^M \Pr\left[Z\in B\left(\overline{c_i}-1^n,(1+s)\sigma\sqrt{n}\right)\cap \mathcal{R}(s)\right]\\
&=\frac{1}{L}\sum_{i=1}^M\int_{\mathbb{R}^n}
\mathds{1}\left(z\in B\left(\overline{c_i}-1^n,(1+s)\sigma\sqrt{n}\right)\right)\cdot \mathds{1}\left(z\in\mathcal{R}(s)\right)
f_Z(z)\,\mathrm{d}z\\
&\le\frac{1}{L}
\frac{\exp\left(-n/2+sn/2\right)}{\left(\sqrt{2\pi}\sigma\right)^{n}}
\sum_{i=1}^M\int_{\mathbb{R}^n}
\mathds{1}\left(z\in B\left(\overline{c_i}-1^n,(1+s)\sigma\sqrt{n}\right)\right)\cdot 
\mathds{1}\left(z\in\mathcal{R}(s)\right)\,\mathrm{d}z
\\&\le\frac{1}{L} \frac{\exp(-n/2+sn/2)}{\left(\sqrt{2\pi}\sigma\right)^{n}}
\sum_{1\le w\le n}A_w I_n\left(
(1+s)\sigma\sqrt{n},2\sqrt{w}
\right)\;.\qedhere
\end{align*}
\end{proof}

\begin{proof}[Proof of \Cref{lem: Gaussian_ball_intersection}.]
If $\gamma\ge \zeta$, the statement
is clear. Otherwise,
for notational convenience let us
consider the $(n+1)$-dimensional volume $I_{n+1}(R,D)$
for $R\coloneqq\zeta\sqrt{n+1}$
and $D\coloneqq 2\gamma\sqrt{n+1}$.

Consider the line connecting the centers of the two balls. We compute the volume as the integral
of the $n$-dimensional volume along this line. Starting
from the center of the first ball, the balls intersect at the range of
distances $D-R\le r\le R$. The intersection consists of two symmetric
spherical caps. Let us consider the cap for $D/2\le r\le R$. For every
$r$, the intersection is the $n$ dimensional ball of radius $\sqrt{R^2-r^2}$. Accordingly,
\begin{align*}
I_{n+1}(R,D)&=
\frac{2\pi^{n/2}}{\Gamma\left(\frac{n}{2}+1\right)}\int_{D/2}^{R}
\left(\sqrt{R^2-r^2}\right)^{n}\,\mathrm{d}r
=\frac{2\pi^{n/2}R^n}{\Gamma\left(\frac{n}{2}+1\right)}\int_{D/2}^{R}
\left(\sqrt{1-(r/R)^2}\right)^{n}\,\mathrm{d}r\\
&=
\frac{2\pi^{n/2}R^{n+1}}{\Gamma\left(\frac{n}{2}+1\right)}\int_{D/(2R)}^{1}
\left(\sqrt{1-\beta^2}\right)^{n}\,\mathrm{d}\beta
=\frac{2\pi^{n/2}(n+1)^{(n+1)/2}}{\Gamma\left(\frac{n}{2}+1\right)}\zeta^{n+1}
\int_{\gamma/\zeta}^{1}
\left(\sqrt{1-\beta^2}\right)^{n}\,\mathrm{d}\beta\\
&=\frac{2\pi^{n/2}(n+1)^{(n+1)/2}}{\Gamma\left(\frac{n}{2}+1\right)}\int_{\gamma}^{\zeta}
\left(\sqrt{\zeta^2-\beta^2}\right)^{n}\,\mathrm{d}\beta\;.
\end{align*}
To upper bound the integral, we simply note that
$\int_{\gamma}^\zeta(\zeta^2-\beta^2)^{n/2}\,\mathrm{d}\beta
\le(\zeta^2-\gamma^2)^{n/2}(\zeta-\gamma)=\sqrt{\frac{\zeta-\gamma}{\zeta+\gamma}}(\zeta^2-\gamma^2)^{(n+1)/2}$. For the lower bound, note that
for any $0\le\eps\le 1/2$,
\begin{align*}
\int_{\gamma}^{\zeta}
(\zeta^2-\beta^2)^{n/2}\,\mathrm{d}\beta
\ge \int_{\gamma}^{\gamma+\eps(\zeta-\gamma)}
(\zeta-\gamma)^{n/2}(1-\eps)^{n/2}(\zeta+\gamma)^{n/2}\,\mathrm{d}\beta
\ge \eps\exp(-\eps n)(\zeta-\gamma)(\zeta^2-\gamma^2)^{n/2}\;.
\end{align*}
Taking $\eps=1/(2n^2)$, we obtain
$\int_{\gamma}^\zeta(\zeta^2-\beta^2)^{n/2}\,\mathrm{d}\beta
\ge \frac{1}{4n^2}(\zeta-\gamma)(\zeta^2-\gamma^2)^{n/2}$.

For the factor before the integral, we use the bound
$\sqrt{\frac{2\pi}{n}}\left(\frac{n}{e}\right)^n\le
\Gamma(n)\le 2\sqrt{\frac{2\pi}{n}}\left(\frac{n}{e}\right)^n$,
which is valid for all real $n\ge 1$. Accordingly,
with a little calculation it follows
\begin{align*}
\frac{1}{\sqrt{2}e^{3/2}\pi}
(2e\pi)^{(n+1)/2}
\le
\frac{2\pi^{n/2}(n+1)^{(n+1)/2}}{\Gamma\left(\frac{n}{2}+1\right)}
\le
\frac{\sqrt{2e}}{\pi}
(2e\pi)^{(n+1)/2}\;.
\end{align*}
Putting the bounds together, and substituting
$n+1\mapsto n$, the result follows.
\end{proof}

\subsection{Other deferred proofs}

\begin{claim}
    \label{cl:pbit-implies-entropy}
    Let $\{\cC_n\}_n$ be a family of linear codes over $\bF_q$, and let $X$ and $Y$ be respectively the input and output of transmitting 
    a uniformly chosen codeword from
    $\cC_n$ over $\qEC_\lambda$. If $P_{b}(\cC_n,\qEC_\lambda)=o(1)$, then $H(X|Y)=o(n)$.
\end{claim}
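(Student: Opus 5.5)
We plan to use the chain rule for entropy, together with the structure of MAP decoding on the erasure channel for linear codes. Fix $n$ and let $\cE\subseteq[n]$ denote the random erasure pattern, which is a function of $Y$. For a linear code and a fixed $\cE$, the posterior of $X$ given $Y$ is uniform on the coset $\{c\in\cC_n: c_{\bar\cE}=y_{\bar\cE}\}$. This coset is a translate of the subspace $K(\cE)\coloneqq\{c\in\cC_n: c_{\bar\cE}=0\}$.

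The key observation concerns each coordinate $i$. Either $c_i=0$ for all $c\in K(\cE)$, in which case $X_i$ is determined by $Y$. Or the coordinate map $c\mapsto c_i$ is a nonzero linear functional on $K(\cE)$, in which case $X_i$ is uniform on $\mathbb{F}_q$ given $Y$. Hence the symbol-MAP decoder errs on coordinate $i$ with probability exactly $\Pr[i\text{ is undetermined}](1-1/q)$, with ties broken arbitrarily, and this does not depend on the transmitted codeword. Writing $D_i$ for the event that coordinate $i$ is undetermined given $Y$, we obtain
\[
\sum_{i=1}^n\Pr[D_i]=\frac{q}{q-1}\,nP_b(\cC_n,\qEC_\lambda)=o(n).
\]

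Next, we apply subadditivity of conditional entropy:
\[
H(X|Y)\le\sum_{i=1}^n H(X_i|Y).
\]
For each $i$, conditionally on $Y$ we have $H(X_i|Y=y)=0$ if $i$ is determined and $H(X_i|Y=y)=\log_2 q$ otherwise. Therefore $H(X_i|Y)=\Pr[D_i]\log_2 q$. Summing over $i$ gives
\[
H(X|Y)\le \log_2 q\sum_i\Pr[D_i]=\frac{q\log_2 q}{q-1}\,nP_b=o(n).
\]
Alternatively, one can avoid the exact uniformity statement and use Fano's inequality coordinatewise: $H(X_i|Y)\le h(P_{b,i})+P_{b,i}\log_2(q-1)$. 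Concavity of $h$ then gives $\sum_i h(P_{b,i})\le n\,h(P_b)=o(n)$.

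The only step requiring care is the claim that the posterior of $X_i$ is either a point mass or uniform. This follows because the image of the subspace $K(\cE)$ under a linear functional is either $\{0\}$ or all of $\mathbb{F}_q$, and each fiber of the surjection has equal size. Since $P_b$ is defined as a maximum over codewords, it upper bounds the average over a uniform codeword, so the bound applies to uniform $X$ as required. The Fano route sidesteps even this, so I expect no genuine obstacle.
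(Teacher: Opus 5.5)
Your proposal is correct and is essentially the paper's own argument. On the erasure channel each $X_i$ has a posterior that is either a point mass or uniform, so $H(X_i|Y)=\frac{q\log_2 q}{q-1}$ times the codeword-averaged symbol error probability, and subadditivity of conditional entropy finishes the proof. One small correction: with deterministic tie-breaking the per-codeword symbol error can depend on the transmitted $x$, so your displayed equality $\sum_i\Pr[D_i]=\frac{q}{q-1}\,nP_b$ should be the inequality $\le$, which is exactly what your later remark about the maximum dominating the average provides.
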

\begin{proof}
    On the erasure channel, it always holds $H(X_i|Y=y)\in\{0,\log q\}$. 
    If $H(X_i|Y=y)=0$, then $\Pr[\hat{x_i}^{\MAP}(Y)\ne X_i\;\vert\;Y=y]=0$.
    If $H(X_i|Y=y)=\log q$, then
    $\Pr[\hat{x_i}^{\MAP}(Y)\ne X_i\;\vert\;Y=y]=\frac{q-1}{q}$.
    Therefore, 
    $H(X_i|Y)=(\log q)\frac{q}{q-1} P_{b,i}(\cC_n, \qEC_\lambda)$.
    Thus,
\begin{align*}
    H(X|Y)&\le\sum_{i=1}^{n}H(X_i|Y)= \log(q)\cdot\frac{q}{q-1}\sum_{i=1}^{n}P_{b,i}(\cC_n,\qEC_\lambda)\le O\left(nP_{b}(\cC_n,\qEC_\lambda)\right)=o(n)\;.\qedhere
\end{align*}
\end{proof}

\begin{claim}
\label{prop:negative_G}
    For $0\le\lambda\le R< 1$ and $0\le\delta\le 1$, it holds
    $G^\perp_{\lambda,R}(\delta)\ge 0$.
\end{claim}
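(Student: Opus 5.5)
The plan is to combine the concavity of $G^\perp_{\lambda,R}$ from \Cref{lem:F_dual_convex} with a direct evaluation of the function at the two endpoints of $[0,1]$. A concave function on a closed interval is bounded below by the smaller of its two endpoint values. So it suffices to show $G^\perp_{\lambda,R}(0)\ge 0$ and $G^\perp_{\lambda,R}(1)\ge 0$; then $G^\perp_{\lambda,R}(\delta)\ge 0$ follows for every $0\le\delta\le 1$.

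First I would check which branch of the definition applies at the endpoints. Write $m\coloneqq\min(\gamma,1-\gamma)$ and $m_0\coloneqq 1-2^{\lambda-1}$, as in the proof of \Cref{lem:F_dual_convex}. The assumption $\lambda\le R<1$ gives $2^{\lambda-1}<1$, hence $m_0>0$. At $\gamma=0$ and at $\gamma=1$ we have $m=0<m_0$, so the first branch is in force. It gives
\begin{align*}
G^\perp_{\lambda,R}(0)=G^\perp_{\lambda,R}(1)=R-\lambda-0\cdot\log(2^{1-\lambda}-1)=R-\lambda\;.
\end{align*}
This is nonnegative by the assumption $\lambda\le R$. (The definition is symmetric under $\gamma\mapsto1-\gamma$, so the second endpoint is automatic.)

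Next I would invoke \Cref{lem:F_dual_convex}, which states that $\gamma\mapsto G^\perp_{\lambda,R}(\gamma)$ is concave on $[0,1]$. For any $\delta\in[0,1]$ we can write $\delta=(1-\delta)\cdot 0+\delta\cdot 1$, and concavity then yields
\begin{align*}
G^\perp_{\lambda,R}(\delta)\ge(1-\delta)G^\perp_{\lambda,R}(0)+\delta\, G^\perp_{\lambda,R}(1)=R-\lambda\ge 0\;.
\end{align*}

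The only delicate point is that \Cref{lem:F_dual_convex} relies on continuity of $G^\perp_{\lambda,R}$ at the breakpoints $m=m_0$. The displayed identity there appears to contain a sign typo: it reads $h(m_0)-1-R$, whereas the second branch $h(\gamma)-(1-R)$ evaluates at $m_0$ to $h(m_0)-1+R$. I would double-check the continuity directly. Substitute $2^{1-\lambda}-1=\frac{1-m_0}{m_0}$ and $\lambda=1+\log(1-m_0)$. Then the first branch at $m_0$ becomes $R-1-\log(1-m_0)-m_0\log\frac{1-m_0}{m_0}$. This simplifies to $R-1+h(m_0)$, which matches the second branch, so the concavity lemma applies as stated and the argument above is complete.
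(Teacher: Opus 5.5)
Your argument is correct, but it takes a different route from the paper. The paper does not invoke \Cref{lem:F_dual_convex}. It argues directly by cases on $m=\min(\delta,1-\delta)$:
\begin{itemize}
\item On the linear branch $m<m_0$, it uses only that $\log(2^{1-\lambda}-1)\le 0$ (because $\lambda\ge 0$), which gives $G^\perp_{\lambda,R}(\delta)\ge R-\lambda\ge 0$.
\item On the entropy branch $m\ge m_0$, it uses that $h$ is increasing on $[0,1/2]$, together with the value identity $h(m_0)-(1-R)=R-\lambda-m_0\log(2^{1-\lambda}-1)\ge 0$. Hence $G^\perp_{\lambda,R}(\delta)\ge G^\perp_{\lambda,R}(m_0)\ge 0$.
\end{itemize}
You replace this case analysis by the structural fact of concavity. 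After that you only need the trivial endpoint values $G^\perp_{\lambda,R}(0)=G^\perp_{\lambda,R}(1)=R-\lambda$. Your route is shorter, but it depends on the full concavity lemma, including the matching of one-sided derivatives at $m_0$ and $1-m_0$. The paper's argument needs only the value identity at $m_0$ and the monotonicity of $h$. Both arguments give the same bound $G^\perp_{\lambda,R}\ge R-\lambda$. There is no circularity, since \Cref{lem:F_dual_convex} does not use this claim.

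You are right that the identity in the proof of \Cref{lem:F_dual_convex} should read $h(m_0)-(1-R)$. Your own check of it, however, contains a slip. From $2^{\lambda-1}=1-m_0$ you get $2^{1-\lambda}-1=\frac{m_0}{1-m_0}$, not $\frac{1-m_0}{m_0}$. With your substitution, the expression you wrote equals $R-1+h(m_0)+2m_0\log\frac{m_0}{1-m_0}$, which is not $R-1+h(m_0)$ unless $m_0=1/2$.

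With the correct substitution, $R-1-\log(1-m_0)-m_0\log\frac{m_0}{1-m_0}=R-1+h(m_0)$, so your conclusion stands. This value is also consistent with the derivative matching $-\log(2^{1-\lambda}-1)=\log\frac{1-m_0}{m_0}=h'(m_0)$ that the lemma uses.
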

\begin{proof}
    Let $m_0\coloneqq 1-2^{\lambda-1}$ and $m\coloneqq \min(\delta,1-\delta)$.
    If $m<m_0$, then $G^\perp_{\lambda,R}(\delta)=R-\lambda-m\log(2^{1-\lambda}-1)\ge R-\lambda\ge 0$.
    On the other hand, observe that
    \begin{align*}
    G^\perp_{\lambda,R}(m_0)=h(m_0)-(1-R)=R-\lambda-m_0\log(2^{1-\lambda}-1)\ge 0\;.
    \end{align*}
    Therefore, if $m\ge m_0$, then 
    $G^\perp_{\lambda,R}(\delta)=h(m)-(1-R)\ge h(m_0)-(1-R)\ge 0$.
\end{proof}

\bibliography{references_arxiv_v1}
\bibliographystyle{alpha}
\end{document}

%% file: setting_arxiv_v1.tex
\usepackage[OT4]{fontenc}
\usepackage[margin=1.in]{geometry}
\usepackage{amssymb,amsfonts,amsmath,amsthm,amscd,dsfont,mathrsfs,pifont}
\usepackage{blkarray}
\usepackage{graphicx,float,psfrag,epsfig,color}
\usepackage{qtree}
\usepackage{comment}
\usepackage[dvipsnames]{xcolor}
\usepackage{authblk,textcomp,xcolor}
\usepackage{caption} 
\usepackage{subcaption}
\usepackage{mathtools}
\usepackage{enumerate}

\newcommand{\Htilde}{\tilde{H}}

\newcommand\independent{\protect\mathpalette{\protect\independent}{\perp}}
\def\independent#1#2{\mathrel{\rlap{$#1#2$}\mkern2mu{#1#2}}}

\definecolor{DSgray}{cmyk}{0,0,0,0.7}
\definecolor{DSred}{cmyk}{0,0.7,0,0.7}

\DeclareMathOperator*{\E}{\mathbb{E}}

\newcommand{\eps}{\varepsilon}

\DeclareMathOperator*{\argmax}{argmax}

\DeclareMathOperator{\qSC}{qSC}
\DeclareMathOperator{\qEC}{qEC}
\DeclareMathOperator{\BEC}{BEC}
\DeclareMathOperator{\wt}{wt}
\DeclareMathOperator{\BSC}{BSC}

\DeclareMathOperator{\MAP}{MAP}
\DeclareMathOperator{\BAWGN}{BAWGN}

\DeclareMathOperator{\poly}{poly}

\newcommand{\cE}{\mathcal{E}}

\newcommand{\cC}{\mathcal{C}}

\newcommand{\bF}{\mathbb{F}}
\newcommand{\psym}{p_{\mathrm{sym}}}
\newcommand{\pLsym}{p_{\mathrm{Lsym}}}

\usepackage{dsfont}

\usepackage[pdftex,pagebackref=true,colorlinks]{hyperref}
\hypersetup{linkcolor=[rgb]{.7,0,0}}
\hypersetup{citecolor=[rgb]{0,.7,0}}
\hypersetup{urlcolor=[rgb]{.7,0,.7}}
\hypersetup{pdftitle={Error Probalility Bounds on BSC-BEC},%
            colorlinks, linktocpage=true, pdfstartpage=1, pdfstartview=FitV,%
    breaklinks=true, pdfpagemode=UseNone, pageanchor=true, pdfpagemode=UseOutlines,%
    plainpages=false, bookmarksnumbered, bookmarksopen=true, bookmarksopenlevel=1,%
    hypertexnames=true, pdfhighlight=/O,%
    urlcolor=orange, citecolor=RoyalBlue}

\usepackage[capitalize,noabbrev]{cleveref}

\theoremstyle{plain}
\newtheorem{definition}{Definition}
\theoremstyle{plain}
\newtheorem{theorem}[definition]{Theorem}
\theoremstyle{plain}
\theoremstyle{plain}
\theoremstyle{plain}
\newtheorem{proposition}[definition]{Proposition}
\theoremstyle{plain}
\newtheorem{claim}[definition]{Claim}
\theoremstyle{plain}
\newtheorem{lemma}[definition]{Lemma}
\theoremstyle{plain}
\newtheorem{corollary}[definition]{Corollary}
\theoremstyle{plain}
\theoremstyle{plain}
\theoremstyle{remark}
\theoremstyle{remark}
\theoremstyle{remark}
\theoremstyle{plain}

%% file: main_arxiv_v1.bbl
\newcommand{\etalchar}[1]{$^{#1}$}
\begin{thebibliography}{KKM{\etalchar{+}}16}

\bibitem[AHO25]{abawonse2025generalized}
Olakunle Abawonse, Jan Hązła, and Ryan O'Donnell.
\newblock Generalized {S}amorodnitsky noisy function inequalities, with applications to error-correcting codes.
\newblock arXiv:2508.06940, 2025.

\bibitem[AS23a]{abbe2023proof}
Emmanuel Abbe and Colin Sandon.
\newblock A proof that {R}eed--{M}uller codes achieve {S}hannon capacity on symmetric channels.
\newblock In {\em Symposium on Foundations of Computer Science (FOCS)}, pages 177--193. IEEE, 2023.

\bibitem[AS23b]{abbe2023reed}
Emmanuel Abbe and Colin Sandon.
\newblock Reed--{M}uller codes have vanishing bit-error probability below capacity: a simple tighter proof via camellia boosting.
\newblock arXiv:2312.04329, 2023.

\bibitem[ASSY23]{ASSY23}
Emmanuel Abbe, Ori Sberlo, Amir Shpilka, and Min Ye.
\newblock Reed--{M}uller codes.
\newblock {\em Foundations and Trends in Communications and Information Theory}, 20(1--2):1--156, 01 2023.

\bibitem[BV04]{boyd2004convex}
Stephen Boyd and Lieven Vandenberghe.
\newblock {\em Convex Optimization}.
\newblock Cambridge University Press, 2004.

\bibitem[CR21]{couvreur2021algebraic}
Alain Couvreur and Hugues Randriambololona.
\newblock Algebraic geometry codes and some applications.
\newblock In {\em Concise Encyclopedia of Coding Theory}, pages 307--362. Chapman and Hall/CRC, 2021.

\bibitem[DMZ23]{dong2023number}
Dingding Dong, Nitya Mani, and Yufei Zhao.
\newblock On the number of error correcting codes.
\newblock {\em Combinatorics, Probability and Computing}, 32(5):819--832, 2023.

\bibitem[Eli91]{Eli91}
Peter Elias.
\newblock Error-correcting codes for list decoding.
\newblock {\em IEEE Transactions on Information Theory}, 37(1):5--12, 1991.

\bibitem[GHK10]{GHK10}
Venkatesan Guruswami, Johan Håstad, and Swastik Kopparty.
\newblock On the list-decodability of random linear codes.
\newblock In {\em Symposium on Theory of Computing (STOC)}, pages 409--416, 2010.

\bibitem[GRS25]{guruswami2012essential}
Venkatesan Guruswami, Atri Rudra, and Madhu Sudan.
\newblock Essential coding theory.
\newblock {\em Draft available at http://www. cse.buffalo.edu/faculty/atri/courses/coding-theory/book/}, 2025.

\bibitem[HP94]{HP94}
Hanan Herzberg and Gregory Poltyrev.
\newblock Techniques of bounding the probability of decoding error for block coded modulation structures.
\newblock {\em IEEE Transactions on Information Theory}, 40(3):903--911, 1994.

\bibitem[HSS21]{hkazla2021codes}
Jan H{\k{a}}z{\l}a, Alex Samorodnitsky, and Ori Sberlo.
\newblock On codes decoding a constant fraction of errors on the {BSC}.
\newblock In {\em Symposium on Theory of Computing (STOC)}, pages 1479--1488, 2021.

\bibitem[Iha81]{ihara1981some}
Yasutaka Ihara.
\newblock Some remarks on the number of rational points of algebraic curves over finite fields.
\newblock {\em Journal of the Faculty of Science, University of Tokyo}, 28(3):721--724, 1981.

\bibitem[Joh62]{Joh62}
Selmer Johnson.
\newblock A new upper bound for error-correcting codes.
\newblock {\em IRE Transactions on Information Theory}, 8(3):203--207, 1962.

\bibitem[KKM{\etalchar{+}}16]{kudekar2016reed}
Shrinivas Kudekar, Santhosh Kumar, Marco Mondelli, Henry~D Pfister, Eren {\c{S}}a{\c{s}}o{\u{g}}lu, and R{\"u}diger Urbanke.
\newblock Reed-{M}uller codes achieve capacity on erasure channels.
\newblock In {\em Symposium on Theory of Computing (STOC)}, pages 658--669, 2016.

\bibitem[LM00]{laurent2000adaptive}
Beatrice Laurent and Pascal Massart.
\newblock Adaptive estimation of a quadratic functional by model selection.
\newblock {\em Annals of Statistics}, pages 1302--1338, 2000.

\bibitem[PB23]{pathegama2023smoothing}
Madhura Pathegama and Alexander Barg.
\newblock Smoothing of binary codes, uniform distributions, and applications.
\newblock {\em Entropy}, 25(11):1515, 2023.

\bibitem[Plo60]{Plo60}
Morris Plotkin.
\newblock Binary codes with specified minimum distance.
\newblock {\em IRE Transactions on Information Theory}, 6(4):445--450, 1960.

\bibitem[Pol94]{poltyrev2002bounds}
Gregory Poltyrev.
\newblock Bounds on the decoding error probability of binary linear codes via their spectra.
\newblock {\em IEEE Transactions on Information Theory}, 40(4):1284--1292, 1994.

\bibitem[PSW25]{pernice2025list}
Francisco Pernice, Oscar Sprumont, and Mary Wootters.
\newblock List-decoding capacity implies capacity on the q-ary symmetric channel.
\newblock In {\em Symposium on Theory of Computing (STOC)}, pages 855--866, 2025.

\bibitem[PSZ25]{PSZ25}
Henry~D Pfister, Oscar Sprumont, and Gilles Zémor.
\newblock From bit to block: {D}ecoding on erasure channels.
\newblock In {\em International Symposium on Information Theory (ISIT)}, pages 1--6, 2025.

\bibitem[RP23]{reeves2023reed}
Galen Reeves and Henry~D Pfister.
\newblock Reed--{M}uller codes on {BMS} channels achieve vanishing bit-error probability for all rates below capacity.
\newblock {\em IEEE Transactions on Information Theory}, 70(2):920--949, 2023.

\bibitem[RU08]{RU08}
Tom Richardson and Rüdiger Urbanke.
\newblock {\em Modern Coding Theory}.
\newblock Cambridge University Press, 2008.

\bibitem[RU10]{rudra2010two}
Atri Rudra and Steve Uurtamo.
\newblock Two theorems on list decoding.
\newblock In {\em Approximation, Randomization, and Combinatorial Optimization (APPROX/RANDOM)}, pages 696--709, 2010.

\bibitem[Sam19]{samorodnitsky2019upper}
Alex Samorodnitsky.
\newblock An upper bound on $\ell_q $ norms of noisy functions.
\newblock {\em IEEE Transactions on Information Theory}, 66(2):742--748, 2019.

\bibitem[Sam20]{samorodnitsky2020improved}
Alex Samorodnitsky.
\newblock An improved bound on $\ell_q $ norms of noisy functions.
\newblock arXiv:2010.02721, 2020.

\bibitem[SS06]{sason2006performance}
Igal Sason and Shlomo Shamai.
\newblock Performance analysis of linear codes under maximum-likelihood decoding: {A} tutorial.
\newblock {\em Foundations and Trends in Communications and Information Theory}, 3(1--2):1--222, 2006.

\bibitem[TVZ82]{tsfasman1982modular}
MA~Tsfasman, SG~Vl{\u{a}}duţ, and Th~Zink.
\newblock Modular curves, {S}himura curves, and {G}oppa codes, better than {V}arshamov--{G}ilbert bound.
\newblock {\em Mathematische Nachrichten}, 109(1):21--28, 1982.

\bibitem[TZ00]{tillich2000discrete}
Jean-Pierre Tillich and Gilles Z{\'e}mor.
\newblock Discrete isoperimetric inequalities and the probability of a decoding error.
\newblock {\em Combinatorics, Probability and Computing}, 9(5):465--479, 2000.

\end{thebibliography}
